\definecolor{Darkblue}{rgb}{0,0,0.4}
\definecolor{Brown}{cmyk}{0,0.81,1.,0.60}
\definecolor{Purple}{cmyk}{0.45,0.86,0,0}
\newcommand{\lref}[2][]{\hyperref[#2]{#1~\ref*{#2}}}
\newenvironment{proof}{{\bf Proof:  }}{\hfill\rule{2mm}{2mm}}
\newenvironment{proofof}[1]{{\bf Proof of #1:  }}{\hfill\rule{2mm}{2mm}}
\newtheorem{definition}{Definition}[section]
\newtheorem{theorem}{Theorem}[section]
\newtheorem{lemma}{Lemma}[section]
\newtheorem{theorem*}{Theorem}
\newtheorem{lemma*}{Lemma}
\newtheorem{corollary*}{Corollary}
\newtheorem{definition*}{Definition}
\newcommand{\ignore}[1]{}
\newcommand{\na}{n(A)}
\newcommand{\ppointa}[2]{{#1} = ( {#2}_1,{#2}_2,\ldots,{#2}_{\na})} 
\newcommand{\cpx}[1]{\mathcal{O}(#1)} 
\newcommand{\blotto}{\mathcal{B}}
\newcommand{\Rna}{\mathbb{R}^{\na}}
\newcommand{\pr}[2]{\text{Pr}(#1 = #2)}
\newcommand{\functiona}{\mathcal{G}_A}
\newcommand{\functionb}{\mathcal{G}_B}
\newcommand{\strategyset}{\mathcal{M}(\mathcal{X})}
\newcommand{\pureset}{\mathcal{X}}
\newcommand{\strategysetb}{\mathcal{M}(\mathcal{Y})}
\newcommand{\puresetb}{\mathcal{Y}}
\newcommand{\sets}{S_A}
\newcommand{\spt}{\mathcal{H}(S_A)}
\newcommand{\vs}{I_A}
\newcommand{\vv}{\hat{\mathcal{X}}}
\newcommand{\poly}{\operatorname{poly}}
\newcommand{\sign}{\operatorname{sign}}
\newcommand{\minpure}{\mbox{\textsc {FindBestPure}} }
\newcommand{\hporacle}{\mbox{\textsc {HyperplaneOracle}} }
\newcommand{\broracle}{\mbox{\textsc {BestRespOracle}} }
\newcommand{\x}{\mathbf {x}}
\newcommand{\xp}{\mathbf {x'}}
\newcommand{\xz}{\mathbf {x''}}
\newcommand{\xhp}{\mathbf{\hat{x}'}}
\newcommand{\xhz}{\mathbf{\hat{x}''}}
\newcommand{\xh}{\mathbf{\hat{x}}}
\newcommand{\y}{\mathbf{y}}
\newcommand{\yh}{\mathbf{\hat{y}}}
\newcommand{\B}{\mathcal{B}}
\newcommand{\G}{\functiona}
\newcommand{\X}{\pureset}
\newcommand{\Y}{\mathcal{Y}}
\newcommand{\Gb}{\functionb}
\newcommand{\n}{n(A)}
\newcommand{\nb}{n(B)}
\newcommand{\Sa}{\sets}
\newcommand{\Sb}{S_B}
\newcommand{\Ia}{\vs}
\newcommand{\Ib}{I_B}
\newcommand{\ut}{u^T}
\newcommand{\um}{u^M}
\newcommand{\E}{\mathbb{E}}
\newcommand{\hx}{\hat{x}}
\newcommand{\argmin}{\operatornamewithlimits{argmin}}
\newcommand{\px}{x}
\newcommand{\py}{y}
\newcommand{\MX}{\strategyset}
\newcommand{\MY}{\strategysetb}
\newcommand{\distance}{bounded distance}
 \gdef\xxxmark{%
   \expandafter\ifx\csname @mpargs\endcsname\relax 
     \expandafter\ifx\csname @captype\endcsname\relax 
       \marginpar{xxx}
     \else
       xxx 
     \fi
   \else
     xxx 
   \fi}
 \gdef\xxx{\@ifnextchar[\xxx@lab\xxx@nolab}
 \long\gdef\xxx@lab[#1]#2{{\bf [\xxxmark #2 ---{\sc #1}]}}
 \long\gdef\xxx@nolab#1{{\bf [\xxxmark #1]}}
\title{
From Duels to Battlefields: Computing Equilibria of Blotto and Other Games\footnote{Supported in part by NSF CAREER award 1053605, NSF grant CCF-1161626, ONR YIP award N000141110662, DARPA GRAPHS/AFOSR grant FA9550-12-1-0423, and a Google faculty research award.}
}
\begin{document}

\author[1]{AmirMahdi Ahmadinejad}
\author[2]{Sina Dehghani}
\author[2]{MohammadTaghi Hajiaghayi}
\author[3]{Brendan Lucier}
\author[2]{Hamid Mahini}
\author[2]{Saeed Seddighin}
\affil[1]{Department of Management Science and Engineering, Stanford University\newline  \url{amirmahdi.ahmadi@gmail.com}}
\affil[2]{Department of Computer Science, Univeristy of Maryland\newline
 \url{dehghani, hajiagha, hmahini, sseddigh@cs.umd.edu}}
\affil[3]{Microsoft Research \hspace{1cm}\url{brlucier@microsoft.com}}

\date{}

\begin{titlepage}
\maketitle
\vspace{-5mm}
\begin{abstract}
We study the problem of computing Nash equilibria of zero-sum games.
Many natural zero-sum games have exponentially many strategies, but highly structured payoffs.  For example, in the well-studied Colonel Blotto game (introduced by Borel in 1921), players must divide a pool of troops among a set of battlefields with the goal of winning (i.e., having more troops in) a majority.  
The Colonel Blotto game is commonly used for analyzing a wide range of applications from the U.S presidential election, to innovative technology competitions, to
advertisement, to sports.
However, because of the size of the strategy space, standard  
methods for computing equilibria of zero-sum games fail to be computationally feasible.
Indeed, despite its importance, only a few solutions for special variants of the problem are known. 

In this paper we show how to compute equilibria of Colonel Blotto games.

Moreover, our approach takes the form of a general reduction: to find a Nash equilibrium of a zero-sum game, it suffices to design a separation oracle for the strategy polytope of any bilinear game that is payoff-equivalent.  
We then apply this technique to obtain the first polytime algorithms for a variety of games.  In addition to Colonel Blotto, we also show how to compute equilibria in an infinite-strategy variant called the General Lotto game; this involves showing how to prune the strategy space to a finite subset before applying our reduction.  We also consider the class of dueling games, first introduced by Immorlica et al. (2011).  We show that our approach provably extends the class of dueling games for which equilibria can be computed: we introduce a new dueling game, the matching duel, on which prior methods fail to be computationally feasible but upon which our reduction can be applied.

\end{abstract}
\thispagestyle{empty}
\end{titlepage}

%






\section{Introduction}


Computing a Nash equilibrium of a given game is a central problem in algorithmic game theory.
It is known that every finite game admits a Nash equilibrium (that is, a profile of strategies from which no player can benefit from a unilateral deviation) \cite{Nash51}.  But 
it is not necessarily obvious how to find an equilibrium.  Indeed, the conclusions to date have been largely negative:
computing a Nash equilibrium of a normal-form game is known to be PPAD-complete \cite{DGP06,GP06}, even for two-player games \cite{CD06}.  
In fact, it is PPAD-complete to find an $\frac{1}{n^{O(1)}}$ approximation to a Nash equilibrium \cite{CDT06}.  These results call into question the predictiveness of Nash equilibrium as a solution concept.  

This motivates the study of classes of games for which equilibria can be computed efficiently.  It has been found that many natural and important classes of games have structure that can be exploited to admit computational results \cite{Dantzig63,GJM11,KS10,LMM03}.  Perhaps the most well-known example is the class of zero-sum two-player games\footnote{Or, equivalently, constant-sum games.}, where player $2$'s payoff is the negation of player $1$'s payoff.  The normal-form representation of a zero-sum game is a matrix $A$, which specifies the game payoffs for player 1. 
This is a very natural class of games, as it models perfect competition between two parties.  Given the payoff matrix for a zero-sum game as input, a Nash equilibrium can be computed in polynomial time, and hence time polynomial in the number of pure strategies available to each player \cite{Dantzig63}.  Yet even for zero-sum games, this algorithmic result is often unsatisfactory.  The issue is that for many games the most natural representation is more succinct than simply listing a payoff matrix, so that the number of strategies is actually exponential in the most natural input size.  In this case the algorithm described above fails to guarantee efficient computation of equilibria, and alternative approaches are required.


\paragraph{The Colonel Blotto Game}
A classical and important example illustrating these issues is the Colonel Blotto game, first introduced by Borel in 1921 
\cite{B21,B53,F53a,F53b,V53}.
In the Colonel Blotto game, two colonels each have a pool of troops and must fight against each other over a set of battlefields. The colonels simultaneously divide their troops between the battlefields.
A colonel wins a battlefield if the number of his troops dominates the number of troops of his opponent. 
The final payoff of each colonel is the (weighted) number of battlefields won.
An equilibrium of the game is a pair of colonels' strategies, which is a (potentially randomized) distribution of troops across battlefields, such that no colonel has incentive to change his strategy.
Although the Colonel Blotto game was initially proposed to study a war situation, it has found applications in the analysis of many different forms of competition: from sports, to advertisement, to politics \cite{M93,LP02,MMT05,CKS09,KR10,KR12}, and has thus become one of the most well-known games in classic game theory.


Colonel Blotto is a zero-sum game.  However, the number of strategies in the Colonel Blotto game is exponential in its natural representation.  After all, there are ${n+k-1 \choose k-1}$ ways to partition $n$ troops among $k$ battlefields.  The classical methods for computing the equilibra of a zero-sum game therefore do not yield computationally efficient results.  Moreover, significant effort has been made in the economics literature to understand the structure of equilibria of the Colonel Blotto game, i.e., by solving for equilibrium explicitly \cite{T49,Bl54,Bl58,Be69,SW81,W05,R06,K07,H07,GP09,KR12}.  Despite this effort, progress remains sparse.  Much of the existing work considers a continuous relaxation of the problem where troops are divisible, and for this relaxation a significant breakthrough came only quite recently in the seminal work of \citeauthor{R06}~\cite{R06}, 85 years after the introduction of the game.  \citeauthor{R06} finds an equilibrium solution for the continuous version of the game, in the special case that all battlefields have the same weight.  The more general weighted version of the problem remains open, as does the original non-relaxed version with discrete strategies. Given the apparent difficulty of solving for equilibrium explicitly, it is natural to revisit the equilibrium computation problem for Colonel Blotto games.  


\paragraph{An Approach: Bilinear Games}
How should one approach equilibrium computation in such a game?
The exponential size of the strategy set is not an impassable barrier; in certain cases, games with exponentially many strategies have an underlying structure that can be used to approach the equilibrium computation problem.  For example, Koller, Megiddo and von Stengel \cite{KMvS94} show how to compute equilibria for zero-sum extensive-form games with perfect recall.  Immorlica et al. \cite{duel11} give an approach for solving algorithmically-motivated ``dueling games'' with uncertainty.  Letchford and Conitzer \cite{LC13} compute equilibria for a variety of graphical security games. Each of these cases involve games with exponentially many strategies.  In each case, a similar approach is employed: reformulating the original game as a payoff-equivalent \emph{bilinear} game.  In a bilinear game, the space of strategies forms a polytope in $R^n$, and payoffs are specified by a matrix $M$: if the players play strategies $x$ and $y$ respectively, then the payoff to player $1$ is $x^T M y$.  It has been observed that such bilinear games can be solved efficiently when the strategy polytope has polynomially many constraints \cite{Charnes53,KMvS94}.  In each of the examples described above, it is shown how to 
map strategies from the original games to appropriate payoff-equivalent bilinear games, in which strategies are choices of marginal probabilities from the original game.  If one can also map a strategy in the bilinear game back to the original game, then one has a polytime reduction to the (solved) problem of finding equilibria of the bilinear game.  In each of these prior works it is this latter step -- mapping back to the original game -- that is the most demanding; this generally requires a problem-specific way to convert a profile of marginals into a corresponding mixed strategy in the original game.

\subsection{Our Contribution}

We first show how to compute equilibria of the Colonel Blotto game. 
%
%
Like the works described above, our method is to consider a payoff-equivalent bilinear game defined over a space of appropriately-selected marginals (in this case, the distribution of soldiers to a given battlefield).
However, unlike those works, we do not explicitly construct a game-specific mapping to and from a polynomially-sized bilinear game.  We instead use a more general reduction, based on the idea that it suffices to solve linear optimization queries over strategy profiles in a (potentially exponentially-sized) bilinear game.  In other words, equilibrium computation reduces to the problem of finding a strategy that optimizes a given linear function over its marginal components.  We apply our reduction to the Colonel Blotto game by showing how to solve these requisite optimization queries, which can be done via dynamic programming.

The reduction described above follows from a repeated application of the classic equivalence of separation and optimization \cite{GLS81}.  In more detail, we formulate the equilibrium conditions as an LP whose feasibility region is the intersection of two polytopes: the first corresponding to the set of strategies of player 1, and the second encoding payoff constraints for player 2.  To find a solution of the LP via Ellipsoid method, it suffices to design a separation oracle for each polytope. However, as we show, separation oracles for the second polytope reduce to (and from) separation oracles for the set of strategies of player 2.  It therefore suffices to design separation oracles for the polytope of strategies for each player, and for this it is enough to perform linear optimization over those polytopes \cite{GLS81}.  Finally, to convert back to an equilibrium of the original game, we make use of a result from combinatorial optimization: the solution of an LP with polynomially many variables can always be expressed as a mixed strategy with a polynomial-size support, and such a mixed strategy can be computed using the separation oracles described previously \cite{GLS81}.

The reduction described above is not specific to the Colonel Blotto game: it applies to any zero-sum game, and any payoff-equivalent bilinear form thereof.
To the best of our knowledge, this general reduction from equilibrium computation to linear optimization has not previously been stated explicitly, although it has been alluded to in the security games literature\footnote{Independently and in parallel with an earlier version of this work, Xu et al. \cite{XFJCDT14} implicitly used a similar idea to solve a class of Stackleberg security games.} and similar ideas have been used to compute correlated equilibria in compact games \cite{JL15}.
In particular, it is notable that 
one requires only a single linear optimization oracle, over the set of pure strategies, to both find an equilibrium of the bilinear game and convert this to a mixed equilibrium in the original game.
We demonstrate the generality of this approach by considering 
notable examples of games to which it can be applied.  In each case, our approach either results in the first known polytime algorithm for computing equilibria, or else significantly simplifies prior analysis.
Finally, we note that our approach also extends to approximations: given the ability to approximately answer separation oracle queries to within any fixed error $\epsilon > 0$, one can compute a corresponding approximation to the equilibrium payoffs.  

\paragraph{Dueling Games}
In a {\em dueling game}, introduced by \citeauthor{duel11} \cite{duel11},
two competitors each try to design an algorithm for an optimization problem with an element of uncertainty, and each player's payoff is the probability of obtaining a better solution.  This framework falls within a natural class of ranking or social context games \cite{AKT08,BFHS09}, in which players separately play a base game and then receive ultimate payoffs determined by both their own outcomes and the outcomes of others.
\citeauthor{duel11} argue that this class of games models a variety of scenarios of competitions between algorithm designers: for example, competition between search engines (who must choose how to rank search results), or competition between hiring managers (who must choose from a pool of candidates in the style of the secretary problem).

%
%
\citeauthor{duel11} \cite{duel11} show how to compute a Nash equilibrium for certain dueling games, by developing mappings to and from bilinear games with compact representations.  We extend their method, and show how to expand the class of dueling games for which equilibria can be efficiently computed.  As one particular example, we introduce and solve the \emph{matching duel}.  In this game, two players each select a matching in a weighted graph, and each player's payoff is the probability that a randomly selected node would have a higher-weight match in that player's matching than in the opponent's.  Notably, since the matching polytope does not have a compact representation \cite{matching14}, the original method of \cite{duel11} is not sufficient to find equilibria of this game.  
We also illustrate that our approach admits a significantly simplified analysis for some other dueling games previously analyzed by \citeauthor{duel11}

\paragraph{General Lotto Game}
\citeauthor{H07}~\cite{H07} considers 
a variant of the Colonel Blotto game, namely the \emph{General Lotto} game.
In this game, each player chooses a distribution over non-negative real numbers, subject to the constraint that its expectation must equal a certain fixed value.  
%
A value is then drawn from each player's chosen distribution; the players' payoffs are then functions of these values.  
What is interesting about this game is that there are infinitely many pure strategies, which complicates equilibrium computation.  Nevertheless, we show that our techniques can be applied to this class of games as well, yielding a polynomial-time algorithm for computing Nash equilibria.
It is worth mentioning that the General Lotto game is an important problem by itself, and its continuous variant has been well studied in the literature (see, for example, \cite{BC80,SP06,H07,D11}).

\paragraph{Subsequent Work} The algorithm proposed in this work is later improved and simplified in ~\cite{behnezhad2016faster}.

\section{Results and Techniques}\label{our_results}

We present a general method for computing Nash equilibria of a broad class of zero-sum games.  Our approach is to reduce the problem of computing equilibria of a given game to the problem of optimizing linear functions over the space of strategies in a payoff-equivalent bilinear game.

Before presenting our general reduction, we will first illustrate our techniques by considering the Colonel Blotto game as a specific example.  In Section \ref{sec:results.blotto} we describe our approach in detail for the Colonel Blotto game, explaining the process by which equilibria can be computed.  Then in Section \ref{sec:results.general} we will present the general reduction.  Further applications of this technique are provided in Section \ref{sec:duel} (for dueling games) and Appendix \ref{sec:lotto} (for the General Lotto game).

\subsection{Colonel Blotto} 
\label{sec:results.blotto}
Here, we propose a polynomial-time algorithm for finding an equilibrium of discrete Colonel Blotto 
in its general form.  We allow the game to be {\em asymmetric} across both the battlefields and the players. 
A game is {\em asymmetric across the battlefields} when different battlefields  have different contributions to the outcome of the game, and a game is {\em asymmetric across the players} when two players have different number of troops.

In the Colonel Blotto game, two players $A$ and $B$ simultaneously distribute $a$ and $b$ troops, respectively, over $k$ battlefields.  A pure strategy of player $A$ is a $k$-partition $\px=\langle x_1, x_2, \dots, x_k \rangle$ where $\sum_{i=1}^k x_i = a$, and a pure strategy of player $B$ is a $k$-partition $\py=\langle y_1, y_2, \ldots, y_k \rangle$ where $\sum_{i=1}^k y_i = b$. Let $u^A_i(x_i, y_i)$ and $u^B_i(x_i, y_i)$ be the payoff of player $A$ and player $B$ from the $i$-th battlefield, respectively. Note that the payoff functions of the $i$-th battlefield, $u^A_i$ and $u^B_i$, have $(a+1)\times(b+1)$ entries. This means the size of input is $\Theta(kab)$. Since  Colonel Blotto is a zero-sum game, we have $u^A_i(x_i, y_i)=-u^B_i(x_i, y_i)$\footnote{Note that in the Colonel Blotto game if $u^A_i(x_i, y_i)$ is not necessarily  equal to $-u^B_i(x_i, y_i)$ then a special case of this game with two battlefields can model an arbitrary 2-person normal-form game and thus finding a Nash Equilibrium would be PPAD-complete.}.
Note that we do not need to put any constraint on the payoff functions, and our result works for all payoff functions.
We also represent the total payoff of player $A$ and player $B$ by $h^A_{\mathcal{B}}(\px, \py) = \sum_i u^A_i(x_i, y_i)$ and $h^B_{\mathcal{B}}(\px, \py) = \sum_i u^B_i(x_i, y_i)$, respectively.
A mixed strategy of each player would be a probability distribution over his pure strategies.
\begin{theorem*} 
\label{thm:blotto}
One can compute an equilibrium of any Colonel Blotto game in polynomial time.
\end{theorem*}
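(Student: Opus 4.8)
The plan is to exhibit a payoff-equivalent \emph{bilinear} game played over marginals, and then apply the separation-to-optimization reduction outlined in the introduction. Since $h^A_{\mathcal{B}}(\px,\py)=\sum_i u^A_i(x_i,y_i)$ is additively separable across battlefields, the expected payoff of a pair of mixed strategies depends only on the marginal vectors: for player $A$ the vector $p^A=(p^A_{i,j})_{1\le i\le k,\ 0\le j\le a}$, where $p^A_{i,j}$ is the probability that $A$ places exactly $j$ troops on battlefield $i$, and the analogous $p^B$ for player $B$. Indeed the expected payoff to $A$ equals $\sum_{i}\sum_{j=0}^{a}\sum_{j'=0}^{b} p^A_{i,j}\,u^A_i(j,j')\,p^B_{i,j'}$, which is a bilinear form $(p^A)^{T} M\, p^B$ with $M$ explicit of size $\Theta(ka)\times\Theta(kb)$. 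Let $P_A$ be the polytope of all marginal vectors arising from mixed strategies of $A$ — equivalently, the image of the simplex over $k$-partitions of $a$ under the linear marginal map — and define $P_B$ likewise; note that $\dim P_A=\Theta(ka)$ and $\dim P_B=\Theta(kb)$ are polynomial in the input size $\Theta(kab)$, and that no symmetry across players or across battlefields has been assumed.

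Next I would write the equilibrium as an LP over $(p^A,v)$, where $v$ is the value of the game to $A$, to be maximized: require $p^A\in P_A$, and, for every pure strategy $\py$ of $B$, require that $B$'s payoff against $p^A$ be at most $-v$, i.e. $\sum_i\sum_{j} p^A_{i,j}\,u^B_i(j,y_i)\le -v$. By the minimax theorem the optimal $v$ is the value of the game and the corresponding $p^A$ is an equilibrium strategy for $A$ in the bilinear game, hence (once we translate back) in the original game. To solve this LP via the Ellipsoid method it suffices, by the equivalence of separation and optimization \cite{GLS81}, to provide polynomial-time separation oracles for $P_A$ and for the polytope cut out by $B$'s payoff constraints.

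The key algorithmic ingredient is a linear optimization oracle over $P_A$ (and, symmetrically, over $P_B$): for a weight vector $c=(c_{i,j})$, maximizing $\langle c,p^A\rangle$ over $P_A$ is the same as maximizing $\sum_{i=1}^k c_{i,x_i}$ over $k$-partitions $\px$ with $\sum_i x_i=a$, since the maximum of a linear function over the projection of a simplex is attained at the image of a vertex. This separable allocation problem is solved by a dynamic program over battlefields whose state is the number of troops allocated so far, in time $O(ka^2)$. By \cite{GLS81} this yields a separation oracle for $P_A$. A separation oracle for $B$'s constraint polytope reduces to the same primitive: given a candidate $(p^A,v)$, set $w_{i,y}=\sum_j p^A_{i,j}\,u^B_i(j,y)$ and compute $\max_{\py:\ \sum_i y_i=b}\sum_i w_{i,y_i}$ by the same dynamic program; if this maximum exceeds $-v$, the maximizing $\py$ supplies a violated constraint, and otherwise the point is feasible. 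Running the Ellipsoid method then yields equilibrium marginals $p^{A*}$ together with a polynomial-size family of the constraints that were generated.

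Finally I would map back to the original game. Since $p^{A*}$ lies in the polytope $P_A$ of polynomial dimension, and we have an optimization oracle for $P_A$, the constructive Carath\'{e}odory decomposition of \cite{GLS81} expresses $p^{A*}$ as a convex combination of polynomially many vertices of $P_A$, each of which is the marginal vector of an explicit pure $k$-partition; this convex combination is precisely a polynomial-support mixed strategy of $A$ in Colonel Blotto that guarantees value $v$. Carrying out the symmetric construction for $B$ yields a Nash equilibrium of the original game. The main obstacle is exactly this last ``decompose marginals into a mixed strategy'' step, which in prior bilinear reductions required a delicate, game-specific argument; here it is discharged uniformly by the optimization-to-decomposition result, so the only Blotto-specific work that remains is the separable-allocation dynamic program, which is routine.
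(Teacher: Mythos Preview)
Your proposal is correct and follows essentially the same approach as the paper: map to marginals to obtain a bilinear game, solve the minimax LP via the Ellipsoid method using a separable-allocation dynamic program as the optimization oracle over each player's marginal polytope (which yields both the membership and best-response separation oracles), and then recover a small-support mixed strategy via the constructive Carath\'{e}odory/GLS decomposition. The paper's proof is organized into the same three steps and invokes the same lemmas, with the DP stated as Lemma~\ref{lem:minpure}.
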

%
\begin{proof}
Let $\X$ and $\Y$ be the set of all pure strategies of players $A$ and $B$ respectively, i.e., each member of $\X$ is a $k$-partition of $a$ troops and each member of $\Y$ is a $k$-partition of $b$ troops. We represent a mixed strategy of player $A$ with function $p:\X \rightarrow [0,1]$ such that $\sum_{\px \in \X} p(\px) =1$. Similarly, let function $q:\Y \rightarrow [0,1]$ be a mixed strategy of player $B$. We may also use $\x$ and $\y$, instead of $p$ and $q$, for referring to a mixed strategy of player $A$ and $B$ respectively.
Since Colonel Blotto is a zero-sum game, we leverage the MinMax theorem for finding an NE of the game. This theorem says that pair $(p^*, q^*)$ is an NE of the Colonel Blotto game if and only if strategies $p^*$ and $q^*$ maximize the guaranteed payoff of players $A$ and $B$ respectively \cite{minmax}. Now, we are going to find strategy $p^*$ of player $A$ which maximizes his guaranteed payoff. The same technique can be used for finding $q^*$.
%
It is known that for each mixed strategy $p$,  at least one of the best-response strategies to $p$ is a pure strategy. Therefore, a solution to the following program characterizes strategy $p^*$.
\begin{eqnarray}
\label{Prog:2}
\max &U\\
s.t. &\sum_{\px \in \X} p_{\px} = 1,& \nonumber \\
&\sum_{\px \in \X} p_{\px}h^A_{\B}(\px, \py) \geq U, & \forall \py \in \Y, \nonumber
\end{eqnarray}
Unfortunately, LP \ref{Prog:2} has $|\X|$ variables and $|\Y|+1$ constraints where $|\X|$ and $|\Y|+1$ are exponential.
We therefore cannot solve LP \ref{Prog:2} directly.

\noindent \textit{Step 1: Transferring to a new space.} 
We address this issue by 
transforming the solution space to a new space 
in which an LP equivalent to LP \ref{Prog:2} becomes tractable (See, e.g., \cite{azar2003}, for similar technique).
This new space will project mixed strategies onto the marginal probabilities for each (battlefield, troop count) pair.
For each pure strategy $\px \in \X$ of player $A$, we map it to a point in $\{0,1\}^{\n}$ where $\n = k \times (a+1)$. For convenience, we may abuse the notation, and index each point $\xh \in \{0,1\}^{\n}$ by two indices $i$ and $j$ such that $\hat{x}_{i,j}$ represents $\hat{x}_{(i-1)(a+1)+j+1}$. 
Now we map a pure strategy $\px$ to $\G(\px) = \xh \in \{0,1\}^{\n}$ such that $\hat{x}_{i,j} = 1$ if and only if $x_i = j$. In other words, if player $A$ puts $j$ troops in the $i$-th battlefield then $\hat{x}_{i, j}=1$. 
Let $\Ia=\{\xh \in \{0,1\}^{\n} | \exists \px \in \X, \G(\px) = \xh \}$ be the set of points in $\{0,1\}^{\n}$ which represent pure strategies  of player $A$.
Let $\MX$ and $\MY$ be the set of mixed strategies of players $A$ and $B$ respectively. Similarly, we map mixed strategy $\x$ to point $\G(\x) = \xh \in [0,1]^{\n}$ such that $\hat{x}_{i,j}$ represents the probability that mixed strategy $\x$ puts $j$ troops in the $i$-th battlefield. Note that mapping $\G$ is not necessarily one-to-one nor onto, i.e., each point in $[0,1]^{\n}$ may be mapped to zero, one, or more than one strategies. Let $\Sa=\{\xh \in [0,1]^{\n} | \exists \x \in \MX, \G(\x) = \xh \}$ be the set of points in $[0,1]^{\n}$ which represent at least one mixed strategy of player $A$. Similarly, we use function $\Gb$ to map each strategy of player $B$ to a point in $[0,1]^{\nb}$ where $\nb = k \times (b+1)$, and define $\Ib=\{\yh \in \{0,1\}^{\nb} | \exists \py \in \Y, \Gb(\py) = \yh \}$ and $\Sb=\{\yh \in [0,1]^{\nb} | \exists \y \in \MY, \Gb(\y) = \yh \}$. 
\begin{lemma*}
\label{polyhedron_new}
Set $\sets$ forms a convex polyhedron with an exponential number of vertices and facets.
\end{lemma*}

Now, we are ready to rewrite linear program \ref{Prog:2} in the new space as follows. 
\begin{eqnarray}
\label{Prog:3}
\max &U& \\
s.t. &\xh \in \Sa &\text{(Membership constraint)}\nonumber\\
&h_{\B}^A(\xh, \yh) \geq U, \text{\ \ \ }\forall \yh \in \Ib&\text{(Payoff constraints)}\nonumber 
\end{eqnarray}
where $$h_{\B}^A(\xh, \yh) = \sum_{i=1}^k \sum_{t_a=0}^a \sum_{t_b=0}^b \hat{x}_{i, t_a} \hat{y}_{i, t_b} u_i^A(t_a, t_b)$$ is the expected payoff of player $A$.

\noindent \textit{Step 2: Solving LP \ref{Prog:3}.}  The modified LP above, LP \ref{Prog:3}, has exponentially many constraints, but only polynomially many variables. One can therefore apply the Ellipsoid method to solve the LP, given a separation oracle that runs in polynomial time \cite{lp1,lp2}.  By the equivalence of separation and optimization \cite{GLS81}, one can implement such a separation oracle given the ability to optimize linear functions over the polytopes $\Sa$ (for the membership constraints) and $\Sb$ (for the payoff constraints).

Stated more explicitly, given a sequence $c_0, c_1,\ldots, c_{k(m+1)}$, where $k$ is the number of battlefields and $m$ is the number of troops for a player,  the required oracle must find a pure strategy $x=(x_1, x_2, \dots, x_k)\in \pureset$ such that $\sum_{i=1}^kx_i=m$, and $\xh=\mathcal{G}(x)$ minimizes the following equation:
\begin{equation}\label{minpure_eq}
c_0+\sum_{i=1}^{k(m+1)} c_i \hat{x}_i,
\end{equation}
and similarly for polytope $\Y$.
The following lemma shows that one can indeed find a minimizer of Equation \eqref{minpure_eq} in polynomial time.

\begin{lemma}\label{lem:minpure}
Given two integers $m$ and $k$ and a sequence $c_0,c_1,\ldots,c_{k(m+1)}$, 
one can find (in polynomial time) an optimal pure strategy $x=(x_1, x_2, \dots, x_k)$ where $\sum_{i=1}^{k}x_i=m$, $\xh=\mathcal{G}(x)$ and $\xh$ minimizes $c_0+\sum_{i=1}^{k(m+1)} c_i \hat{x}_i$. 
\end{lemma}
\begin{proof}
We employ a dynamic programming approach.  Define $d[i,t]$ to be the minimum possible value of $c_0+\sum_{i'=1}^{i(t+1)} c_{i'}\hat{x}_{i'}$ where $\sum_{i'=1}^{i}x_{i'}=t$. Hence, $d[k,m]$ denotes the minimum possible value of $c_0+\sum_{i=1}^{k(m+1)} c_i \hat{x}_i$.  
We have that $d[0,j]$ is equal to $c_0$ for all $j$.  For an arbitrary $i>0$ and $t$, the optimal strategy $x$ puts $0\leq t'\leq t$ units in the $i$-th battlefield and the applied cost in the equation \ref{minpure_eq} is equal to $c_{(i-1)(m+1)+t'+1}$. Thus, we can express $d[i,t]$ as
\begin{equation*}
d[i,t] = \min_{0\leq t'\leq t}\{d[i-1,t-t']+c_{(i-1)(m+1)+t'+1}\}.
\end{equation*}
Solving this dynamic program, we can find the allocation of troops that minimizes $\sum \alpha_{i} \xh_{i}$ in polynomial time, as required. 
%
%
\end{proof}

\noindent \textit{Step 3: Transferring to the original space.}
At last we should transfer the solution of LP \ref{Prog:3} to the original space. In particular, we are given a point $\xh \in \Sa$ and our goal is to find a strategy $\x \in \MX$ such that $\G(\x) = \xh$.  To achieve this, we invoke a classic result of \cite{GLS81} which states that an interior point of an $n$-dimensional polytope $P$ can be decomposed as a convex combination of at most $n+1$ extreme points of $P$, in polynomial time, given an oracle that optimizes linear functions over $P$.  Note that this is precisely the oracle required for Step 2, above.  Applying this result to the solution of LP \ref{Prog:3} in polytope $\Sa$, we obtain a convex decomposition of $\xh$ into extreme points of $\Sa$, say $\xh = \sum_i \alpha_i \xh_i$.  Since each $\xh_i$ corresponds to a pure strategy in $\X$, it is trivial to find point $\x_i$ with $\G(\x_i) = \xh_i$, since the marginals of each $\xh_i$ lie in $\{0,1\}$.  We then have that $\x = \sum_i \alpha_i \x_i$ is the required mixed strategy profile.

Combining these three steps, we find a Nash Equilibrium of the Colonel Blotto game in polynomial time, completing the proof of Theorem \ref{thm:blotto}. See Appendix \ref{sec:blotto} for more details.
\end{proof}


\subsection{A general framework for bilinear games} 
\label{sec:results.general}
In our method for finding a Nash Equilibrium of the Colonel Blotto game, the main steps were to express the game as a bilinear game of polynomial dimension, solve for an equilibrium of the bilinear game, then express that point as an equilibrium of the original game.  To implement the final two steps, it sufficed to show how to optimize linear functions over the polytope of strategies in the bilinear game.
This suggests a general reduction, where the equilibrium computation problem is reduced to finding the appropriate bilinear game and implementing the required optimization algorithm.  
In other words, the method for computing Nash equilibria applies to a zero-sum game when:
\begin{enumerate}\itemsep -2pt 
\item One can transfer each strategy $x$ of player $A$ to $\G(x)=\xh \in R^{\n}$, and each strategy $y$ of player $B$ to $\Gb(y)=\yh \in R^{\nb}$ such that the payoff of the game for strategies $\xh$ and $\yh$ can be represented in a bilinear form based on $\xh$ and $\yh$, i.e., the payoff is $\xh^tM\yh$ where $M$ is a $\n \times \nb$ matrix.


\item For any given vector $\alpha$ and real number $\alpha_0$ we can find, in polynomial time, whether there is a pure strategy $\xh$ in the transferred space such that $\alpha_0 +  \sum_i\alpha_i\hat{x}_i \geq 0$.

\end{enumerate} 
We refer to such a game as \emph{polynomially separable}.  A direct extension of the proof of Theorem \ref{thm:blotto} implies that Nash equilibria can be found for polynomially separable games.

\begin{theorem*}
\label{thm:bi-sep}
There is a polytime algorithm which finds a Nash Equilibrium of a given polynomially separable game.
\end{theorem*}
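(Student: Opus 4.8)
The plan is to mirror the three-step proof of Theorem~\ref{thm:blotto} verbatim, simply abstracting away the Colonel-Blotto-specific pieces. First I would set up the equilibrium LP exactly as in \eqref{Prog:2}, but written in the transferred space: by the MinMax theorem it suffices to find a strategy $\xh \in \Sa$ (where $\Sa$ is the projection of player $A$'s mixed strategies under $\G$) maximizing the guaranteed payoff, i.e., maximizing $U$ subject to the membership constraint $\xh \in \Sa$ and the payoff constraints $\xh^t M \yh \geq U$ for every pure strategy $\yh \in \Ib$ of player $B$ (it suffices to range over pure best responses). Property~1 of polynomial separability guarantees that such a bilinear matrix $M$ exists, so the payoff constraints are genuinely linear in $\xh$ for each fixed $\yh$. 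This LP has polynomially many variables ($\n = \dim$ of the transferred space) but exponentially many constraints.

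Next I would run the Ellipsoid method on this LP, which requires only a polynomial-time separation oracle. Here is where Property~2 does the work: by the equivalence of separation and optimization~\cite{GLS81}, a linear-optimization oracle over $\Sa$ yields a separation oracle for the membership constraint, and a linear-optimization oracle over $\Sb$ (equivalently, over the pure strategy set of player $B$ in the transferred space) yields a separation oracle for the family of payoff constraints --- this last point is exactly the observation that separating the polytope ``$\xh^t M \yh \geq U \ \forall \yh$'' reduces to maximizing the linear functional $\yh \mapsto \yh^t M^t \xh$ over player $B$'s strategies. Property~2, stated as a decision/feasibility question over pure strategies $\xh$ in the transferred space, is precisely (modulo the standard reduction from optimization to a sequence of feasibility queries via binary search on $\alpha_0$) the required linear-optimization oracle over the pure strategies, and since $\Sa, \Sb$ are the convex hulls of their respective (integral) pure-strategy sets $\Ia, \Ib$, optimizing a linear function over the polytope equals optimizing it over the vertices. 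So both oracles are available in polynomial time, and the LP is solved.

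Finally I would transfer back to the original game, again quoting \cite{GLS81}: an interior point of an $\n$-dimensional polytope decomposes as a convex combination of at most $\n+1$ of its extreme points in polynomial time, using the same linear-optimization oracle over $\Sa$. The resulting extreme points lie in $\Ia$, hence each corresponds to a pure strategy of player $A$ in the original game, and the convex-combination weights give the desired mixed strategy $p^*$. Running the symmetric argument for player $B$ produces $q^*$, and $(p^*,q^*)$ is a Nash equilibrium by MinMax.

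I do not expect a genuine obstacle here --- the theorem is essentially a restatement of what the Colonel Blotto proof already established in a game-agnostic way. The one point requiring a little care is the precise correspondence between the ``feasibility over pure strategies'' phrasing in Property~2 and the ``optimize a linear function over the polytope'' phrasing needed to invoke \cite{GLS81} on both $\Sa$ and the payoff-constraint polytope; the reduction is routine (binary search on the threshold $\alpha_0$, plus the observation that extreme points of $\Sa$, $\Sb$ are integral), but it is the place where a careful write-up should spend its words. The rest is a direct transcription of Steps~1--3 of the proof of Theorem~\ref{thm:blotto}, with \minpure replaced by the Property-2 oracle.
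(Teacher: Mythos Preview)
Your proposal is correct and follows exactly the approach the paper takes: the paper states that Theorem~\ref{thm:bi-sep} is ``a direct extension of the proof of Theorem~\ref{thm:blotto},'' and your three-step outline (set up the transferred LP, solve via Ellipsoid with Property~2 supplying the separation oracles for both the membership and payoff-constraint polytopes, then decompose via \cite{GLS81}) is precisely that extension. The one small slip is the parenthetical ``(integral)'' --- in the general framework the transferred pure strategies need not be $0$--$1$ vectors, but your argument only uses that extreme points of $\Sa$ lie in $\Ia$, which follows from $\Sa$ being the convex hull of $\Ia$, so nothing breaks.
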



This general methodology can be used for finding a NE in many zero-sum games.
In subsequent sections, we show how our framework can be used to find Nash equilibria for a generalization of Blotto games, known as  games, and for a class of dueling games introduced by Immorlica et al. \cite{duel11}.


%


We also show one can use similar techniques to compute the approximate equilibrium payoffs of a dueling game when we are not able to answer the separation problem in polynomial time but instead we can polynomially solve the $\epsilon$-separation problem for any $\epsilon > 0$.  The proof of Theorem \ref{thm:bi-sep} is deferred to Appendix \ref{sec:approx}.

\begin{theorem*}
\label{thm:bi-sep-approx}
Given an oracle function for the $\epsilon$-separation problem, one can find an $\epsilon$-approximation to the equilibrium payoffs of a polynomially separable game in polynomial time.
\end{theorem*}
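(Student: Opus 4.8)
The plan is to rerun the three-step argument behind Theorem~\ref{thm:bi-sep} (equivalently, the proof of Theorem~\ref{thm:blotto}), but replacing every exact invocation of the equivalence of separation and optimization \cite{GLS81} by its \emph{weak} counterpart, and then to track how the accuracy parameter propagates so that a sufficiently fine internal tolerance yields the claimed $\epsilon$-approximation of the game value. A first simplification: since the game is zero-sum, the equilibrium payoff of player $A$ is exactly the optimum $U^*$ of LP~\ref{Prog:3}, which is the value $v$ of the game by the minmax theorem, and player $B$'s equilibrium payoff is $-v$ (one may also solve the symmetric LP for $B$ directly). Hence it suffices to compute a number $\tilde U$ with $|\tilde U - v|\le\epsilon$; in particular, unlike the exact case, we never need Step~3, i.e., we do not have to transfer the optimal point back to the original strategy space via a convex decomposition.

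Next, fix a target accuracy $\epsilon$ and set an internal tolerance $\delta := \epsilon / \poly(\n,\nb,L)$, where $L$ bounds the encoding length of the payoff entries and the polynomial factor will absorb all the blow-ups incurred below; since we may solve the $\epsilon'$-separation problem for \emph{any} $\epsilon'>0$, we are free to call the oracle with a tolerance that is a small polynomial fraction of $\delta$. Using that oracle together with the fact that $\Sa$ and $\Sb$ are $\{0,1\}$-combinatorial polytopes contained in the unit cube — hence well-bounded in the sense required by the weak ellipsoid method, modulo passing to their affine hulls to accommodate the equality constraints $\sum_j \hat x_{i,j}=1$ — the weak form of the equivalence of separation and optimization gives a weak (additive-$\delta$) optimization oracle over $\Sa$ and over $\Sb$, and from the former also a weak separation oracle for $\Sa$. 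As in the proof of Theorem~\ref{thm:blotto}, the membership constraint $\xh\in\Sa$ is handled by the weak separation oracle for $\Sa$, while a violated payoff constraint of LP~\ref{Prog:3} is detected by minimizing the bilinear form $h^A_\B(\xh,\cdot)$ over $\Sb$ — a linear optimization over $\Sb$, answered approximately by the weak optimization oracle, with the error controlled because the matrix $M$ has polynomially bounded entries. Composing these yields a weak separation oracle for the whole (bounded, polynomial-dimension) feasibility region of LP~\ref{Prog:3}.

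Now run the weak ellipsoid method on LP~\ref{Prog:3} with this oracle. In time polynomial in $\n,\nb,L$ and $\log(1/\delta)$ it returns a value $\tilde U$ and a point $\tilde{\xh}$ that is $\delta$-approximately feasible and satisfies $\tilde U \ge U^* - \delta = v - \delta$. For the matching upper bound, note that $\tilde{\xh}$ lies within $\delta$ of $\Sa$, so there is a genuine mixed strategy $\x$ of $A$ with $\G(\x)$ within $\delta$ of $\tilde{\xh}$; since $h^A_\B$ is Lipschitz in its first argument with a constant bounded by $\poly(\n,\nb,L)$, the strategy $\x$ guarantees at least $\tilde U - O(\poly(\n,\nb,L)\cdot\delta)$ against every $\yh\in\Ib$, whence $v \ge \tilde U - O(\poly(\n,\nb,L)\cdot\delta)$. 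Combining the two bounds gives $|\tilde U - v| = O(\poly(\n,\nb,L)\cdot\delta) \le \epsilon$ for our choice of $\delta$; output $\tilde U$ for $A$ and $-\tilde U$ for $B$.

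The main obstacle is not conceptual but a genuinely delicate bookkeeping task: making the chain ``$\epsilon'$-separation over pure strategies $\Rightarrow$ weak optimization over $\Sa,\Sb$ $\Rightarrow$ weak separation over $\Sa,\Sb$ $\Rightarrow$ weak separation for LP~\ref{Prog:3} $\Rightarrow$ weak ellipsoid'' quantitatively tight — i.e., verifying that each step degrades the accuracy by at most a fixed polynomial in $\n,\nb,L$ (and never, say, exponentially) — and checking the well-boundedness / full-dimensionality hypotheses that \cite{GLS81} requires at each application, in particular handling the lower-dimensional polytopes $\Sa,\Sb$ (and the feasibility region of LP~\ref{Prog:3}, which is thin in the $U$ direction near the optimum) by working in the appropriate affine hulls. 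Once these polynomial bounds are in place, taking the internal tolerance to be a small enough polynomial fraction of $\epsilon$ completes the proof.
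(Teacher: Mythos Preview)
Your proposal is correct and follows essentially the same route as the paper's proof (Theorem~\ref{apptheorem} in Appendix~\ref{sec:approx}): run the ellipsoid method on LP~\ref{Prog:3} with the approximate separation oracle, observe that the output $\tilde{\xh}$ is within the tolerance of a genuine strategy in $\Sa$, and use the Lipschitz property of the bilinear payoff to transfer the guarantee. The paper makes the Lipschitz constant explicit as $\beta=\max\{1,\sum_{i,j}|M_{i,j}|\}$ and simply calls the oracle at accuracy $\epsilon/\beta$, whereas you route everything through the weak separation/optimization equivalence of \cite{GLS81} and absorb the constants into a generic $\poly(\n,\nb,L)$; your observation that Step~3 is unnecessary here, and your more careful discussion of well-boundedness and affine hulls, are refinements the paper omits, but the underlying argument is the same.
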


\subsection{General Lotto}
The General Lotto game is a relaxation of the Colonel Lotto game (See \cite{H07} for details). In this game each player's strategy is a distribution of a nonnegative integer-valued random variable with a given expectation. In particular, players $A$ and $B$ simultaneously determine (two distributions of) two nonnegative integer-valued random variables $X$ and $Y$, respectively, such that $\E[X]=a$ and $\E[Y]=b$. The payoff of player $A$ is 
\begin{eqnarray}
\label{general_h}
h^A_{\Gamma}(X,Y) = \sum_{i=0}^{\infty} \sum_{j=0}^{\infty} \Pr(X=i) \Pr(Y=j)u(i,j),
\end{eqnarray}
and again the payoff of player $B$ is the negative of the payoff of player $A$, i.e., $h^B_{\Gamma}(X,Y) = -h^A_{\Gamma}(X,Y)$.
Hart \cite{H07} presents a solution for the General Lotto game when $u(i,j) = \sign(i-j)$.
Here, we generalize this result and present a polynomial-time algorithm for finding an equilibrium when $u$ is a {\em \distance\ function}. Function $u$ is a {\em \distance\ function}, if one can write it as $u(i,j) = f_u(i-j)$ such that $f_u$ is a monotone function and reaches its maximum value at $\um = f_u(\ut)$ where $\ut \in O(\poly(a,b))$. Note that $u(i,j)=\sign(i-j)$ is a \distance\ function where it reaches its maximum value at $i-j=1$. Now, we are ready to present our main result regarding the General Lotto game. 
\begin{theorem*} 
	There is a polynomial-time algorithm which finds an equilibrium of a General Lotto game where the payoff function is a \distance\ function.
\end{theorem*}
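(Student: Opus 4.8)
The plan is to reduce the General Lotto game to a polynomially separable game and invoke Theorem~\ref{thm:bi-sep}, after first \emph{pruning} the (infinite) strategy space down to distributions with polynomially bounded support. Observe first that the game is already essentially bilinear: a mixture of mean-$a$ distributions is again a mean-$a$ distribution, so player $A$'s strategy set is convex, and $h^A_\Gamma(X,Y)=\sum_{i,j}\Pr(X{=}i)\Pr(Y{=}j)u(i,j)$ is bilinear in the two probability mass functions. Hence, if both players are confined to distributions on $\{0,1,\dots,N\}$ for a polynomially bounded $N$, the game becomes a bilinear game over the explicitly described polytopes $\Delta^N_a=\{\hat x\in[0,1]^{N+1}:\sum_i\hat x_i=1,\ \sum_i i\hat x_i=a\}$ and $\Delta^N_b$; optimizing a linear function over either polytope is a trivial LP, its vertices are the ($\le 2$)-point distributions with the correct mean, and condition (2) of Theorem~\ref{thm:bi-sep} is immediate (enumerate the $O(N^2)$ vertices). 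So the entire content of the theorem lies in a pruning lemma.

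The pruning lemma I would aim to establish is: the game has a Nash equilibrium $(X^*,Y^*)$ with $\operatorname{supp}(X^*),\operatorname{supp}(Y^*)\subseteq\{0,1,\dots,N\}$ for $N=O(\max(a,b)+u^T)=\poly(a,b)$. The key structural observations are: (i) fixing $B$'s strategy $Y$, player $A$'s induced single-variable payoff $g_Y(v):=\E_{j\sim Y}[f_u(v-j)]$ is monotone in $v$ and is \emph{constant}, equal to $u^M$, for all $v\ge \max\operatorname{supp}(Y)+u^T$, since then $v-j\ge u^T$ for every $j$ in the support of $Y$ — and symmetrically $g_X$ becomes constant at $f_u$'s minimum value once $w$ exceeds $\max\operatorname{supp}(X)$ by the analogous (also polynomial) threshold; (ii) an optimal response of $A$ to $Y$ can be taken to be a vertex of its strategy polytope, i.e.\ a distribution on two points $v_1\le a\le v_2$; and (iii) if $v_2$ lies in the plateau one may slide it down to $\max(a,\max\operatorname{supp}(Y)+u^T)$, reweighting the two atoms to keep the mean equal to $a$, and since the freed weight is moved onto an atom worth the maximum $u^M$ of $g_Y$, this does not decrease $A$'s payoff. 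Thus $A$ has a best response to $Y$ supported on $\{0,\dots,\max(a,\max\operatorname{supp}(Y))+u^T\}$, and symmetrically for $B$. Converting these best-response bounds into a bound on the support of an actual equilibrium is the delicate step; I would do it either by a simultaneous exchange argument that modifies both $X^*$ and $Y^*$ at once, or — more robustly — by arguing about the values that support-restricted strategies can \emph{guarantee}, rather than about best-response iteration, which need not converge in a zero-sum game.

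Given the pruning lemma, I would finish by running the algorithm of Theorem~\ref{thm:bi-sep} on the bilinear game in which $A$'s strategy polytope is $\Delta^N_a$ and the payoff constraints for $A$ range over the $O(N^2)$ vertices of $\Delta^{N+u^T}_b$ — the slightly wider window is needed because, by observation (i) applied to $B$, a best response of $B$ to a strategy supported on $\{0,\dots,N\}$ may place mass up to $N+u^T$ — and symmetrically for $B$. Because $\Delta^N_a$ contains a maximin-optimal strategy of the \emph{full} game (pruning lemma) and $\Delta^{N+u^T}_b$ captures every relevant deviation of $B$, this yields distributions $X^\star\in\Delta^N_a$ and $Y^\star\in\Delta^N_b$ of polynomial support such that $X^\star$ guarantees the game value against every strategy of $B$ and $Y^\star$ against every strategy of $A$; by the minimax theorem $(X^\star,Y^\star)$ is a Nash equilibrium of the General Lotto game, and every step runs in time polynomial in the input size.

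The main obstacle, as indicated, is the pruning lemma, and specifically the interdependence of the two support bounds: $A$'s best-response support is bounded in terms of $\max\operatorname{supp}(Y)$ and $B$'s in terms of $\max\operatorname{supp}(X)$, so a naive substitution only produces a recursion that loses an additive $u^T$ each round and never terminates; moreover the equality constraint $\E[X]=a$ forbids simply truncating a distribution's tail, since that changes its mean. Extracting a clean, self-contained polynomial bound from these circular estimates is where essentially all the work lies; the reduction to a polynomially separable game is then routine given the framework already developed for Theorem~\ref{thm:blotto} and Theorem~\ref{thm:bi-sep}.
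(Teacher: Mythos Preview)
Your high-level architecture matches the paper exactly: prune the infinite strategy space to a polynomially bounded support, observe that the resulting game is polynomially separable (with two-point ``paired'' distributions as the extreme points of $\Delta^N_a$), and invoke Theorem~\ref{thm:bi-sep}. Your observations (i)--(iii) are all correct and are used in the paper as well.

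The gap is precisely where you say it is: you have not actually proved the pruning lemma, and your two suggested routes (``simultaneous exchange'' and ``argue about guaranteed values'') are not concrete enough to succeed as stated. The circular bound you derive---$A$'s best-response support controlled by $\max\operatorname{supp}(Y)$ and vice versa---really does fail to close on its own, and neither of your proposed fixes supplies the missing absolute estimate. The paper breaks the circularity with a different idea: it exploits the equilibrium \emph{value}, not just the best-response structure. Assuming $a\le b$, one has $h^A_\Gamma(X^*,Y^*)\le 0$ at equilibrium; probing with the specific two-point test strategy $T^a_{a-1,\,a+u^T-1}$ then forces $\sum_{i<a}\Pr(Y^*{=}i)\le \tfrac{u^T}{u^T+1}$, an \emph{absolute} lower bound on the mass $Y^*$ places at or above $a$, independent of $\operatorname{supp}(X^*)$. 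Combined with the trivial Markov tail bound $\sum_{i>j}\Pr(Y^*{=}i)\le b/(j{+}1)$, one now controls $Y^*$ on both sides. An exchange argument on $X^*$ (replace an atom at a putative large $z$ by one at a fixed polynomial height $y$, keeping the mean) then shows any $z>\hat u$ with $\hat u=(4bu^T+4b+u^T)(2u^T+2)$ would strictly improve $A$'s payoff, contradicting optimality; the bound on $\operatorname{supp}(Y^*)$ follows immediately from your observation (i). Note in particular that the resulting bound is $O(b\cdot (u^T)^2)$, not the $O(\max(a,b)+u^T)$ you conjecture; your smaller bound is not established and may well be false.

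In short: the reduction step is fine, but the pruning lemma is the theorem, and the missing ingredient is an absolute (non-circular) estimate on one player's equilibrium distribution coming from the sign of the game value.
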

\noindent \textbf{Main challenge}
Note that in the General Lotto game, each player has infinite number of pure strategies, and thus one cannot use neither our proposed algorithm for the Colonel Blotto game nor the technique of \cite{duel11} for solving the problem. We should prune strategies such that the problem becomes tractable. Therefore, we characterize the extreme point of the polytope of all strategies, and use this characterization for pruning possible strategies.

To the best of our knowledge, our algorithm is the first algorithm of this kind which computes an NE of a game with infinite number of pure strategies.
\section{Application to Dueling Games}
\label{sec:duel}
Immorlica et al. \cite{duel11} introduced the class of dueling games.
In these games, an optimization problem with an element of uncertainty is considered as a competition between two players. They also provide a technique for finding Nash equilibria for a set of games in this class. Later Dehghani \textit{et al.}\cite{dehghani2016price} studied dueling games in a non-computational point of view and proved upper bounds on the price of anarchy of many dueling games.  In this section, we formally define the dueling games and bilinear duels. Then, in Section \ref{our_method}, we describe our method and show that our technique solves a more general class of dueling games. Furthermore, we provide examples to show how our method can be a simpler tool for solving bilinear duel games compared to \cite{duel11} method.
Finally, in Section \ref{subsec:matching}, we examine the matching duel game to provide an example where the method of \citet{duel11} does not work, but our presented method can yet be applied.
\subsection{Dueling games}
Formally, dueling games are two player zero-sum games with a set of strategies $X$, a set of possible situations $\Omega$, a probability distribution $p$ over $\Omega$, and a cost function $c:X \times \Omega \rightarrow \mathbb{R}$ that defines the cost measure for each player based on her strategy and the element of uncertainty. 
The payoff of each player is defined as the probability that she beats her opponent minus the probability that she is beaten. More precisely, the utility function is defined as
\begin{equation*}
h^A(x,y) = -h^B(x, y)= \Pr_{\omega \sim p}[c(x,\omega) < c(y,\omega)] - \Pr_{\omega \sim p}[c(x,\omega) > c(y,\omega)]
\end{equation*}
where $x$ and $y$ are strategies for player $A$ and $B$ respectively.
In the following there are two dueling games mentioned in \cite{duel11}.

\textbf{Binary search tree duel.} In the Binary search tree duel, there is a set of elements $\Omega$ and a probability distribution $p$ over $\Omega$. Each player is going to construct a binary search tree containing the elements of $\Omega$.
Strategy $x$ beats strategy $y$ for element $\omega  \in \Omega$ if and only if the path from $\omega$ to the root in $x$ is shorter than the path from $\omega$ to the root in $y$.
Thus, the set of strategies $X$ is the set of all binary search trees with elements of $\Omega$, and $c(x, \omega)$ is defined to be the depth of element $\omega$ in strategy $x$.

\textbf{Ranking duel.} In the Ranking duel, there is a set of $m$ pages $\Omega$, and a probability distribution $p$ over $\Omega$, notifying the probability that each page is going to be searched. In the Ranking duel, two search engines compete against each other. Each search engine has to provide a permutation of these pages, and a player beats the other if page $\omega$ comes earlier in her permutation. Hence, set of strategies $X$ is all $m!$ permutations of the pages and for permutation $x=(x_1, x_2, \ldots, x_m)$ and page $\omega$, $c(x, \omega)=i$ iff $\omega=x_i$.


\subsection{Dueling games are Polynomially Separable}
Consider a dueling game in which each strategy $\hat{x}$ of player $A$ is an $n(A)$ dimensional point in Euclidean space. Let $S_A$ be the convex hull of these strategy points. Thus each point in $S_A$ is a mixed strategy of player $A$. Similarly define strategy $\hat{y}$, $n(B)$, and $S_B$ for player $B$. A dueling game is {\em bilinear} if utility function $h^A(\hat{x}, \hat{y})$ has the form $\hat{x}^tM\hat{y}$ where $M$ is an $n(A) \times n(B)$ matrix. Again for player $B$, we have $h^B(\hat{x}, \hat{y})=-h^A(\hat{x}, \hat{y})$.
\citeauthor{duel11} \cite{duel11} provide a method for finding an equilibrium of a class of bilinear games which is defined as follows:
\begin{definition*}
{Polynomially-representable bilinear dueling games:} A bilinear dueling game is polynomially representable if one can present the convex hull of strategies $S_A$ and $S_B$ with $m$ polynomial linear constraints, i.e. there are $m$ vectors $\{v_1, v_2, \ldots, v_m\}$ and $m$ real numbers $\{b_1, b_2, \ldots, b_m \}$ such that $S_A=\{ \hat{x} \in R^{n(A)} | \forall{i \in \{1, 2, \ldots, m\}}, v_i.\hat{x} \geq b_i \}$. Similarly $S_B=\{ \hat{y} \in R^{n(B)} | \forall{i \in \{1, 2, \ldots, m'\}}, v'_i.\hat{y} \geq b'_i \}$.
\end{definition*}



In the following theorem, we show that every polynomially representable bilinear duel is also  polynomially separable, as defined in Section \ref{sec:results.general}. This implies that the general reduction described in Section \ref{sec:results.general} can be used to solve polynomially representable bilinear duels as well. 

\begin{theorem*}
\label{thm:rep_sep}
Every polynomially-representable bilinear duel is polynomially separable.
\end{theorem*}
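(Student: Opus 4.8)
The plan is to verify, directly from the polynomial representations of $S_A$ and $S_B$, the two conditions defining polynomial separability from Section~\ref{sec:results.general}. Condition~1 is essentially immediate: a bilinear duel already comes with an embedding of pure strategies into $R^{n(A)}$ (resp.\ $R^{n(B)}$) under which the utility is $\hat x^t M \hat y$, so we take the transfer maps $\mathcal{G}, \mathcal{G}_B$ to be these given embeddings and keep the given matrix $M$. (Here $n(A), n(B)$ are polynomial in the input size, since otherwise even writing down a strategy would take superpolynomial space.) All the content is in Condition~2: for a given coefficient vector $\alpha$ and scalar $\alpha_0$, decide whether some pure strategy $\hat x$ satisfies $\alpha_0 + \sum_i \alpha_i \hat x_i \ge 0$, and symmetrically for player~$B$.

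I would reduce this to a single linear program. By polynomial representability, $S_A = \{\hat x \in R^{n(A)} : v_i \cdot \hat x \ge b_i,\ i=1,\dots,m\}$ with $m$ polynomial, so $V^* := \max\{\,\alpha\cdot\hat x : v_i\cdot\hat x \ge b_i\ \forall i\,\}$ can be computed in polynomial time. Since $S_A$ is the convex hull of the finitely many pure-strategy points it is a bounded polytope, so this maximum is finite and attained at a vertex $\hat x^\ast$ of $S_A$; and every vertex of a convex hull of a finite point set is one of those points, hence itself a pure strategy. The oracle answers ``yes'', with witness $\hat x^\ast$, precisely when $\alpha_0 + V^\ast \ge 0$, and ``no'' otherwise.

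Correctness rests on the chain of equalities $\max_{\hat x \text{ pure}} (\alpha_0 + \alpha\cdot\hat x) = \alpha_0 + \max_{\hat x \in S_A} \alpha\cdot\hat x = \alpha_0 + V^\ast$: the first holds because a linear functional on the polytope $S_A$ attains its maximum at a vertex, and the vertices of $S_A$ form a subset of the pure strategies, so maximizing over pure strategies and over all of $S_A$ yield the same value. Hence a pure strategy with $\alpha_0 + \alpha\cdot\hat x \ge 0$ exists iff $\alpha_0 + V^\ast \ge 0$, and when it does the LP's vertex solution $\hat x^\ast$ is one such strategy. Repeating the argument verbatim with the polynomial description of $S_B$ discharges Condition~2 for player~$B$, so the game is polynomially separable and Theorem~\ref{thm:bi-sep} then yields the equilibrium-computation consequence.

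I do not anticipate a real obstacle; the only points requiring care are (i) noting that the polyhedron cut out by the $m$ constraints is literally $S_A$, so that its LP-vertices coincide with the vertices of the convex hull and hence with pure strategies, and (ii) recording boundedness of $S_A$ so that the LP optimum is attained. If one only wants the decision version of Condition~2, one can instead test feasibility of $\{v_i\cdot\hat x \ge b_i\ \forall i\} \cap \{\alpha\cdot\hat x \ge -\alpha_0\}$, which is marginally cleaner; the optimization formulation above is preferable only because it also returns the pure-strategy witness.
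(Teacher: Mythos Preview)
Your proposal is correct and takes essentially the same approach as the paper: both reduce Condition~2 to a polynomial-size linear program over the description $\{v_i\cdot\hat x\ge b_i\}$ of $S_A$ (and likewise $S_B$). The paper uses the feasibility formulation you mention at the end, while you lead with the optimization version; you are in fact slightly more careful than the paper in explicitly arguing that the LP optimum is attained at a vertex and hence at a \emph{pure} strategy, which is what Condition~2 literally requires.
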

\begin{proof}
Let $S_A$ and $S_B$ be the set of strategy points for player $A$ and player $B$, respectively. We show that if $S_A$ can be specified with polynomial number of linear constraints, then one could design an algorithm that finds out whether there exists a point $\hat{x} \in S_A$  such that $\alpha_0 + \sum \alpha_i \hat{x}_i \geq 0$. Let $\{(v_1,b_1),(v_2,b_2),\ldots,(v_m,b_m)\}$ be the set of constraints which specify $S_A$ where $v_i$ is a vector of size $n(A)$ and $b_i$ is a real number. We need to check if there exists a point satisfying both constraints in $\{(v_1,b_1),(v_2,b_2),\ldots,(v_m,b_m)\}$ and $\alpha_0 + \sum \alpha_i \hat{x}_i \geq 0$. Recall that $m$ is polynomial. Since all these constraints are linear, we can solve this feasibility problem by a LP in polynomial time.
The same argument holds for $S_B$, therefore every polynomially representable bilinear duel is polynomially separable as well.
\end{proof}

\subsection{A Simplified Argument for Ranking and Binary Search Duels}\label{our_method}
In this section, we revisit some examples of dueling games, and show how to use Theorem \ref{thm:bi-sep} to establish that they can be solved in polynomial time. 
The application of Theorem \ref{thm:bi-sep} as two main steps.
First, it is necessary to express the duel as a bilinear game: that is, one must transfer every strategy of the players to a point in $n(A)$ and $n(B)$ dimensional space, such the outcome of the game can be determined for two given strategy points with an $n(A) \times n(B)$ matrix $M$. 
%
Second, one must implement an oracle that determines whether there exists a strategy point satisfying a given linear constraint.

To illustrate our method more precisely, we propose a polynomial-time algorithm for finding an NE for ranking and binary search tree dueling games in what follows.
\begin{theorem*}
\label{thm:rank}
There exists an algorithm that finds an NE of the Ranking duel in polynomial time.
\end{theorem*}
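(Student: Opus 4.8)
The plan is to show that the Ranking duel is a polynomially-representable bilinear duel in the sense of Section~\ref{sec:duel}, and then to invoke Theorem~\ref{thm:rep_sep} and Theorem~\ref{thm:bi-sep}. First I would fix the lifting: map each permutation $x=(x_1,\dots,x_m)$ of the $m$ pages to the $m\times m$ permutation matrix $\hat{x}$ with $\hat{x}_{\omega,i}=1$ iff page $\omega$ occupies position $i$ (that is, $x_i=\omega$), so that $n(A)=n(B)=m^2$. The convex hull $S_A$ of these points is exactly the Birkhoff polytope of doubly stochastic matrices, which is cut out by the $2m$ row/column-sum equalities $\sum_i \hat{x}_{\omega,i}=1$ and $\sum_\omega \hat{x}_{\omega,i}=1$ together with the $m^2$ nonnegativity constraints — a polynomial family of linear inequalities; the same description works for $S_B$.

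Next I would check that the utility is bilinear in this representation. For a fixed page $\omega$, the cost $c(x,\omega)$ is the position of $\omega$ in $x$ and hence depends only on the slice $\hat{x}_{\omega,\cdot}$, and $x$ beats $y$ on $\omega$ precisely when $c(x,\omega)<c(y,\omega)$. Therefore
\[
h^A(x,y)=\sum_{\omega} p(\omega)\Bigl(\sum_{i<j}\hat{x}_{\omega,i}\hat{y}_{\omega,j}-\sum_{i<j}\hat{x}_{\omega,j}\hat{y}_{\omega,i}\Bigr)=\hat{x}^t M\hat{y},
\]
where $M$ is the block-diagonal matrix whose $\bigl((\omega,i),(\omega,j)\bigr)$ entry equals $p(\omega)\sign(j-i)$. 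This bilinear form automatically gives the correct expected payoff on mixed strategies by linearity of expectation, so the Ranking duel is a bilinear duel, and by the previous paragraph it is polynomially representable.

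Having established this, Theorem~\ref{thm:rep_sep} gives that the game is polynomially separable, and Theorem~\ref{thm:bi-sep} then produces in polynomial time an equilibrium pair $(\hat{x}^*,\hat{y}^*)\in S_A\times S_B$ of the bilinear form. To transfer back to the original game I would decompose the doubly stochastic matrix $\hat{x}^*$ into a convex combination of polynomially many permutation matrices — via Birkhoff--von Neumann, or equivalently via the convex-decomposition procedure of~\cite{GLS81} invoked in Step~3 of the proof of Theorem~\ref{thm:blotto}, which needs only the linear-optimization oracle over $S_A$. Reading off the permutations (and their weights) yields the mixed strategy of player~$A$, and symmetrically for player~$B$.

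The only place that requires genuine care is the bilinearity step: one must be sure that position-of-page marginals are simultaneously expressive enough to (i) describe the strategy polytope with polynomially many constraints and (ii) write the payoff bilinearly. Both hold here only because ``beating'' in the Ranking duel is decided page by page using nothing but the position of that page; had the comparison depended on the relative order of a pair of pages, these $m^2$ marginals would not suffice and a larger (potentially super-polynomial) lifting would be forced. I expect this to be the main conceptual step, though it is short; the remaining pieces are standard. If one prefers to exhibit the optimization oracle of Theorem~\ref{thm:bi-sep} explicitly rather than appeal to Theorem~\ref{thm:rep_sep}, note that minimizing $c_0+\sum_{\omega,i}c_{\omega,i}\hat{x}_{\omega,i}$ over permutation matrices is a bipartite minimum-weight perfect-matching (assignment) problem, solvable in polynomial time.
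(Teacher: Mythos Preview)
Your proof is correct. The only difference from the paper is one of routing: the paper goes \emph{directly} to polynomial separability by exhibiting the optimization oracle --- given $\alpha_0+\sum_{i,j}\alpha_{i,j}\hat{x}_{i,j}\ge 0$, maximize $\sum_{i,j}\alpha_{i,j}\hat{x}_{i,j}$ over permutation matrices via maximum-weight bipartite matching --- and then invokes Theorem~\ref{thm:bi-sep}. Your primary route instead establishes polynomial \emph{representability} via the Birkhoff polytope and then passes through Theorem~\ref{thm:rep_sep}; this is closer in spirit to the method of Immorlica et al.\ that Section~\ref{our_method} is meant to streamline. You do supply the paper's direct matching-oracle argument in your final paragraph, so the two proofs ultimately coincide; the paper simply leads with that step because it is the point being illustrated.
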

\begin{proof}
We transfer each strategy $x$ of player $A$  to point $\hat{x}$
in $\mathbb{R}^{m^2}$ 
where $\hat{x}_{i,j}$ denotes the probability that  $\omega_{i}$ stands at position $j$ in $x$. The outcome of the game is determined by the following equation
\begin{equation*}
\sum_{i = 1}^m \sum_{j = 1}^m \sum_{k = j+1}^m \hat{x}_{i,j} \hat{y}_{i,k} p(\omega_i) -
\sum_{i = 1}^m \sum_{j = 1}^m \sum_{k = 1}^{j-1} \hat{x}_{i,j} \hat{y}_{i,k} p(\omega_i)
\end{equation*}
Where $p(\omega_i)$ denotes the probability that $\omega_{i}$ is searched.

Here, we need to provide an oracle that determines whether there exists a strategy point for a player that satisfies a given linear constraint $\alpha_0 + \sum \alpha_{i,j} \hat{x}_{i,j} \geq 0$. Since each pure strategy is a matching between pages and indices, we can find the pure strategy that maximizes $\sum \alpha_{i,j} \hat{x}_{i,j}$ with the maximum weighted matching algorithm. Therefore, this query can be answered in polynomial time. Since we have reduced this game to a polynomially-separable bilinear duel,  we can find a Nash equilibrium in polynomial time.
\end{proof}

\begin{theorem*}
There exists an algorithm that finds an NE of the Binary search tree duel in polynomial time.
\end{theorem*}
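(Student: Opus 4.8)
The plan is to apply the general reduction of Theorem~\ref{thm:bi-sep}, mirroring the proof given for the Ranking duel: first exhibit a payoff-equivalent bilinear form of polynomial dimension (condition~1 of polynomial separability), and then implement the required linear optimization oracle over pure strategies by dynamic programming (condition~2).

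Let $\Omega=\{\omega_1,\dots,\omega_m\}$ be the ordered set of elements and $p$ the search distribution. Any binary search tree on $m$ keys has depth at most $m$, so I would transfer each strategy $x$ of player $A$ to a point $\hat{x}\in\mathbb{R}^{m^2}$, where $\hat{x}_{i,d}$ is the probability that $\omega_i$ lies at depth $d$ in $x$ (say the root is at depth $1$, so $d\in\{1,\dots,m\}$); define $\hat{y}$ for player $B$ analogously, and let $S_A,S_B$ be the convex hulls of the pure-strategy points. Since a player beats the other on $\omega_i$ exactly when $\omega_i$ is strictly shallower in her tree, and the two players' randomizations are independent, the payoff is
\begin{equation*}
h^A(\hat{x},\hat{y}) \;=\; \sum_{i=1}^m p(\omega_i)\left( \sum_{d=1}^{m}\sum_{d'=d+1}^{m} \hat{x}_{i,d}\hat{y}_{i,d'} \;-\; \sum_{d=1}^{m}\sum_{d'=1}^{d-1}\hat{x}_{i,d}\hat{y}_{i,d'} \right),
\end{equation*}
which is bilinear in $\hat{x}$ and $\hat{y}$ and so supplies the matrix $M$ of condition~1.

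For condition~2, given $\alpha_0$ and coefficients $(\alpha_{i,d})$, I must decide whether some pure strategy $\hat{x}$ satisfies $\alpha_0+\sum_{i,d}\alpha_{i,d}\hat{x}_{i,d}\ge 0$; equivalently, it suffices to find a binary search tree on $\omega_1,\dots,\omega_m$ maximizing $\sum_{i=1}^m \alpha_{i,\,\mathrm{depth}(\omega_i)}$, then add $\alpha_0$ and compare to $0$. This is a variant of the classical optimal-BST recurrence: since the keys are ordered, every subtree spans a contiguous interval of keys, so let $D[l,r,\delta]$ be the maximum of $\sum_{i=l}^{r}\alpha_{i,\,\mathrm{depth}(\omega_i)}$ over all BSTs on $\omega_l,\dots,\omega_r$ whose root is placed at depth $\delta$, with $D[l,l-1,\delta]=0$. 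Then
\begin{equation*}
D[l,r,\delta] \;=\; \max_{l\le k\le r}\Big( \alpha_{k,\delta} + D[l,k-1,\delta+1] + D[k+1,r,\delta+1] \Big),
\end{equation*}
and the optimum sought is $D[1,m,1]$. There are $O(m^3)$ states, each evaluated in $O(m)$ time, so the oracle runs in polynomial time; the identical DP serves player $B$. Having verified both conditions, the BST duel is polynomially separable, and Theorem~\ref{thm:bi-sep} delivers a polynomial-time algorithm for a Nash equilibrium.

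The only points requiring care are the depth bound — one must observe that a BST on $m$ keys never exceeds depth $m$, so the $m^2$-dimensional lifting is lossless — and the translation between a linear objective in the lifted coordinates and the per-key cost $\alpha_{i,\mathrm{depth}(\omega_i)}$ that the DP optimizes; the dynamic program itself is a routine extension of the standard optimal binary search tree algorithm, so I expect no genuine obstacle beyond this bookkeeping.
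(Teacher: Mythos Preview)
Your proposal is correct and follows essentially the same approach as the paper: the same $m^2$-dimensional depth-indicator lifting, the same bilinear payoff expression, and the same interval-plus-depth dynamic program $D[l,r,\delta]$ over contiguous key ranges to implement the optimization oracle. Your write-up is in fact slightly more careful than the paper's, in that you explicitly note the depth bound justifying the dimension and state the running time of the DP.
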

\begin{proof}
Here we map each strategy $x$ to the point $\hat{x} = \langle \hat{x}_{1,1}, \hat{x}_{1,2}, \ldots, \hat{x}_{1,m}, \hat{x}_{2,1},\hat{x}_{2,2},\ldots,\hat{x}_{m,m} \rangle \in \mathbb{R}^{m^2}$ where $\hat{x}_{i,j}$ denotes the probability that  depth of the $i$-th element is equal to $j$. Therefore, the payoff of the game for strategies $\hat{x}$ and $\hat{y}$ is equal to
 \begin{equation*}
\sum_{i = 1}^m \sum_{j = 1}^m \sum_{k = j+1}^m \hat{x}_{i,j} \hat{y}_{i,k} p(\omega_i) -
\sum_{i = 1}^m \sum_{j = 1}^m \sum_{k = 1}^{j-1} \hat{x}_{i,j} \hat{y}_{i,k} p(\omega_i)
\end{equation*}
Where $p(\omega_i)$ denotes the probability that $i$-th element is searched.

Next, we need to provide an oracle that determines whether there exists a strategy point for a player that satisfies a given linear constraint $\alpha_0 + \sum \alpha_{i,j} \hat{x}_{i,j} \geq 0$. To do this, we find the binary search tree that maximizes $\sum \alpha_{i,j} \hat{x}_{i,j}$. This can be done with a dynamic program. Let $D(a,b,k)$ denote the maximum value of $\sum_{i=a}^b \alpha_{i,j} \hat{x}_{i,j}$ for a subtree that its root is at depth $k$. $D(a,b,k)$ can be formulated as

$$D(a,b,k) =
\begin{cases}
a<b, & \min_{a \leq c \leq b}\{ D(a,c-1,k+1) + D(c+1,b,k+1) + \alpha_{c,k} \}\\
a=b, & \alpha_{a,k}
\end{cases}$$
Therefore, we can find the Binary search tree which maximizes $\sum \alpha_{i,j} \hat{x}_{i,j}$ in polynomial time and see if it meets the constraint. Since we have reduced this game to a polynomially-separable bilinear duel,  we can find an NE in polynomial time.
\end{proof}

\subsection{Matching duel}
\label{subsec:matching}
In matching duel we are given a weighted graph $G=(V, E, W)$ which is not necessarily bipartite. In a matching duel each pure strategy of players is a perfect matching, set of possible situations $\Omega$ is the same as the set of nodes in $G$, and probability distribution $p$ over $\Omega$ determines the probability of selection of each node.
In this game, strategy $x$ beats strategy $y$ for element $\omega \in \Omega$ if $\omega$ is matched to a higher weighted edge in strategy $x$ than strategy $y$. 

The matching duel may find its application in a competition between websites that try to match people according to their desire. In this competition the website that suggest a better match for each user will get that user, and the goal of each website is to maximize the number of its users. 
We mention that the ranking duel is a special case of the matching duel, when $G$ is a complete bipartite graph with $n$ nodes on each side, in which the first part denotes the web pages and the second part denotes the positions in the ranking. Thus, the weight of the edge between page $i$ and rank $j$ is equal to $j$.

First, we describe how our method can solve this game and then we show the method of Immorlica et al. \cite{duel11} cannot be applied to find an NE of the matching duel.  
\begin{theorem*}
There exists an algorithm that finds an NE of the matching duel in polynomial time.
\end{theorem*}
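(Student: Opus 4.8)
The plan is to apply the general reduction of Theorem~\ref{thm:bi-sep}: I will exhibit a payoff‑equivalent bilinear representation of the matching duel and show that the single linear‑optimization oracle it requires over pure strategies can be implemented in polynomial time. By Theorem~\ref{thm:bi-sep} (equivalently, the discussion of polynomially separable games), this immediately yields a polynomial‑time algorithm for an NE.

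First, the bilinear representation. Fix $G=(V,E,W)$ and assume it admits a perfect matching (the oracle below detects otherwise). For each node $v_i$ let $w_{i,1} > w_{i,2} > \cdots > w_{i,k_i}$ be the distinct weights of edges incident to $v_i$; since $\sum_i k_i \le |V|\cdot|E|$, the ambient dimension $n(A)$ is polynomial. Map a pure strategy (perfect matching) $x$ to the point $\hat x$ with $\hat x_{i,j}=1$ iff $v_i$ is matched by an edge of weight $w_{i,j}$ in $x$, and extend by linearity so that for a mixed strategy $\hat x_{i,j}$ is the probability that $v_i$'s matched edge has weight $w_{i,j}$; let $S_A$ (resp.\ $S_B$) be the convex hull of the pure‑strategy points. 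Since $x$ beats $y$ at $\omega=v_i$ exactly when the weight matched to $v_i$ in $x$ exceeds that in $y$, the payoff is
\[
h^A(\hat x,\hat y)=\sum_i p(\omega_i)\Bigl(\sum_j\sum_{k:\,w_{i,k}<w_{i,j}}\hat x_{i,j}\hat y_{i,k}-\sum_j\sum_{k:\,w_{i,k}>w_{i,j}}\hat x_{i,j}\hat y_{i,k}\Bigr),
\]
which is of the form $\hat x^t M\hat y$. (Taking $G$ complete bipartite with edge $(i,j)$ of weight $j$ recovers the ranking duel, consistent with the earlier remark.)

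Second, the separation oracle. Given $\alpha_0$ and coefficients $\alpha_{i,j}$, I must decide whether some pure strategy satisfies $\alpha_0+\sum_{i,j}\alpha_{i,j}\hat x_{i,j}\ge 0$; per the framework it suffices to maximize $\sum_{i,j}\alpha_{i,j}\hat x_{i,j}$ over perfect matchings and compare with $-\alpha_0$. For a matching $x$ this objective equals $\sum_{(i,i')\in x}\bigl(\alpha_{i,\rho(i,w)}+\alpha_{i',\rho(i',w)}\bigr)$, where $w=W(i,i')$ and $\rho(v,w)$ is the rank of weight $w$ among the distinct weights at $v$. Hence, assigning each edge $e=(i,i')$ the auxiliary weight $\tilde w(e)=\alpha_{i,\rho(i,W(e))}+\alpha_{i',\rho(i',W(e))}$, a maximizer is precisely a maximum‑$\tilde w$‑weight perfect matching, computable in polynomial time by Edmonds' blossom algorithm (which also reports infeasibility). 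Thus the matching duel is polynomially separable, and Theorem~\ref{thm:bi-sep} finishes the proof; the convex‑decomposition step there recovers a mixed strategy supported on genuine perfect matchings, since the extreme points of $S_A$ form a subset of the pure‑strategy points, all of which have $\{0,1\}$‑coordinates.

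The main obstacle — and the reason this game is interesting — is precisely that the matching polytope has no compact linear description, so the polynomially‑representable route of Immorlica et al.\ (the hypothesis of Theorem~\ref{thm:rep_sep}) is unavailable; the content of the proof is that the separation‑to‑optimization machinery underlying Theorem~\ref{thm:bi-sep} only ever needs polynomial‑time \emph{linear optimization} over pure strategies, which maximum‑weight perfect matching supplies. Consequently the only real work is choosing the per‑node weight‑class coordinates so that the bilinear form is correct and checking that linear optimization in these coordinates reduces cleanly to an instance of maximum‑weight perfect matching; everything else is a direct invocation of the earlier theorem.
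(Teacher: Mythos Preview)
Your proof is correct and follows the same high-level strategy as the paper: exhibit a bilinear representation, reduce the separation oracle to maximum-weight perfect matching, and invoke Theorem~\ref{thm:bi-sep}. The one difference worth noting is your choice of coordinates. The paper maps a matching $x$ directly to its edge-indicator vector $\hat x\in\{0,1\}^{|E|}$, so the bilinear form is simply
\[
\sum_{\omega}\sum_{e_1\in N(\omega)}\sum_{e_2\in N(\omega)} p(\omega)\,\hat x_{e_1}\hat y_{e_2}\,\mathrm{sign}(w(e_1)-w(e_2)),
\]
and the oracle is literally ``maximize $\sum_e \alpha_e \hat x_e$ over perfect matchings,'' i.e., a max-weight perfect matching with edge weights $\alpha_e$. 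Your per-node weight-rank coordinates also work, but force you to push the linear objective back onto edges via $\tilde w(e)=\alpha_{i,\rho(i,W(e))}+\alpha_{i',\rho(i',W(e))}$ before you can call the matching algorithm; the paper's representation avoids this detour. Both routes buy exactly the same conclusion, and both correctly sidestep the lack of a compact description of the matching polytope by relying only on polynomial-time linear optimization over it.
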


\begin{proof}
We transfer every strategy $x$ to a point in $|E|$-dimensional Euclidean space $\hat{x}$, where $\hat{x}_e$ denotes the probability that $x$ chooses $e$ in the matching. 
Thus, the payoff function is bilinear and is as follows:
$$\sum_{\omega \in \Omega}\sum_{e_1 \in N(\omega)}\sum_{e_2 \in N(\omega)} [p(\omega) \hat{x}_{e_1} \hat{y}_{e_2} \times \text{sign}(w(e_1)-w(e_2))]$$
where $N(\omega)$ is the set of edges adjacent to $\omega$\footnote{Note that $\text{sign}(w)$ is $1$, $-1$, and $0$ if $w$ is positive, negative, and zero respectively.}.
Next, we need to prove that the game is polynomially separable. That is, 
given a vector $\alpha$ and a real number $\alpha_0$, we are to find out whether there is a strategy $\hat{x}$ such that $\alpha_0 + \alpha . \hat{x} \geq 0$.
This problem can be solved by a maximum weighted prefect matching, where the graph is $G=(V, E, W)$ and $w(e)=\alpha_e$. Thus our framework can be used to find an NE of the matching duel in polynomial time.
\end{proof}
 
Note that Rothvoss \cite{matching14} showed that
the feasible strategy polytope (the perfect matching polytope) has exponentially many facets. Therefore, the prior approach represented in the work of Immorlica et al. \cite{duel11} is not applicable to the matching duel. This example shows that our framework nontrivially generalizes the method of Immorlica et al. \cite{duel11} and completes the presentation of our simpler and more powerful tool for solving bilinear duels.

{
\bibliography{blotto-abrv}
\bibliographystyle{acmsmall}
}

\appendix
\section{Colonel Blotto}\label{sec:blotto}

In this appendix we provide a more detailed desciption of our polynomial time algorithm for finding a Nash equilibrium (NE) of the Colonel Blotto game. Hart showed that the Colonel Lotto game is equivalent to a special case of the Colonel Blotto game \cite{H07}. Therefore, our algorithm could be used to find a NE of the Colonel Lotto game as well.

In section \ref{sec:transfer}, we present a procedure for mapping strategies of both players to a new space. The new space maintains the important information of each strategy and helps us to find a Nash equilibrium of the game.
Next in section \ref{sec:membership}, we show how we check the feasibility of the membership constraint in the new space. Moreover, we present a polynomial-time algorithm for determining an equilibrium of the Colonel Blotto game in the new space. At last in section \ref{sec:ret}, we present an algorithm which transfers a Nash equilibrium from the new space to the original space.

\subsection{Transferring to a new space}
\label{sec:transfer}
In this subsection we define a new notation for describing the strategies of players and discuss about the properties of the transferred strategies. Let $\na = k(a+1)$, and $\x$ be a strategy of player $A$. We define the function $\functiona$ in the following way:
$\functiona(\x) = \xh$ 
where
 $\xh$ is a point in $\Rna$ 
such that $\hat{x}_{(i-1)(a+1)+j+1}$ is equal to the probability that strategy $\x$ puts $j$ units in the $i$-th battlefield, for $1 \leq i \leq k$ and $0 \leq j \leq a$. For simplicity we may represent $\hat{x}_{(i-1)(a+1)+j+1}$ by $\hat{x}_{i,j}$.
We define $\nb$ and $\functionb$ similarly for player $B$. Let $n = max\{\na,\nb\}$.
Note that, $\functiona$ maps each strategy of the first player to exactly one point in $\Rna$. However, each point in $\Rna$ may be mapped to zero, one, or more than one strategies. Let us recall the definition of $\strategyset$ which is the set of all strategies of player $A$, and the definition of  $\sets$ which is
$$\sets = \{\xh \in [0,1]^{\na}| \exists \x \in \strategyset, \functiona(\x) = \xh \}.$$
In order to design an algorithm for checking the membership constraint, we first demonstrate in Lemma \ref{polyhedron} that set $\sets$ is a polyhedron with an exponential number of vertices and facets. This lemma is a more formal statement of Lemma \ref{polyhedron_new}.
%
Then we prove in Lemma \ref{lem:poly_constraint} that set $\sets$ can be formulated with $\cpx{2^{\poly(n)}}$ number of constraints. These results allow us to leverage the Ellipsoid method for checking  the membership constraint \cite{lp1}.
%
\begin{lemma}
\label{polyhedron}
Set $\sets$ forms a convex polyhedron with no more than ${\na}^{\na}$ vertices
and no more than $\na^{(\na^2)}$ facets.
\end{lemma}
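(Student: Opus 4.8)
The plan is to recognize $\sets$ as the convex hull of a finite, explicitly bounded point set, and then read off the vertex and facet counts by elementary combinatorics. Concretely, let $\Ia = \{\functiona(\px) : \px \in \X\} \subseteq \{0,1\}^{\na}$ be the set of images of pure strategies. I claim $\sets = \operatorname{conv}(\Ia)$. Indeed, $\functiona$ is affine in the mixing weights: if $\x \in \strategyset$ plays pure strategy $\px$ with probability $p(\px)$, then the coordinate $\hat{x}_{i,j}$ equals $\sum_{\px : x_i = j} p(\px)$, so $\functiona(\x) = \sum_{\px \in \X} p(\px)\,\functiona(\px)$, a convex combination of points of $\Ia$. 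Conversely, any convex combination $\sum_i \alpha_i \xh_i$ with $\xh_i \in \Ia$ is $\functiona$ of the mixed strategy that plays, with probability $\alpha_i$, some pure strategy $\px_i$ with $\functiona(\px_i) = \xh_i$ (one exists by definition of $\Ia$). Since $\Ia$ is finite, $\sets$ is a bounded polyhedron, i.e.\ a polytope; this is the precise form of the earlier informal Lemma~\ref{polyhedron_new}.

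For the vertex bound, every vertex of $\operatorname{conv}(\Ia)$ lies in $\Ia$, so the number of vertices is at most $|\Ia| \le |\X|$. A pure strategy of player $A$ is obtained by choosing $x_i \in \{0,1,\dots,a\}$ for each of the $k$ battlefields (the constraint $\sum_i x_i = a$ only removes choices), so $|\X| \le (a+1)^k$. Since $a+1 \le k(a+1) = \na$ and $k \le \na$, this gives $|\X| \le \na^{\na}$, hence at most $\na^{\na}$ vertices, as claimed.

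For the facet bound, I would invoke the standard fact that a polytope $P \subseteq \R^{\na}$ of dimension $d'$ with $V$ vertices has at most $\binom{V}{d'} \le V^{d'} \le V^{\na}$ facets: each facet is the convex hull of a subset of the vertices of $P$, its affine hull is a $(d'-1)$-flat spanned by some $d'$ affinely independent vertices lying on it, and distinct facets have distinct affine hulls, so assigning to each facet such a spanning $d'$-subset is injective into $\binom{\operatorname{vert}(P)}{d'}$. Applying this to $\sets$ with $V \le \na^{\na}$ and $d' \le \na$ yields at most $(\na^{\na})^{\na} = \na^{\na^2}$ facets.

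\textbf{Where the work is.} There is no deep obstacle; the only points needing care are (i) verifying both inclusions in the identity $\sets = \operatorname{conv}(\Ia)$, especially that $\functiona$ really is affine in the mixing distribution, and (ii) stating the vertex-to-facet counting bound so that it also covers the case where $\sets$ is not full-dimensional (which the bound $\binom{V}{d'} \le V^{\na}$ handles automatically). Everything else reduces to the crude arithmetic inequalities $(a+1)^k \le \na^{\na}$ and $V^{\na} \le \na^{\na^2}$.
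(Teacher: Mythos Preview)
Your proposal is correct and follows essentially the same route as the paper: identify $\sets$ with $\operatorname{conv}(\Ia)$ via the affinity of $\functiona$ in the mixing weights, bound the vertices by $|\X|\le (a+1)^k\le \na^{\na}$, and bound the facets by the number of $d$-subsets of vertices needed to span a facet's affine hull, giving $\binom{V}{d}\le V^{\na}\le \na^{\na^2}$. The paper organizes the first step into two auxiliary lemmas (affinity of $\functiona$ and convexity of $\sets$) but the content is the same; if anything, your write-up is slightly cleaner in making both inclusions of $\sets=\operatorname{conv}(\Ia)$ explicit and in noting that the facet bound is insensitive to the actual dimension $d'\le \na$.
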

See Appendix \ref{app:omitted} for the proof of Lemma \ref{polyhedron}.

\begin{lemma}
\label{lem:poly_constraint}
Set $\sets$ can be formulated with $\cpx{2^{\poly(n)}}$ number of constraints
\end{lemma}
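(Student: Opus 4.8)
The plan is to read this bound off directly from Lemma~\ref{polyhedron}. By that lemma $\sets$ is a convex polyhedron, and it is contained in $[0,1]^{\na}$, hence bounded; so $\sets$ is a polytope. Every bounded polyhedron equals the intersection of the closed halfspaces supporting its facets with its affine hull, so the number of linear constraints needed to describe $\sets$ is at most (number of facets of $\sets$) $+$ (number of linear equalities needed to describe $\mathrm{aff}(\sets)$).

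The steps would be as follows. First, apply Lemma~\ref{polyhedron} to bound the number of facets by $\na^{(\na^2)}$; each facet contributes one inequality. Second, observe that $\sets$ need not be full-dimensional in $\Rna$: for every battlefield $i$ the coordinates of any $\xh \in \sets$ obey the normalization identity $\sum_{j=0}^{a}\hat{x}_{i,j}=1$, since these coordinates form the marginal distribution of the number of troops placed on battlefield $i$. Consequently $\mathrm{aff}(\sets)$ has codimension at most $\na$ in $\Rna$ and is describable by at most $\na$ linear equalities, equivalently $2\na$ linear inequalities. Taking the union of the facet inequalities with this system yields a description of $\sets$ by at most $\na^{(\na^2)} + 2\na$ linear constraints. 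Third, convert the count: since $n=\max\{\na,\nb\}\ge\na$, we have $\na^{(\na^2)} + 2\na \le n^{(n^2)} + 2n$, and $n^{(n^2)} = 2^{n^2\log_2 n}$ with $n^2\log_2 n = \poly(n)$, so the total is $\cpx{2^{\poly(n)}}$ as claimed.

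I do not anticipate any genuine difficulty here: once Lemma~\ref{polyhedron} is available the argument is bookkeeping. The only subtlety worth flagging is that the facet count is not by itself a complete constraint description, because $\sets$ is lower-dimensional; this is handled by appending the $\cpx{\na}$ equalities that cut out $\mathrm{aff}(\sets)$, which does not affect the order of the final bound.
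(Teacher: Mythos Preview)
Your proposal is correct and follows essentially the same approach as the paper: bound the facet count via Lemma~\ref{polyhedron}, append at most $\na$ linear equalities cutting out the affine hull, and sum. The paper spells out the ``polytope $=$ affine hull $\cap$ facet-defining halfspaces'' step through a chain of auxiliary lemmas (choosing the facet hyperplanes orthogonal to the affine-hull constraints and checking they suffice), whereas you invoke it as a standard polyhedral fact; the content is the same.
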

See Appendix \ref{app:omitted} for the proof of Lemma \ref{lem:poly_constraint}.

\subsection{Checking the membership constraint and the payoff constraints}
\label{sec:membership}
As we briefly described in Subsection \ref{our_results}, the final goal of this section is to determine a NE of the Colonel Blotto game. To do this, we provide linear program \ref{Prog:3} and show that this LP can be solved in polynomial time. Since we use the Ellipsoid method to solve the LP, we have to implement an oracle function that reports a violating constraint for any infeasible solution. 
In this subsection we focus on the membership constraint of LP \ref{Prog:3} and show that for any infeasible point $\xh$ which violates membership constraint, a polynomial-time algorithm finds a hyperplane that separates $\xh$ from $\sets$.

\begin{lemma}
\label{lem:membership}
There exists a polynomial time algorithm that gets a point $\xh$  as input, and either finds a hyperplane that separates $\xh$ from $\sets$, or reports that no such hyperplane exists.
\end{lemma}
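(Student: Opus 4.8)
The plan is to establish the separation oracle for $\sets$ by invoking the equivalence of separation and optimization \cite{GLS81} in the direction that converts an optimization oracle into a separation oracle. The key observation is that we already have, by Lemma \ref{lem:minpure}, the ability to optimize any linear function $c_0 + \sum_i c_i \hat{x}_i$ over the vertex set $\Ia$ of $\sets$ (the images of pure strategies) in polynomial time via dynamic programming. Since $\sets$ is the convex hull of $\Ia$ (it is the set of marginal vectors realizable by mixed strategies, and mixed strategies are convex combinations of pure ones), maximizing a linear function over $\sets$ is the same as maximizing it over $\Ia$. So I would first argue this convex-hull identity carefully: every $\xh \in \sets$ arises as $\functiona(\x)$ for some mixed strategy $\x = \sum_j p(\px_j)\,\px_j$, and by linearity of $\functiona$ on the simplex of mixed strategies, $\xh = \sum_j p(\px_j)\functiona(\px_j)$ is a convex combination of points of $\Ia$; conversely any such convex combination is realized by the corresponding mixed strategy.

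Next I would invoke the Grötschel–Lovász–Schrijver machinery. Lemma \ref{polyhedron} (equivalently Lemma \ref{polyhedron_new}) tells us $\sets$ is a polytope living in $[0,1]^{\na}$, so it is bounded, and its vertices are $0/1$-vectors (actually $\Ia \subseteq \{0,1\}^{\na}$), which gives us the polynomially-bounded encoding length / well-described-polyhedron hypotheses that the GLS theorems require; in particular $\sets$ is full of faces but has facet complexity polynomial in $n$ in the sense needed (Lemma \ref{lem:poly_constraint} bounds the constraint count, which together with the $0/1$ structure controls the vertex and facet complexity). With these preconditions in place, the weak/strong optimization oracle supplied by Lemma \ref{lem:minpure} yields, in polynomial time, a weak/strong separation oracle: given $\xh$, either assert $\xh \in \sets$ or produce a vector $\alpha$ with $\alpha \cdot \hat{x} > \alpha \cdot \hat{z}$ for all $\hat{z} \in \sets$ — i.e. a separating hyperplane. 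The "reports that no such hyperplane exists" alternative is exactly the oracle certifying membership $\xh \in \sets$.

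The step I expect to be the main obstacle is the bookkeeping around exactness versus approximation in the GLS reduction: the clean statement "optimization $\Rightarrow$ separation" is really about *weak* optimization and *weak* separation over rational polyhedra with bounded facet complexity, and one must check that our setting lets us upgrade to the exact (strong) version. This is fine here because $\sets$ is a $0/1$-polytope of dimension at most $\na = k(a+1)$, so all vertices have encoding length polynomial in the input size $\Theta(kab)$, and the optimization oracle of Lemma \ref{lem:minpure} is exact (it returns an actual optimal pure strategy, not an approximate one); standard rounding of the ellipsoid output to the nearest vertex then gives exact separation. I would also note for completeness that the same argument applies verbatim to $S_B$ with $\nb = k(b+1)$ and the corresponding instance of Lemma \ref{lem:minpure}, which is what the subsequent LP \ref{Prog:3} solution (Step 2) and the decomposition in Step 3 require.
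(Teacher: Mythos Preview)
Your proposal is correct and conceptually equivalent to the paper's argument, but the presentation differs. The paper does not invoke the GLS optimization-to-separation equivalence as a black box; instead it explicitly constructs the feasibility LP whose variables are the hyperplane coefficients $\alpha_0,\ldots,\alpha_{\na}$, with one constraint $\alpha_0+\sum_j\alpha_j\hat v_j\ge 0$ per pure-strategy image $\hat v\in I_A$ (together with the strict constraint $\alpha_0+\sum_j\alpha_j\hat x_j<0$), and solves that LP via the Ellipsoid method. The separation oracle for \emph{that} LP is exactly the optimization routine of Lemma~\ref{lem:minpure}: find the pure strategy minimizing the linear form and check the sign. In other words, the paper re-derives the relevant direction of GLS in situ, while you cite it. Your route is shorter and more modular; the paper's is self-contained and sidesteps the need to verify the ``well-described polyhedron'' hypotheses.

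One small correction: your appeal to Lemma~\ref{lem:poly_constraint} is misplaced. That lemma only bounds the \emph{number} of facets by $\cpx{2^{\poly(n)}}$, which says nothing about polynomial facet \emph{encoding} complexity. What you actually need---and already have---is that $\sets$ is the convex hull of $0/1$-points in $\mathbb{R}^{\na}$; the standard Hadamard/Cramer bound then gives polynomial encoding length for every vertex and facet, which is what the GLS machinery requires.
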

\begin{proof}
Let $\ppointa{\xh}{\hat{x}}$. Consider the following LP, which we will refer to as LP \ref{Prog:separating_plane}:
\begin{eqnarray}
\label{Prog:separating_plane}
& & \alpha_0 + \sum_{j=1}^{\na} \alpha_j\hat{x}_j < 0 \label{cn1}\\
& &  \alpha_0 + \sum_{j=1}^{\na} \alpha_j\hat{v_j} \geq 0 \label{cn2}\hspace*{2.12cm} \forall  \hat{v} \in \vs
\end{eqnarray}
The variables of the linear program are $\alpha_0,\alpha_1,\ldots,\alpha_{\na}$, which describe the following hyperplane:
 $$\alpha_0+\sum_{j=1}^{\na}\alpha_j\hat{x}'_j = 0.$$
Constraints \ref{cn1} and \ref{cn2} force LP \ref{Prog:separating_plane} to find a hyperplane that separates $\xh$ from $\sets$. 
Hence, LP \ref{Prog:separating_plane} finds a separating hyperplane if and only if $\xh$ is not in $\sets$.

\textit{Hyperplane separating oracle} is an oracle that gets variables $\alpha_0,\alpha_1,\ldots,\alpha_{\na}$ as input and finds if constraints \ref{cn2} are satisfied. Moreover, if some constraints are violated it returns at least one of the violated constraints. In Section \ref{oracles} we describe a polynomial-time algorithm for the hyperplane separating oracle. Constraint \ref{cn1} also can be checked in polynomial time.
Our LP has $\na+1$ variables and $|\vs|+1$ constraints which is $\cpx{2^{\poly(n)}}$ by Lemma \ref{polyhedron}. Thus we can solve this LP in polynomial time with the Ellipsoid method \cite{lp1}. 
\end{proof}
In the next step, we present an algorithm to determine the outcome of the game when both players play optimally. We say $\x$ is an optimal strategy of player $A$, if it maximizes the guaranteed payoff of player $A$. By the MinMax Theorem, in a NE of a zero-sum game players play optimally \cite{minmax}. Therefore, it is enough to find an optimal strategy of both players.
Before we discuss the algorithm, we show the payoff $h_\blotto^A(\x,\y)$ can be determined by $\functiona(\x)$ and $\functionb(\y)$. Recall the definition of $h_\blotto^A(\xh,\yh)$ which is  $$ h_\blotto^A(\xh,\yh) = \sum_{i=1}^k \sum_{\alpha = 0}^a \sum_{\beta = 0}^b \hat{x}_{i,\alpha}\hat{y}_{i,\beta}u_i^A(\alpha,\beta).$$
\begin{lemma}
\label{formuli}
Let $\x \in \strategyset$ and $\y \in \strategysetb$ be two mixed strategies for player $A$ and $B$ respectively. Let $\xh = \functiona(\x)$ and $\yh = \functionb(\y)$. The outcome of the game is  determined by $h_\blotto^A(\xh,\yh)$. 
\end{lemma}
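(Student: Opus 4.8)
The plan is to expand the expected payoff of the Colonel Blotto game under mixed strategies and exploit the fact that the payoff $h^A_{\mathcal{B}}$ is additive across battlefields; linearity of expectation then forces the value to depend only on the per-battlefield marginals recorded by $\functiona$ and $\functionb$, which is exactly what the lemma asserts.

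First I would write, for $\x = p$ and $\y = q$,
$$\E_{\px\sim p,\ \py\sim q}\!\left[h^A_{\mathcal{B}}(\px,\py)\right] \;=\; \sum_{\px\in\X}\sum_{\py\in\Y} p(\px)\,q(\py)\,h^A_{\mathcal{B}}(\px,\py),$$
then substitute $h^A_{\mathcal{B}}(\px,\py)=\sum_{i=1}^k u^A_i(x_i,y_i)$ and interchange the (finite) summations to obtain $\sum_{i=1}^k \sum_{\px,\py} p(\px)q(\py)\,u^A_i(x_i,y_i)$. Next, for each fixed $i$, I would collect terms according to the number of troops the two partitions assign to battlefield $i$: grouping by $x_i=\alpha$ and $y_i=\beta$, the $i$-th summand becomes $\sum_{\alpha=0}^a\sum_{\beta=0}^b \big(\sum_{\px:\,x_i=\alpha} p(\px)\big)\big(\sum_{\py:\,y_i=\beta} q(\py)\big)\,u^A_i(\alpha,\beta)$, where factoring the product of the two inner sums is valid because the players randomize independently. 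By the very definition of $\functiona$ and $\functionb$ we have $\hat{x}_{i,\alpha}=\sum_{\px:\,x_i=\alpha} p(\px)$ and $\hat{y}_{i,\beta}=\sum_{\py:\,y_i=\beta} q(\py)$, so summing over $i$ yields exactly $\sum_{i=1}^k\sum_{\alpha=0}^a\sum_{\beta=0}^b \hat{x}_{i,\alpha}\,\hat{y}_{i,\beta}\,u^A_i(\alpha,\beta)=h^A_{\mathcal{B}}(\xh,\yh)$, which is the claimed identity. The corresponding statement for player $B$ then follows at once from the zero-sum relation $u^B_i=-u^A_i$, giving $h^B_{\mathcal{B}}(\xh,\yh)=-h^A_{\mathcal{B}}(\xh,\yh)$.

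The only subtlety — and the reason the lemma is worth stating — is that $\functiona$ is many-to-one: many distinct mixed strategies over $k$-partitions project to the same marginal vector $\xh$. The computation above is precisely the certificate that this projection is payoff-preserving: because $h^A_{\mathcal{B}}$ has no interaction terms between battlefields, any correlation in a player's randomized partition across battlefields is invisible to the expected payoff, so the coordinates $\hat{x}_{i,\alpha}$ and $\hat{y}_{i,\beta}$ carry all the payoff-relevant information. I do not anticipate any genuine technical obstacle beyond carefully justifying the interchange of finite sums and the appeal to independence of the two players' randomization.
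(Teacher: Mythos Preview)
Your proposal is correct and follows essentially the same approach as the paper: compute the expected payoff battlefield by battlefield using linearity of expectation, observe that the expected outcome in battlefield $i$ depends only on the marginal probabilities $\hat{x}_{i,\alpha}$ and $\hat{y}_{i,\beta}$ (via independence of the two players' randomization), and sum over $i$. The paper's proof is terser---it jumps directly to $\mathbb{E}[u_i^A(\x,\y)] = \sum_{\alpha,\beta} \hat{x}_{i,\alpha}\hat{y}_{i,\beta}u_i^A(\alpha,\beta)$ without writing out the grouping-by-troop-count step---but the underlying computation is identical to yours.
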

\begin{proof}
Let $\x$ and $\y$ be two mixed strategies of players $A$ and $B$ respectively, and let $\mathbb{E}[u_i^A(\x,\y)]$ be the expected value of the outcome in battlefield $i$. We can write $\mathbb{E}[u_i^A(\x,\y)]$ as follows
$$\mathbb{E}[u_i^A(\x,\y)] = \sum_{\alpha=0}^a \sum_{\beta=0}^b \hat{x}_{i,\alpha} \hat{y}
_{i,\beta} u_i^A(\alpha,\beta).$$
We know that the total outcome of the game is the sum of the outcome in all battlefields, which is
$$
\mathbb{E}\left[\sum_{i=1}^k u_i^A(\x,\y)\right] = \sum_{i=1}^k\mathbb{E}[u_i^A(\x,\y)] =
 \sum_{i=1}^k \sum_{\alpha = 0}^a \sum_{\beta = 0}^b \hat{x}_{i,\alpha}\hat{y}_{i,\beta}u_i^A(\alpha,\beta) = h_\blotto^A(\xh,\yh).$$
\end{proof}


\begin{theorem}\label{thm:salam}
There exists a polynomial time algorithm that finds a NE of the Colonel Blotto game in the new space.
\end{theorem}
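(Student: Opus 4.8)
The plan is to expand the three-step argument used for Theorem~\ref{thm:blotto} into the transferred space, being careful about the Ellipsoid-method bookkeeping. By the MinMax theorem, a Nash equilibrium of the zero-sum Colonel Blotto game is just a pair of optimal (guaranteed-payoff-maximizing) strategies and the equilibrium value $U^*$ is common to both, so it suffices to compute, separately, an optimal $\xh^* \in \Sa$ for player $A$ and an optimal $\yh^* \in \Sb$ for player $B$. I would do player $A$ and note player $B$ is symmetric. Since some best response to any mixed strategy is pure, and since by Lemma~\ref{formuli} the expected payoff depends on the strategies only through their images $\functiona(\x), \functionb(\y)$, the optimal $\xh^*$ is exactly an optimal solution of LP~\ref{Prog:3}: maximize $U$ subject to the membership constraint $\xh \in \Sa$ and the payoff constraints $h_\B^A(\xh,\yh) \ge U$ for every $\yh \in \Ib$.

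Second, I would solve LP~\ref{Prog:3} with the Ellipsoid method. It has only $\na+1$ variables but exponentially many constraints, so the crux is a polynomial-time separation oracle. For the membership constraint this is precisely Lemma~\ref{lem:membership}. For the payoff constraints, given a candidate pair $(\xh, U)$, I would use that $h_\B^A(\xh,\yh) = \sum_{i,\alpha,\beta} \hat{x}_{i,\alpha}\hat{y}_{i,\beta} u_i^A(\alpha,\beta)$ is, for fixed $\xh$, a linear function of $\yh$ with coefficients $c_{i,\beta} = \sum_{\alpha} \hat{x}_{i,\alpha} u_i^A(\alpha,\beta)$ that are computable in polynomial time. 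Hence minimizing $h_\B^A(\xh,\cdot)$ over $\yh \in \Ib$ (the images of pure strategies of $B$) is a linear optimization over pure strategies of exactly the form handled by Lemma~\ref{lem:minpure}; the corresponding dynamic program over $B$'s $b$ troops and $k$ battlefields returns a minimizing pure $\yh$. If that minimum is at least $U$ the point is feasible; otherwise the returned $\yh$ is a violated constraint. Feeding both oracles to the Ellipsoid method, together with the vertex/facet bounds of $\Sa$ from Lemma~\ref{polyhedron} (which bound the bit-complexity needed to run the ellipsoid and to round the output to an optimal vertex), solves LP~\ref{Prog:3} in polynomial time and yields $\xh^*$ and $U^*$.

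Finally, running the symmetric computation for player $B$ produces $\yh^* \in \Sb$ whose guaranteed payoff equals $-U^*$ (by the MinMax theorem), so $(\xh^*, \yh^*)$ is a Nash equilibrium of the game in the new space with value $U^*$, which is what Theorem~\ref{thm:salam} asserts.

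I expect the main obstacle to be the payoff-constraint separation oracle: recognizing that separating the exponentially many payoff constraints collapses to a single best-response/linear-optimization query over player $B$'s pure-strategy images, and that this query is exactly the dynamic program of Lemma~\ref{lem:minpure}. A secondary technical point is making the Ellipsoid run genuinely polynomial — this needs the encoding-length bounds implied by Lemma~\ref{polyhedron}, and, in case $\Sa$ is not full-dimensional, the standard handling of its affine hull (or a small perturbation) so that a separating hyperplane is guaranteed to exist whenever the queried point is infeasible.
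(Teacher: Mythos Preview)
Your proposal is correct and follows essentially the same approach as the paper's proof: reduce to LP~\ref{Prog:3} via the MinMax theorem and Lemma~\ref{formuli}, separate the membership constraint via Lemma~\ref{lem:membership}, separate the payoff constraints by computing the coefficients $c_{i,\beta} = \sum_{\alpha} \hat{x}_{i,\alpha} u_i^A(\alpha,\beta)$ and invoking the dynamic program of Lemma~\ref{lem:minpure} over $B$'s pure strategies (which is exactly the paper's \textit{best-response separating oracle}), and appeal to the facet/vertex bounds of Lemmas~\ref{polyhedron} and~\ref{lem:poly_constraint} to make the Ellipsoid method run in polynomial time. Your anticipated secondary issue about non-full-dimensionality of $\Sa$ is precisely what Lemma~\ref{lem:poly_constraint} handles via the affine hull $\spt$.
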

\begin{proof}
The Colonel Blotto is a zero-sum game, and the MinMax theorem states that a pair of strategies $(\xh, \yh)$ is a Nash equilibrium if $\xh$ and $\yh$ maximize the guaranteed payoff of players $A$ and $B$ respectively \cite{minmax}. 

Recall that LP  \ref{Prog:3} finds a point $\xh \in \sets$ which describes an optimal strategy of player $A$\footnote{The same procedure finds an optimal strategy of player $B$.}. 
This LP has $\na+1$ variables $\hat{x}_1,\hat{x}_2,\ldots,\hat{x}_{\na}$ and $U$ where $\hat{x}_1,\hat{x}_2,\ldots,\hat{x}_{\na}$ describe point $\xh$. The membership constraint guarantees $\xh$ is in $\sets$. 
It is known that in any normal-form game there always exists a best-response strategy which is a pure strategy \cite{game}. Hence, variable $U$ represents the maximum payoff of player $A$ with strategy $\xh$ when player $B$ plays his best-response strategy against $\xh$. 
Note that, Lemma \ref{formuli} shows $h_\blotto^A(\xh,\yh)$ is a linear function  of $\xh$, when $\yh$ is a fixed strategy of player $B$. This means the payoff constraints are linear constraints.
Putting all these together show, LP \ref{Prog:3}  finds a point $\xh$ such that:
\begin{enumerate}
\item There exists strategy $\x$ such that $\functiona(\x) = \xh$.
\item The minimum value of $h_\blotto^A(\x,\y)$ is maximized for every $\y \in \strategysetb$.
\end{enumerate}

Next we show that this LP can be solved in polynomial time with the Ellipsoid method. 
First, Lemma \ref{lem:membership} proposes a polynomial-time algorithm for checking the membership constraint.
Second, \textit{best-response separating oracle} is an oracle that gets point $\xh$ and variable $U$ as input and either reports point $\xh$ passes all payoff constrains or reports a violated payoff constraint. In Section \ref{oracles}, we will show that the running time of this oracle is $\cpx{\poly(n)}$. 

At last we prove LP \ref{Prog:3} has $\cpx{2^{\poly(n)}}$ number of constraints, and we can leverage the Ellipsoid method for finding a solution of this LP.
Note that Lemma \ref{lem:poly_constraint} indicates that set $\sets$ can be represented by $\cpx{2^{\poly(n)}}$ number of hyperplanes. On the other hand, Lemma \ref{polyhedron} shows LP \ref{Prog:3} has at most $|\Ib| =  \cpx{2^{\poly(n)}}$ constraints.
%
%
\end{proof}

\subsection{Finding a Nash equilibrium in the original space}
\label{sec:ret}
In the previous subsection, we presented an algorithm which finds a Nash equilibrium $(\functiona(\x), \functionb(\y))$ of the game in the new space. The remaining problem is to retrieve $\x$ from $\functiona(\x)$.
\begin{theorem}\label{thm:ref}
Given a point $\xh \in \sets$, there exists a polynomial time algorithm which finds a strategy $\x \in \strategyset$ such that $\functiona(\x) = \xh$.
\end{theorem}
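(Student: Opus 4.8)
The plan is to invert the transfer map $\functiona$ by exhibiting $\xh$ as an explicit convex combination of images of \emph{pure} strategies and then reading a mixed strategy off the coefficients. Recall that, by construction, $\sets$ is exactly the set of marginal vectors realizable by some mixed strategy, which is the same as $\operatorname{conv}(\vs)$, the convex hull of the $\{0,1\}$-valued pure-strategy images; moreover every point of $\vs$ is a vertex of $[0,1]^{\na}$ and hence a vertex of $\sets$, so the vertex set of $\sets$ is precisely $\vs$ (this is the content of Lemma~\ref{polyhedron}). Thus any $\xh\in\sets$ can be written as $\xh=\sum_t \alpha_t\xh_t$ with $\alpha_t\ge 0$, $\sum_t\alpha_t=1$, and each $\xh_t\in\vs$ a vertex. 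From such a decomposition I would recover, for each $t$, the unique pure strategy $\px_t\in\pureset$ with $\functiona(\px_t)=\xh_t$: this is immediate because $\xh_t$ is $0/1$-valued, so $x_{t,i}$ is the unique $j$ with $(\xh_t)_{i,j}=1$. Output the mixed strategy $\x\in\strategyset$ that plays $\px_t$ with probability $\alpha_t$. Since each marginal of a mixture is the corresponding mixture of marginals, $\functiona$ acts affinely on $\strategyset$, so $\functiona(\x)=\sum_t\alpha_t\functiona(\px_t)=\sum_t\alpha_t\xh_t=\xh$, as required.

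It remains to produce such a convex decomposition in polynomial time, and here I would appeal to the constructive Carath\'eodory-type result implied by the equivalence of separation and optimization~\cite{GLS81}: given a linear-optimization oracle for a polytope $P\subseteq\R^{\na}$ and a point $\xh$ in its relative interior, one can compute in time polynomial in $\na$ and the bit-length of $\xh$ vertices $\xh_1,\dots,\xh_r$ of $P$ with $r\le\na+1$ together with nonnegative multipliers summing to one realizing $\xh=\sum_t\alpha_t\xh_t$. The linear-optimization oracle needed over $\sets$ is exactly the one supplied by Lemma~\ref{lem:minpure} (the \minpure routine): for any coefficient vector it returns a pure strategy $\px$ minimizing $c_0+\sum_i c_i\hat{x}_i$ over $\xh=\functiona(\px)$, and because a linear function over $\sets=\operatorname{conv}(\vs)$ attains its optimum at a vertex, this is precisely optimization over $\sets$. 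Feeding this oracle into the \cite{GLS81} machinery yields the decomposition, and hence $\x$, in polynomial time.

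The one point that genuinely needs care — the expected main obstacle — is that the clean form of the decomposition result is stated for points in the \emph{relative interior} of $P$, whereas the $\xh$ we are handed (the solution of LP~\ref{Prog:3}) may sit on the boundary of $\sets$. I would deal with this in the standard way: first locate the minimal face $F$ of $\sets$ containing $\xh$ by repeatedly using the optimization oracle to detect valid inequalities that are tight at $\xh$ (once none can be found, $\xh$ lies in the relative interior of $F$), then apply the decomposition inside the affine hull of $F$, whose dimension is at most $\na$. Since every face of $\sets=\operatorname{conv}(\vs)$ is itself the convex hull of the subset of $\vs$ it contains, the vertices produced are still images of pure strategies, so the reconstruction above goes through verbatim. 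Everything here except the oracle of Lemma~\ref{lem:minpure} is a black-box application of convex geometry, so the bulk of the work is already done.
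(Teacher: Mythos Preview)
Your proposal is correct and, in fact, coincides with the sketch the paper gives in the main text (Step~3 of Section~\ref{sec:results.blotto}): invoke the constructive Carath\'eodory/GLS decomposition, using the \minpure oracle of Lemma~\ref{lem:minpure} as the linear-optimization oracle over $\sets$, and then read off the mixed strategy from the returned vertices. Your handling of the boundary case is fine, though in the GLS framework the decomposition already applies to any point of the polytope, so the face-finding digression is not strictly necessary.

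The paper's \emph{formal} proof in the appendix, however, takes a somewhat different route to the same end. Rather than appeal to the GLS decomposition as a black box, it writes down the decomposition problem explicitly as a primal LP with one variable $\alpha_\px$ per pure strategy (exponentially many variables, polynomially many constraints), takes its dual (polynomially many variables, exponentially many constraints), and solves the dual via Ellipsoid with a separation oracle that is again \minpure. It then recovers a primal optimum using the standard trick (Lemma~\ref{lemma:dualLP}): collect the polynomially many dual constraints ever returned by the oracle, solve the restricted primal over just the corresponding variables, and pad with zeros. This is, of course, precisely how the GLS decomposition is proved in the first place, so the two arguments are the same under the hood; your version is cleaner as an exposition, while the appendix version has the virtue of being self-contained and making the role of the oracle fully explicit.
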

\begin{proof}
Since every strategy of player $A$ is a convex combination of elements of $\pureset$, our goal is to find a feasible solution of the following LP.
\begin{eqnarray}
\min. & & 0 \label{cc0}\\
s.t. 
& &  \sum_{\px \in \pureset} \alpha_{\px} = 1 \label{cc1}\\
& &  \sum_{\px \in \pureset} \alpha_{\px} \functiona(\px)_j  = \hat{x}_j\hspace*{1.5cm} \forall  1 \leq j \leq \na \label{cc2}\\
& &  \alpha_{\px} \geq 0 \hspace*{3.7cm} \forall \px \in \pureset \label{cc3}
\end{eqnarray}
where $\alpha_{\px}$ is the probability of pure strategy $\px \in \pureset$.
Note that, this LP finds a mixed strategy of player $A$ by finding the probability of each pure strategy. Since every feasible solution is acceptable, objective function does not matter.
To find a solution of this LP, we write its dual LP as follows.
\begin{eqnarray}
\max. & & \beta_0 + \sum_{j=1}^{\na} \hat{x}_j \beta_j  \label{cp1} \\
s.t. & &  \beta_0 + \sum_{j=1}^{\na} \functiona(\px)_j \beta_j \leq 0 \hspace{1.5cm} \forall  \px \in \pureset \label{cp2}
\end{eqnarray}
Where variable $\beta_0$ stands for constraint \ref {cc1}, and variables $\beta_1,\beta_2,\ldots,\beta_{\na}$ stand for constraints  \ref{cc2}. An oracle similar to the \textit{hyperplane separating oracle} can find a violating constraint for any infeasible solution of the dual LP. Since the number of constraints in the dual LP is $|\Ia| = \cpx{2^{\poly(n)}}$ based on Lemma \ref{polyhedron}, we can use the Ellipsoid method and find an optimal solution of the dual LP in polynomial time.\\

The next challenge is to find an optimal solution of the primal LP from an optimal solution of the dual LP. We resolve this problem by the following lemma. We know $\xh$ is in $\sets$.  This means there is strategy $\x \in \strategyset$ such that $\functiona(\x) = \xh$. Hence, linear program \ref{cc0} and its dual are feasible, and we can apply Lemma \ref{lemma:dualLP}.
\end{proof}

\begin{lemma}
\label{lemma:dualLP}
Assume we have a separation oracle for primal LP $\max\{c^Tx:Ax\leq b\}$ with exponentially many constraints and polynomially many variables. If primal LP is feasible, then there is a polynomial-time algorithm which returns an optimum solution of dual LP $\min\{b^Ty:A^Ty\geq c\}$.
\end{lemma}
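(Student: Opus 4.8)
The plan is to exploit the fact that, although the primal $\max\{c^Tx : Ax \le b\}$ has exponentially many constraints, a single run of the Ellipsoid method on it inspects only polynomially many of them. I will collect those constraints, observe that the reduced program on this small subsystem has the same optimal value, and take the dual of this reduced (polynomially sized) program -- padded with zeros -- as an optimal solution of the full dual. Concretely, I would first apply the Ellipsoid-based optimization algorithm of \cite{GLS81,lp1} to the primal, using the given separation oracle together with the usual bounding-box and perturbation preprocessing. Since the primal is feasible I may assume it has a finite optimum $\OPT$ (were it unbounded, the dual would be infeasible and there would be no dual optimum to return). The algorithm runs in polynomial time and makes only polynomially many oracle calls; I record the set $\mathcal{C}$ of all constraints ever returned, so $|\mathcal{C}| = \poly$, and write $A'x \le b'$ for the subsystem of $Ax \le b$ indexed by $\mathcal{C}$.

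Next I would show that $\max\{c^Tx : A'x \le b'\} = \OPT$. One inequality is immediate: dropping constraints only enlarges the feasible region, so the reduced optimum is at least $\OPT$. For the reverse inequality, recall that the optimization algorithm certifies optimality by running the separation oracle on the system $\{Ax \le b,\ c^Tx \ge \OPT + \delta\}$ and concluding that it is infeasible, for a suitably small $\delta = 2^{-\poly}$ chosen relative to the encoding length of $A$, $b$, $c$; every constraint touched during that infeasibility run lies in $\mathcal{C}$, so already $\{A'x \le b',\ c^Tx \ge \OPT+\delta\}$ is infeasible, which forces the reduced optimum below $\OPT + \delta$, hence exactly $\OPT$ since it is a rational number of bounded height. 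This is precisely the standard argument behind the equivalence of separation and optimization, specialized so as to remember the witnessing constraints.

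Now, since $|\mathcal{C}| = \poly$, I would solve the reduced primal $\max\{c^Tx : A'x \le b'\}$ explicitly using any polynomial-time LP algorithm and read off an optimal solution $y'$ of its dual $\min\{(b')^Ty' : (A')^Ty' \ge c,\ y' \ge 0\}$. Extend $y'$ to a vector $y$ over all constraints of $A$ by setting $y_i = 0$ for every $i \notin \mathcal{C}$. Then $y \ge 0$ and $A^Ty = (A')^Ty' \ge c$, so $y$ is feasible for the full dual $\min\{b^Ty : A^Ty \ge c\}$; and $b^Ty = (b')^Ty'$, which by strong duality applied to the reduced pair equals $\max\{c^Tx : A'x \le b'\} = \OPT$, which in turn equals the optimum of the full dual by strong duality applied to the original pair. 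Hence $y$ is an optimal solution of the full dual, produced in polynomial time.

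The step I expect to be the main obstacle is the middle one: making precise -- while staying within polynomial time -- the claim that the constraints encountered during the Ellipsoid run already pin down the optimum. This forces one to be careful about how optimization is reduced to a sequence of feasibility tests (a sliding objective, or binary search on the objective value) and about choosing the perturbation $\delta$ small enough, relative to the bit-lengths of $A$, $b$, $c$, that it introduces no spurious better solution, all while the number of oracle calls (and hence $|\mathcal{C}|$) stays polynomial. Once this is in hand, the padding-with-zeros step and the two invocations of strong duality are routine.
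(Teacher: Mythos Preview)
Your proposal is correct and follows essentially the same route as the paper's proof: run the Ellipsoid method on the primal, record the polynomially many constraints the separation oracle returns, argue that the reduced LP has the same optimum, solve its (polynomially sized) dual, and pad the resulting dual solution with zeros to obtain an optimum of the full dual. The paper's argument is in fact slightly terser about the middle step you flagged as the main obstacle, so your discussion there is, if anything, more careful than the original.
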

\begin{proof}
Since the primal LP is feasible, we can assume $OPT=\max\{c^Tx:Ax\leq b\}$. The  Ellipsoid method returns an optimum solution of primal LP by doing binary search and finding the largest $K$ which guarantees feasibility of $\{c^Tx\leq K:Ax\leq b\}$. 
Let $(\hat{A},\hat{b})$ be the set of polynomially many constraints returned by the separation oracle during all iterations.
 We first prove $\max\{c^Tx:\hat{A}x\leq \hat{b}\} = OPT$. Note that $(\hat{A},\hat{b})$ is a set of constraints returned by the Ellipsoid method. Note that $(\hat{A},\hat{b})$ is a subset of all constraints $(A,b)$. This means every vector $x$ which satisfies $Ax \leq b$ will satisfy $\hat{A}x\leq \hat{b}$ as well. Therefore, $\max\{c^Tx:\hat{A}x\leq \hat{b}\} \geq \max\{c^Tx:Ax\leq b\} = OPT$. On the other hand, we know $(\hat{A},\hat{b})$ contains constraints which guarantees infeasibility of $\max\{c^Tx\geq OPT+\epsilon:Ax\leq b\}$. So, LP $\max\{c^Tx \geq OPI+\epsilon:\hat{A}x\leq \hat{b}\}$ is infeasible which means  $\max\{c^Tx:\hat{A}x\leq \hat{b}\}\leq OPT$. Putting all these together we can conclude that $\max\{c^Tx:\hat{A}x\leq \hat{b}\}=OPT$.

Linear program $\max\{c^Tx:\hat{A}x\leq \hat{b}\}$ has polynomially many constraints and polynomially many variables, and we can find an optimum solution to its dual $\min\{\hat{b}^T\hat{y}:\hat{A}^T\hat{y}\geq c\}$ in polynomial time. Let $\hat{y}^*$ be an optimum solution of dual LP $\min\{\hat{b}^T\hat{y}:\hat{A}^T\hat{y}\geq c\}$, 
and let $S=\{i| (A_i,b_i) \mbox{ is in } (\hat{A},\hat{b})\}$ where $A_i$ is the i-th row of matrix $A$, i.e.,  be set of indices corresponding to constraints in $(\hat{A}, \hat{b})$.
For every vector $y$ and every set of indices $R$ we define $y_R$ to be the projection of vector $y$ on set $R$.
Now consider vector $y^*$ as a solution of dual LP $\min\{b^Ty:A^Ty\geq c\}$ such that $y^*_S = \hat{y}^*$ and $y^*_i =0$ for all $i\not \in S$. We prove $y^*$ is an optimum solution of dual LP $\min\{b^Ty:A^Ty\geq c\}$ as follows:
\begin{itemize}
\item We first show $y^*$ is feasible. Note that $y^*_i=0$ for all $i \not \in S$ which means $A^T y^* = \hat{A}^T\hat{y}^* \geq c$ where the last inequality comes from the feasibility of $\hat{y}^*$ in dual LP $\min\{\hat{b}^Ty:\hat{A}^Ty\geq c\}$.
\item 
Note that $b^T y^* = \sum_i b^T_i y^*_i= \sum_{i \in S} b^T_i y^*_i + \sum_{i \not \in S} b^T_i y^*_i = \hat{b}^T \hat{y}^*$. The last equality comes from the facts that $y^*_i=0$ for all $i \not \in S$, and $\sum_{i \in S} b^T_i y^*_i = \hat{b}^T \hat{y}^*$. Since $\hat{y}^*$ is an optimum solution of dual the LP $\min\{\hat{b}^Ty:\hat{A}^Ty\geq c\}$, by the weak duality, it is equal to $\max\{c^Tx:\hat{A}x\leq \hat{b}\}=OPT$. Therefore, $b^T y^*= OPT$.
\end{itemize}
We have proved $y^*$ is a feasible solution to dual LP $\min\{b^Ty:A^Ty\geq c\}$ and $b^T y^* = OPT$. We also know $OPT=\max\{c^Tx:Ax\leq b\}$ by definition. Therefore, the weak duality insures $y^*$ is an optimum solution of dual LP $\min\{b^Ty:A^Ty\geq c\}$.

\end{proof}

\section{General Lotto}
\label{sec:lotto}

In this section we study the General Lotto game. An instance of the General Lotto game is defined by $\Gamma(a, b, u)$, where players $A$ and $B$ simultaneously define  probability distributions of non-negative integers  $X$ and $Y$, respectively, such that $\mathbb{E}[X]=a$ and $\mathbb{E}[Y]=b$. In this game,  player $A$'s aim is to maximize $h_\Gamma^A(X,Y)$ and player $B$'s aim is to maximize $h_\Gamma^B(X,Y)=-h_\Gamma^A(X,Y)$ where $h_\Gamma^A(X,Y)$ is defined as $\mathbb{E}_{i \sim X, j \sim Y} u(i,j)$. 
The previous studies of the General Lotto game considered a special case of the problem where $u(i,j)=sign(i-j)$\footnote{
$sign(x)=\begin{cases}
-1 & x<0\\
0 & x=0\\
1 & x>0
\end{cases}$
}. Here, we generalize the payoff function to a  \distance\ function and present an algorithm for finding a Nash equilibrium of the General Lotto game in this case. 
Function $u$ is a {\distance\ function}, if one can write it as $u(i,j) = f_u(i-j)$ such that $f_u$ is a monotone function and reaches its maximum value at $f_u(t_u)$ where $t_u \in O(\poly(a,b))$. 

We first define a new version of the General Lotto game, which is called the \textit{finite General Lotto} game. We prove a Nash equilibrium of the finite General Lotto game can be found in polynomial time. Then we reduce the problem of finding a Nash equilibrium of the General Lotto game with a \distance\ function to the problem of finding a Nash equilibrium of the finite General Lotto game. This helps us to propose a polynomial-time algorithm which finds a Nash equilibrium of the General Lotto game where the payoff function is a \distance\ function.
\subsection{Finite General Lotto}
We define the finite General Lotto game $\Gamma(a,b,u,S)$ to be an instance of the General Lotto game where every strategy of players is a distribution over a finite set of numbers $S$. Here, we leverage our general technique to show the finite General Lotto game is a polynomially-separable bilinear game and, as a consequence, it leads to a polynomial time algorithm to find a Nash equilibrium for this game.

\begin{theorem}\label{finiteTheorem-a}
There exists an algorithm which finds a Nash equilibrium of the finite General Lotto game $\Gamma(a,b,u,S)$
	in time $O(poly(|S|))$.
\end{theorem}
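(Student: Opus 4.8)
The plan is to apply the general framework of Theorem~\ref{thm:bi-sep} by exhibiting the finite General Lotto game $\Gamma(a,b,u,S)$ as a polynomially separable bilinear game, with the dimension of the transferred space and the running time of the separation oracle both polynomial in $|S|$. First I would set up the transfer map: enumerate $S = \{s_1, s_2, \ldots, s_{|S|}\}$, and map a mixed strategy $X$ of player $A$ (a distribution over $S$) to the point $\hat{x} \in \mathbb{R}^{|S|}$ with $\hat{x}_t = \Pr(X = s_t)$. By construction the payoff $h_\Gamma^A(X,Y) = \mathbb{E}_{i \sim X, j \sim Y}\, u(i,j) = \sum_{t,t'} \hat{x}_t \hat{y}_{t'} u(s_t, s_{t'})$, which is exactly the bilinear form $\hat{x}^T M \hat{y}$ with $M_{t,t'} = u(s_t, s_{t'})$, an $|S| \times |S|$ matrix. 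This discharges condition (1) of polynomial separability.

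Next I would handle the expectation constraint. Unlike Colonel Blotto, the pure strategies here are not all of $\{0,1\}^{|S|}$: a pure strategy of player $A$ is a point mass at some $s_t$, but only those $s_t$ are "legal" — actually, any point mass is legal as a distribution, but the constraint $\mathbb{E}[X] = a$ must be enforced. The cleanest way is to observe that the set $S_A$ of transferred mixed strategies is $\{\hat{x} \in \mathbb{R}^{|S|}_{\geq 0} : \sum_t \hat{x}_t = 1,\ \sum_t s_t \hat{x}_t = a\}$, a polytope with only two non-trivial linear constraints beyond nonnegativity — i.e., the finite General Lotto game is in fact \emph{polynomially representable} in the sense of the dueling-games section, hence polynomially separable by (the argument of) Theorem~\ref{thm:rep_sep}. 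So for condition (2), given $\alpha_0, \alpha$, checking whether there exists $\hat{x} \in S_A$ with $\alpha_0 + \sum_t \alpha_t \hat{x}_t \geq 0$ is just an LP feasibility test in $|S|$ variables with $O(|S|)$ constraints, solvable in $\mathrm{poly}(|S|)$ time. Having verified both conditions, Theorem~\ref{thm:bi-sep} immediately yields a $\mathrm{poly}(|S|)$-time algorithm that computes a point $\hat{x}^* \in S_A$ (and $\hat{y}^* \in S_B$) forming a Nash equilibrium in the transferred space, and the convex-decomposition step (as in Step~3 of the proof of Theorem~\ref{thm:blotto}) recovers an actual mixed strategy — in fact here the transferred point $\hat{x}^*$ \emph{is} already a distribution over $S$ with the right expectation, so it is directly a mixed strategy of the original game.

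I expect the main subtlety — rather than a deep obstacle — to be the bookkeeping around what counts as a "pure strategy" and whether the extreme points of $S_A$ are in bijection with meaningful pure strategies of the game. The extreme points of $\{\hat{x} \geq 0 : \sum \hat{x}_t = 1, \sum s_t \hat{x}_t = a\}$ are supported on at most two coordinates $s_t \leq a \leq s_{t'}$ (a two-point distribution), which is fine: these are legitimate mixed strategies, and the equilibrium is exposed directly as a feasible point of the LP without needing the general convex-decomposition machinery. The only place care is needed is ensuring $a \in [\min S, \max S]$ so that $S_A$ is nonempty; if $S$ is chosen sensibly this is automatic, and otherwise the game is degenerate. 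Modulo this, the theorem follows by a direct invocation of the established framework, and I would present it essentially as: "exhibit the bilinear form, note the two-constraint polytope, invoke Theorem~\ref{thm:bi-sep}."
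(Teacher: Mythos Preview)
Your proposal is correct and follows essentially the same route as the paper: map each distribution to its probability vector in $\mathbb{R}^{|S|}$, observe that the payoff is the bilinear form $\hat{x}^T M \hat{y}$ with $M_{t,t'} = u(s_t,s_{t'})$, and verify polynomial separability by solving the $|S|$-variable LP feasibility problem with the constraints $\sum_t \hat{x}_t = 1$, $\sum_t s_t \hat{x}_t = a$, and $\alpha_0 + \alpha \cdot \hat{x} \geq 0$. Your additional observations---that the polytope is in fact polynomially \emph{representable} (so Theorem~\ref{thm:rep_sep} already applies), and that the transferred point $\hat{x}^*$ is itself a valid mixed strategy so no convex decomposition is needed---are correct refinements the paper does not spell out.
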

\begin{proof}
First we map each strategy $X$ to a point $\hat{x}=\langle \hat{x}_1, \hat{x}_2, \ldots, \hat{x}_{|S|}\rangle$, where $\hat{x}_i$ denotes $\pr{X}{S_i}$. Without loss of generality we assume the elements of $S$ are sorted in strictly ascending order, i.e. for each $1 \leq i < j \leq |S|$, $S_i < S_j$. Now the utility of player $A$ when $A$ plays a strategy corresponding to $\hat{x}$ and $B$ plays a strategy corresponding to $\hat{y}$ is obtained by the following linear function.
$$h^A_{\Gamma}(\hat{x},\hat{y}) = \sum_{i=1}^{|S|} \sum_{j=1}^{i-1} \hat{x}\hat{y}u(i,j) - \sum_{i=1}^{|S|} \sum_{j=i+1}^{|S|} \hat{x}\hat{y}u(i,j).$$
Therefore the game is bilinear. Now we prove the game is polynomially separable.

	Given a real number $r$ and a vector $v$, we provide a polynomial-time algorithm which determines whether there exists a strategy point $\hat{x}$ such that $r+v.\hat{x} \geq 0$. We design the following feasibility program for this problem with $|S|$ variables $\hat{x}_1$ to $\hat{x}_{|S|}$ and three constraints.	
	\begin{align}
	&\sum_{i=1}^{|S|}\hat{x}_i=1 \label{lpc1-a}\\ 
	&\sum_{i=1}^{|S|}\hat{x}_iS_i=a \label{lpc2-a}\\ 
	&r+v.\hat{x}\geq 0 \label{lpc3-a}
	\end{align}
	Constraints \ref{lpc1-a} and \ref{lpc2-a} force the variables to represent a valid strategy point (i.e., the probabilities sum to 1 and the expectation equals $a$). Thus every point $\hat{x}$ is a valid strategy point \textit{iff} it satisfies Constraints \ref{lpc1-a} and \ref{lpc2-a}. On the other hand, Constraint \ref{lpc3-a} enforces the program to satisfy the given linear constraint of the separation problem. Thus there exists a strategy point $\hat{x}$ such that $b+v.\hat{x} \geq 0$ \textit{iff} there is a solution for the feasibility LP. The feasibility of the program can be determined in polynomial time, hence, the separation problem is polynomially tractable.

Therefore the finite General Lotto game is a polynomially-separable bilinear game and by Theorem \ref{thm:bi-sep}, there exists a polynomial-time algorithm which finds a Nash equilibrium of the finite General Lotto game.
\end{proof}

\subsection{General Lotto with \distance\ functions}
\label{sec:distance}
In this section, we consider General Lotto game $\Gamma(a, b, u)$ where $u$ is a \distance\ function and design a polynomial-time algorithm for finding a Nash equilibrium of the game. In the following part of this section we assume $a \leq b$. Recall the definition of \distance\ functions.
\begin{definition}
Function $u$ is a {\distance\ function}, if one can write it as $u(i,j) = f_u(i-j)$ such that $f_u$ is a monotone function and reaches its maximum value at $\um = f_u(\ut)$ where $\ut \in O(\poly(a,b))$. We call $\ut$ the threshold of function $u$, and $\um$ the maximum of function $u$ \footnote{Note that sign function is a special case of distance functions.} 
.
\end{definition}

First we define a notion of paired strategies and claim that for every strategy of a player there is a best-response strategy which is a paired strategy. Then, using this observation, 
we can prove nice bounds on the set of optimal strategies.

Consider a probability distribution which only allows two possible outcomes, i.e., there are only two elements in $S$ with non-zero probabilities. We call such a distribution a \textit{paired strategy}. We define $T_{i,j}$ to be a paired strategy which only has non-zero probabilities at elements $i$ and $j$. 
Furthermore, we define $T_{i,j}^a$ to be a paired strategy with $\E[T_{i,j}^a]=a$. In paired strategy $T_{i,j}^a$, the probabilities of elements $i$ and $j$ are determined by $\alpha_{i,j}^a=Pr(T_{i,j}^a=i) = \frac{a-j}{i-j}$ and $\alpha_{j,i}^a = Pr(T_{i,j}^a=j)= \frac{a-i}{j-i}=1-\alpha_{i,j}^a$. In the following structural lemma, we show that every distribution $T$ over a finite set $S$ can be constructed by a set of paired strategies.

\begin{lemma}\label{pairLemma-a}
	For every distribution $T$ over $S$ with $\E[T] = a$ and $t$ elements with non-zero probability, there are $m \leq t$ paired strategies $ \sigma_1,\sigma_2,\ldots, \sigma_m$  such that $T=\sum_{r=1}^m \beta_r \sigma_r$\footnote{This lemma claims that the strategy of a player in the finite General Lotto game can be written as a probability distribution over paired strategies. Thus $T=\beta_1\sigma_1+\beta_2\sigma_2+\ldots+\beta_m\sigma_m$ describes a strategy in which the paired strategy $\sigma_i$ is played with probability $\beta_i$.}, and for all $1 \leq i \leq m$ we have $\beta_i\in[0,1]$ and $\E[\sigma_i]=a$.
\end{lemma}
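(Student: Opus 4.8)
The plan is an induction on $t$, the number of elements in the support of $T$, in which at each step I peel off a single paired strategy of mean $a$. The underlying picture is that the distributions over $S$ with mean exactly $a$ form a polytope cut out by the two equalities $\sum_i \hat{x}_i = 1$ and $\sum_i \hat{x}_i S_i = a$ together with nonnegativity, so its extreme points have at most two nonzero coordinates; those extreme points are precisely the paired strategies $T_{i,j}^a$ (together with the point mass at $a$, which I treat as a degenerate paired strategy). Since I need the sharper bound $m \le t$ rather than a bound in terms of $|S|$, I will carry out the decomposition constructively rather than just invoking Carath\'eodory.

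For the base case $t = 1$, the distribution $T$ must be the point mass at $a$ because its mean is $a$, and this is a (degenerate) paired strategy, so $m = 1$. For the inductive step with $t \ge 2$, I split into two cases. If $a$ lies in the support of $T$, I take $\sigma_1$ to be the point mass at $a$ and $\beta_1 = \pr{T}{a} \in (0,1)$, and set $T' = (T - \beta_1 \sigma_1)/(1 - \beta_1)$; a one-line linearity computation gives $\E[T'] = a$, the probabilities of $T'$ sum to $1$, and its support has exactly $t-1$ elements. If $a$ is not in the support of $T$, then since $\E[T] = a$ and $t \ge 2$ the minimum $i$ of the support satisfies $i < a$ and the maximum $j$ satisfies $j > a$, so $i < j$; I take $\sigma_1 = T_{i,j}^a$ (which has mean $a$) and $\beta_1 = \min\big(\pr{T}{i}/\alpha_{i,j}^a,\ \pr{T}{j}/\alpha_{j,i}^a\big)$, the largest weight for which $T - \beta_1 \sigma_1$ remains nonnegative. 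Then $T' = (T - \beta_1 \sigma_1)/(1-\beta_1)$ is again a mean-$a$ distribution over $S$, and its support has lost whichever of $i, j$ attains the minimum, hence has at most $t-1$ elements. In both cases, applying the inductive hypothesis to $T'$ gives paired strategies $\tau_1, \dots, \tau_{m'}$ with $m' \le t-1$, coefficients $\gamma_r \in [0,1]$ summing to $1$, and $\E[\tau_r] = a$, whence $T = \beta_1 \sigma_1 + \sum_r (1-\beta_1)\gamma_r \tau_r$ writes $T$ as a combination of $m = m' + 1 \le t$ paired strategies of mean $a$ with coefficients in $[0,1]$ summing to $1$.

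The two points that require genuine care — and the only real friction I foresee — are: showing $\beta_1 \le 1$ in the second case, which follows because $\pr{T}{i} + \pr{T}{j} \le 1 = \alpha_{i,j}^a + \alpha_{j,i}^a$ forces at least one of the two ratios to be at most $1$ (and if $\beta_1 = 1$ one checks $T = \sigma_1$ outright, so $m = 1$); and avoiding a division by zero in the formula for $\beta_1$, which is exactly why the case where $a$ is in the support of $T$ must be pulled out and handled with a point mass rather than with a genuine two-point strategy (when $i = a$ one would have $\alpha_{j,i}^a = 0$). Everything else — nonnegativity of $T - \beta_1 \sigma_1$, that $T'$ is a probability distribution, that $\E[T'] = a$, and that the support strictly shrinks — is routine bookkeeping from the definition $\alpha_{i,j}^a = \frac{a-j}{i-j}$.
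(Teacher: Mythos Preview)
Your proof is correct and follows the same inductive ``peel off one paired strategy'' scheme as the paper. The one difference worth noting is that your case split on whether $a$ lies in the support is unnecessary: whenever $t \ge 2$ and $\E[T]=a$, the support must already contain some $i<a$ and some $j>a$ (otherwise the mean would be strictly on one side of $a$), so one can always take a genuine two-point $T_{i,j}^a$ with $\alpha_{i,j}^a,\alpha_{j,i}^a \in (0,1)$ and no division-by-zero issue arises. The paper proceeds this way, picking any such $i,j$ rather than the min and max, and thus avoids your Case~1 entirely.
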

\begin{proof}
	We prove this claim by induction on the number of elements with non-zero probabilities in $T$. If there is only one element with non-zero probability  in $T$, i.e., $t=1$, then we have $Pr(T=a)=1$. Thus $T=T_{0,a}^a$ is a paired strategy and the claim holds by setting $\sigma_1=T_{0,a}^a$ and $\beta_1=1$.
	Now assuming the claim holds for all $1\leq t'< t$, we prove the claim also holds for $t$. Suppose $T$ be a probability distribution with $t$ non-zero probability elements. Since $t\geq 2$, there should be some $i<a$ with $Pr(T=i)>0$ and some $j>a$ with $Pr(T=j)>0$. We choose the largest possible number $0<\beta\leq 1$ such that $\beta \alpha_{i,j}^a\leq Pr(T=i)$ and $\beta \alpha_{j,i}^a\leq Pr(T=j)$. If $\beta = 1$, then $Pr(T=i)+Pr(T=j)\geq \alpha_{i,j}^a + \alpha_{j,i}^a=1$. This means
	$T=T_{i,j}^a$ is a paired strategy and the claim holds by setting $\sigma_1=T_{i,j}^a$ and $\beta_1=1$. Otherwise, we can write $T=(1-\beta)T'+\beta T_{i,j}^a$ where $T'=\frac{T-\beta T_{i,j}^a}{1-\beta}$. Furthermore we have
	\begin{equation*}
	\E[T'] = \frac{\E[T-\beta T_{i,j}^a]}{1-\beta} =\frac{\E[T]-\beta E[T_{i,j}^a]}{1-\beta} = a.
	\end{equation*}
	We select $\beta$ such that at least one of the probabilities $Pr[T'=i]$ or $Pr[T'=j]$ becomes zero. Thus compared to $T$, the number of elements with non-zero probability in $T'$ is at least decreased by one, and by the induction hypothesis we can write $T'=\beta'_1 \sigma'_1+\beta'_2 \sigma'_2+\ldots+\beta'_{m'} \sigma'_{m'}$ where $m'\leq t-1$. Let $\beta_i=(1-\beta)\beta'_i$ and $\sigma_i=\sigma'_i$ for $1\leq i\leq m'$ and $\beta_{m'+1}=\beta$ and $\sigma_{m'+1}=T_{i,j}^a$. Now we can write $T=\beta_1 \sigma_1+\beta_2 \sigma_2+\ldots+\beta_{m'+1} \sigma_{m'+1}$ where each $\sigma_i$ is a paired strategy and $E(\sigma_i)=a$. Furthermore, $m=m'+1\leq t$ and the proof is complete. Since $t \leq |S|$, $m$ is polynomial in the size of input. Therefore paired strategies $\sigma_1,\sigma_2,\ldots,\sigma_m$ and their corresponding coefficients $\beta_1,\beta_2,\ldots,\beta_m$ can be computed in polynomial time.
\end{proof}

\begin{lemma}\label{bslemma-a}
	For every strategy of player $A$ in a finite General Lotto game there is a best-response strategy of player $B$ which is a paired strategy.
\end{lemma}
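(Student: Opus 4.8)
The plan is to combine the structural decomposition of Lemma~\ref{pairLemma-a} with the bilinearity of the payoff, so that the statement reduces to a one-line averaging argument. Fix an arbitrary strategy $X$ of player $A$, and let $Y$ be any best-response strategy of player $B$; such a $Y$ exists because the strategy set of $B$ (distributions over the finite set $S$ with mean $b$) is a compact polytope and $h_\Gamma^A(X,\cdot)$ is continuous. Since $\E[Y]=b$, I would apply Lemma~\ref{pairLemma-a} with the target expectation equal to $b$ to obtain paired strategies $\sigma_1,\dots,\sigma_m$, each with $\E[\sigma_r]=b$, together with coefficients $\beta_1,\dots,\beta_m\in[0,1]$, $\sum_{r=1}^m\beta_r=1$, such that $Y=\sum_{r=1}^m\beta_r\sigma_r$.

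Next I would invoke the fact that $h_\Gamma^A$ is bilinear, hence linear in its second argument when $X$ is held fixed, so that $h_\Gamma^A(X,Y)=\sum_{r=1}^m\beta_r\,h_\Gamma^A(X,\sigma_r)$. Each $\sigma_r$ is itself a legal strategy of player $B$ (a distribution over $S$ with mean $b$), so since $Y$ is a best response of $B$ — equivalently, $Y$ minimizes $h_\Gamma^A(X,\cdot)$, because $h_\Gamma^B=-h_\Gamma^A$ — we have $h_\Gamma^A(X,\sigma_r)\ge h_\Gamma^A(X,Y)$ for every $r$. Combining this with the convex-combination identity and $\sum_r\beta_r=1$ forces $h_\Gamma^A(X,\sigma_r)=h_\Gamma^A(X,Y)$ for every $r$ with $\beta_r>0$. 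Any such $\sigma_r$ is therefore also a best response of player $B$, and it is a paired strategy by construction, which establishes the lemma.

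An alternative route, which I would mention but not pursue, bypasses Lemma~\ref{pairLemma-a} entirely: player $B$'s best-response problem is the linear program $\min\{\sum_{j} y_j\,g(j): \sum_j y_j=1,\ \sum_j S_j y_j=b,\ y\ge 0\}$, where $g(j)=\sum_i\Pr(X=i)\,u(i,j)$; this is minimization of a linear objective over a bounded polytope with only two equality constraints, so an optimal vertex has at most two nonzero coordinates and is thus a paired strategy. I prefer the first argument because it reuses machinery already developed and treats the degenerate single-atom case uniformly.

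The only point that genuinely needs care is that the decomposition of $Y$ respects the mean constraint, so that each $\sigma_r$ is itself a feasible strategy of player $B$ (otherwise comparing $h_\Gamma^A(X,\sigma_r)$ with $h_\Gamma^A(X,Y)$ would be meaningless); but this is exactly the content of Lemma~\ref{pairLemma-a}, so no real obstacle remains and the proof is short once that lemma is in hand.
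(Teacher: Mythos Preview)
Your proof is correct and follows essentially the same route as the paper: take a best response $Y$ for player $B$, decompose it via Lemma~\ref{pairLemma-a} into a convex combination of paired strategies with the correct mean, and use linearity of the payoff together with optimality of $Y$ to conclude that each summand (with positive weight) is itself a best response. Your write-up is in fact slightly more careful than the paper's --- you justify existence of a best response by compactness and flag the $\beta_r>0$ caveat --- and the alternative LP-vertex argument you sketch is a valid independent proof, though the paper does not use it.
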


\begin{proof}
	Consider finite General Lotto game $\Gamma(a,b,u,S)$, strategy $X$ of player $A$, and a best-response strategy $Z$ of player $B$. Since $Z$ is a distribution on $S$, by using Lemma~\ref{pairLemma-a} we can write $Z=\sum_{r=1}^m\beta_r\sigma_r$. Thus, we have $h_\Gamma^B(X,Z)=h_\Gamma^B(X, \sum_{r=1}^m\beta_r \sigma_r)$ and because of the linearity of expectation we can write $h_\Gamma^B(X,Z)=\sum_{r=1}^m\beta_r h_\Gamma^B(X,\sigma_r)$. Since $Z$ is a best-response strategy, we have:
	\begin{equation}
	\forall 1\leq r\leq m, \hspace{.22cm} h_\Gamma^B(X,\sigma_r)= h_\Gamma^B(X,Z).
	\end{equation}
	This means paired strategy $\sigma_r$, for each $1 \leq r \leq m$, is a best-response strategy of player $B$.
\end{proof}

In the following lemmas, using the structural property of the best-response strategies, we show some bounds for each player's optimal strategies. 
\begin{lemma}\label{mins-a}
For any strategy $X$ with $\mathbb{E}[X]=c$ and any integer $j$ we have $\sum_{i=0}^j{\pr{X}{i}} \geq 1-\frac{c}{j+1}$.
\end{lemma}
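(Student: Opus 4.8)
The plan is to recognize this as a direct consequence of Markov's inequality applied to the non-negative integer-valued random variable $X$. Since $\mathbb{E}[X] = c$ and $X \geq 0$, I would first write the expectation as $c = \sum_{i=0}^{\infty} i \cdot \Pr(X=i)$ and then discard the terms with $i \leq j$ (they are non-negative) to obtain $c \geq \sum_{i=j+1}^{\infty} i \cdot \Pr(X=i)$.

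Next I would lower-bound each surviving coefficient $i$ by $j+1$, giving $c \geq (j+1) \sum_{i=j+1}^{\infty} \Pr(X=i) = (j+1)\Pr(X \geq j+1)$. Here I use that $X$ is integer-valued so that $\{X > j\} = \{X \geq j+1\}$. Rearranging yields $\Pr(X \geq j+1) \leq \frac{c}{j+1}$, and since $\sum_{i=0}^{j} \Pr(X=i) = \Pr(X \leq j) = 1 - \Pr(X \geq j+1)$, we conclude $\sum_{i=0}^{j} \Pr(X=i) \geq 1 - \frac{c}{j+1}$, as claimed.

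There is essentially no obstacle here; the only point requiring any care is handling the case $j+1 \le 0$ (or $j < 0$), where the bound $1 - \frac{c}{j+1}$ is either vacuous or the statement should be read with the convention that the empty sum is $0$ and the right-hand side is non-positive — so I would briefly note that the interesting regime is $j \geq 0$, where the argument above applies verbatim, and that the inequality holds trivially otherwise since the left side is at least $0$.
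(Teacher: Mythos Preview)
Your proof is correct and is essentially identical to the paper's own argument: the paper writes $\sum_{i=0}^{\infty} i\,\Pr(X=i)=c$, deduces $(j+1)\sum_{i=j+1}^{\infty}\Pr(X=i)\le c$, and concludes $\sum_{i=0}^{j}\Pr(X=i)=1-\sum_{i=j+1}^{\infty}\Pr(X=i)\ge 1-\frac{c}{j+1}$. Your additional remark about the trivial case $j<0$ is fine but not present (or needed) in the paper.
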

\begin{proof}
Since $\sum_{i=0}^{+\infty}{i\pr{X}{i}} = c$, we have $(j+1)\sum_{i=j+1}^{+\infty}{\pr{X}{i}} \leq c$. This implies $$\sum_{i=0}^j{\pr{X}{i}} = 1-\sum_{i=j+1}^{+\infty}{\pr{X}{i}} \geq 1-\frac{c}{j+1}.$$
\end{proof}

\begin{lemma}\label{maxs-a}
Consider  Nash equilibrium $(X, Y)$ of General Lotto game $\Gamma(a, b, u)$ where $u$ is a \distance\ function with threshold $\ut$. We have $\sum_{i=0}^{a-1}{\pr{Y}{i}} \leq \frac{\ut}{\ut+1}$.
\end{lemma}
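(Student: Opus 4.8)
The plan is to exploit the two inequalities characterising a Nash equilibrium of a zero-sum game. Write $v = h_\Gamma^A(X,Y)$ for the value of the game. Since $X$ is a best response to $Y$, we have $v \ge h_\Gamma^A(X',Y)$ for every strategy $X'$ of player $A$ with $\E[X']=a$; and since $Y$ is a best response to $X$, we have $v \le h_\Gamma^A(X,Y'')$ for every strategy $Y''$ of player $B$ with $\E[Y'']=b$. I would first show $v\le 0$, then exhibit one deviation $X'$ for player $A$ whose payoff against $Y$ is large precisely when $\sum_{i=0}^{a-1}\Pr(Y=i)$ is large, and finally combine $h_\Gamma^A(X',Y)\le v\le 0$ with the law of total probability. (If $a=0$ the statement is vacuous, so assume $a\ge 1$.)

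To see $v\le 0$: because $a\le b$, player $B$ may play the distribution obtained from $X$ by shifting it up by $b-a$, call it $Y''$; then $\E[Y'']=b$ and $h_\Gamma^A(X,Y'') = \E_{i,i'\sim X\text{ i.i.d.}}\bigl[f_u(i-i'-(b-a))\bigr]$. Since $i-i'-(b-a)$ is symmetric about $-(b-a)\le 0$, pairing the $+W$ and $-W$ contributions and using that $f_u$ is monotone and attains its maximum gives $h_\Gamma^A(X,Y'')\le 0$, so $v\le 0$.

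For the deviation, let player $A$ play the paired strategy $T_{a-1,\,a+\ut-1}^a$, i.e.\ value $a+\ut-1$ with probability $\tfrac{1}{\ut}$ and value $a-1$ with probability $\tfrac{\ut-1}{\ut}$; one checks directly that its expectation is $a$, so it is a legal strategy. Against a realisation $j$ of $Y$ with $j\le a-1$: if $A$ realises $a+\ut-1$ the margin is at least $\ut$, so $A$ earns $\um$; if $A$ realises $a-1$ the margin is at least $0$, so $A$ earns at least $f_u(0)$. Against a realisation $j\ge a$ I would lower-bound $A$'s payoff by monotonicity of $f_u$, separating the ``boundary'' values $j\in\{a,\dots,a+\ut-1\}$ (margins in $\{0,\dots,\ut-1\}$) from the far tail $j\ge a+\ut$; such $j$ carry total probability $1-\sum_{i=0}^{a-1}\Pr(Y=i)$. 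Summing, plugging into $h_\Gamma^A(T_{a-1,\,a+\ut-1}^a,Y)\le v\le 0$, and rearranging then gives $\sum_{i=0}^{a-1}\Pr(Y=i)\le \frac{\ut}{\ut+1}$.

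The step I expect to be the real work is this last piece of bookkeeping. When $A$ realises $a-1$ against $Y$-values $j\in\{a-\ut,\dots,a-1\}$, and when $A$ realises $a+\ut-1$ against $j\in\{a,\dots,a+\ut-1\}$, the relevant values of $f_u$ are not extremal; and for $j\ge a+\ut$ player $A$'s realised payoff can be as small as $\inf f_u$. Bounding these contributions tightly enough that, after cancelling common factors, the estimate collapses to exactly $\frac{\ut}{\ut+1}$ (rather than a weaker constant) is where one must use the precise monotonicity and saturation of the \distance\ function $f_u$ --- with $\ut\in O(\poly(a,b))$ keeping all quantities finite --- and, where needed, the Markov-type estimate of Lemma~\ref{mins-a} applied to $Y$, together with the equilibrium inequality applied to the limiting deviations $T_{a-1,N}^a$ as $N\to\infty$ (which shows $\E_{j\sim Y}[f_u(a-1-j)]\le v$). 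An alternative that avoids the value bound is to first invoke Lemma~\ref{bslemma-a} and the decomposition of Lemma~\ref{pairLemma-a} to reduce $Y$ to a mixture of paired best responses $T_{s,\ell}^b$ with $s<a<\ell$, and then to bound $\ell$ in terms of $b,s,\ut$ directly from the optimality conditions on $s$ and $\ell$; there the difficulty reappears as the first-order conditions for a paired best response.
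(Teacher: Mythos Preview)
Your approach is essentially the paper's: the same deviation $X'=T_{a-1,\,a+\ut-1}^a$, the same chain $h_\Gamma^A(X',Y)\le v\le 0$, and the same split on $\{j\le a-1\}$ versus $\{j\ge a\}$. (The paper simply asserts $v\le 0$ from $a\le b$ rather than giving your shift argument.)

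Where you diverge is the final bookkeeping, which you significantly overcomplicate. There is no need to separate the boundary band $j\in\{a,\dots,a+\ut-1\}$ from the far tail, no need for Lemma~\ref{mins-a}, no limiting deviations $T_{a-1,N}^a$, and no paired decomposition of $Y$. The paper just uses the crude global bound $u(\cdot,\cdot)\ge -\um$ on the entire range $j\ge a$. Writing $P=\sum_{i=0}^{a-1}\Pr(Y=i)$ and recalling $1-p=1/\ut$, the three estimates
\[
\sum_{j\le a-1}\Pr(Y=j)\,p\,u(a-1,j)\ge 0,\qquad
\sum_{j\le a-1}\Pr(Y=j)\,(1-p)\,u(a+\ut-1,j)\ge (1-p)\,\um\,P,
\]
\[
\sum_{j\ge a}\Pr(Y=j)\bigl[p\,u(a-1,j)+(1-p)\,u(a+\ut-1,j)\bigr]\ge -\um\,(1-P)
\]
already give $0\ge h_\Gamma^A(X',Y)\ge (1-p)\,\um\,P-\um\,(1-P)$, i.e.\ $\tfrac{1}{\ut}P\le 1-P$, whence $P\le\tfrac{\ut}{\ut+1}$ immediately. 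The constant is not delicate at all; it drops out of the crudest possible lower bound on the $j\ge a$ contribution.
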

\begin{proof}
Let $X'$ be a pair distribution of player $A$ that chooses $a-1$ with probability $p$ and chooses $a+{\ut}-1$ with probability $1-p$. Thus $p=\frac{\ut-1}{\ut}$. The payoff of playing strategy $X'$ against $Y$ is 
\begin{equation}\label{ineq0}
  \begin{aligned}
		&h_{\Gamma}^A(X',Y)=\sum_{i=0}^{+\infty}\pr{Y}{i}[pu(a-1, i) + (1-p)u(a+{\ut}-1, i)].
  \end{aligned}
\end{equation}

Note that by the definition of $u$, $u(i, j) \geq 0$ if and only if $i-j \geq 0$ and $u(i, j)\leq0$ if and only if $i-j \leq 0$. Furthermore, if $i-j \geq {\ut}$ then $u(i, j)={\um}$ and if $i-j \leq -{\ut}$ then $u(i, j)=-{\um}$. Therefore,

\begin{equation}\label{ineq1}
  \begin{aligned}
	&\sum_{i=0}^{a-1}{\pr{Y}{i}pu(a-1, i)} &\geq &     &0& \\
	&\sum_{i=0}^{a-1}{\pr{Y}{i}(1-p)u(a+{\ut}-1, i)} &\geq &    &&(1-p){\um}\sum_{i=0}^{a-1}{\pr{Y}{i}} \\
	&\sum_{i=a}^{+\infty}{\pr{Y}{i}[pu(a-1, i)+(1-p)u(a+{\ut}-1, i)]} &\geq &    &&-{\um}\sum_{i=a}^{+\infty}{\pr{Y}{i}}
  \end{aligned}
\end{equation}

Note that $a \leq b$ and $(X,Y)$ is a Nash equilibrium which means $h_{\Gamma}^A(X,Y) \leq 0$. This implies  $h_{\Gamma}^A(X',Y) \leq 0$. Thus, by applying  Equality \ref{ineq0} and Inequality \ref{ineq1} we have 
\begin{equation}\nonumber
\begin{aligned}
 0\geq h_\Gamma^A(X',Y) \geq     (1-p){\um}\sum_{i=0}^{a-1}{\pr{Y}{i}} 
 -{\um}\sum_{i=a}^{+\infty}{\pr{Y}{i}},
 \end{aligned}
 \end{equation}
which implies 
$\sum_{i=0}^{+\infty}{\pr{Y}{i}} \geq
  (2-p)\sum_{i=0}^{a-1}{\pr{Y}{i}}$. 
  Thus, $ \sum_{i=0}^{a-1}{\pr{Y}{i}} \leq \frac{1}{2-p}$. By substituting $\frac{\ut-1}{\ut}$ instead of $p$ we can conclude $\sum_{i=0}^{a-1}{\pr{Y}{i}} \leq \frac{{\ut}}{{\ut}+1}$.
\end{proof}

In the following lemma we provide an upper-bound for the maximum variable with non-zero probability of a player's strategy in the equilibrium.
\begin{lemma}\label{finiteLemma-a}
Consider a Nash equilibrium $(X, Y)$ of General Lotto game $\Gamma(a, b, u)$ where $u$ is a \distance\ function with threshold $\ut$. If $\hat{u} = (4b{\ut}+4b+{\ut})(2{\ut}+2)$, then we have $Pr(Y>\hat{u}+{\ut})=0$ and $Pr(X>\hat{u}) = 0$.
\end{lemma}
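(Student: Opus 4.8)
The plan is to rule out very large values in equilibrium by a deviation argument. By Lemma~\ref{bslemma-a} together with the decomposition of Lemma~\ref{pairLemma-a}, it suffices to bound the largest value occurring in any \emph{paired} strategy that is a best response: if the equilibrium strategy $Y$ is written as a mixture $Y=\sum_r\beta_r\sigma_r$ of paired best responses $\sigma_r=T^{b}_{i_r,j_r}$ (with $i_r<b\le j_r$; similarly for $X$ with mean $a$), then $\Pr(Y>M)=0$ as soon as $j_r\le M$ for every $r$. So I would fix a paired best response $T^{c}_{i,j}$ (with $c\in\{a,b\}$, $i<c\le j$) against the opponent's equilibrium strategy $Z$, and suppose for contradiction that $j$ is much larger than $\hat u$.

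I would deviate to the paired strategy $T^{c}_{i,j'}$ with $j'=\lfloor (i+j)/2\rfloor$ the midpoint, which is again a legal mean-$c$ paired strategy (legal since $j$ is large, so $j'\ge c$). Writing $g(w)$ for the payoff, to the player in question, of playing the constant value $w$ against $Z$, the midpoint choice makes the probability mass transferred from $j$ down to $i$ equal to $\alpha:=\Pr(T^{c}_{i,j}=j)$, and a one-line computation gives
$$\bigl(\text{payoff of }T^{c}_{i,j'}\bigr)-\bigl(\text{payoff of }T^{c}_{i,j}\bigr)=\alpha\bigl[(g(j')-g(i))-(g(j)-g(j'))\bigr].$$
Since $u$ is a \distance\ function, $f_u$ is monotone, hence $g$ is monotone, so both bracketed differences are $\ge 0$ and it suffices to show $g(j')-g(i)>g(j)-g(j')$, contradicting optimality of $T^{c}_{i,j}$.

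I would bound the two differences using that $f_u$ saturates at $\pm\,\um$ outside $[-\ut,\ut]$. For the ``loss'' $g(j)-g(j')$: on every opponent value $w\le j'-\ut$ both $j$ and $j'$ already earn the maximum $\um$, so $g(j)-g(j')\le 2\um\,\Pr(Z>j'-\ut)$, which by Lemma~\ref{mins-a} (Markov, using only $\E[Z]\le b$) is at most $2\um b/(j'-\ut+1)$ --- negligible once $j'$, hence $j$, exceeds $\hat u$. For the ``gain'' $g(j')-g(i)$: every opponent value $w$ with $i+\ut\le w\le j'-\ut$ contributes exactly $2\um$ ($j'$ wins outright while $i$ loses outright), so $g(j')-g(i)\ge 2\um\,\Pr(i+\ut\le Z\le j'-\ut)$. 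Since the Markov tail already makes $\Pr(Z>j'-\ut)$ negligible, the whole argument reduces to a lower bound of the shape $\Pr(Z\ge i+\ut)\gtrsim 2b/(j'-\ut+1)$: the opponent must keep a non-negligible share of its mass at ``moderately large'' values. For $Z=Y$ this is supplied by Lemma~\ref{maxs-a}, which gives $\Pr(Y\ge a)\ge \tfrac1{\ut+1}$; the constant $\hat u=(4b\ut+4b+\ut)(2\ut+2)$, whose last factor is $2(\ut+1)$, is calibrated precisely so that $\tfrac1{\ut+1}$ beats the Markov tail with room to spare, and the extra $+\ut$ in the bound for $Y$ (versus $\hat u$ for $X$) absorbs the $\ut$-shift between the two players' thresholds coming from the asymmetry $a\le b$.

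The crux, and the step I expect to be the main obstacle, is making this ``moderate-mass'' lower bound rigorous for \emph{both} players and, in particular, excluding degenerate equilibria. Lemma~\ref{maxs-a} is stated for $Y$ only and controls $\Pr(Y\le a-1)$, not $\Pr(Y\le i+\ut-1)$ for the slightly larger cutoff $i+\ut$, so one needs either a small sharpening of it or to carry out the two support bounds in a coordinated order, so that an already-established bound on one player's support feeds the moderate-mass bound for the other. More seriously, if the opponent's support were entirely small (a player ``conceding''), the gain region could be empty; this must be ruled out by using that $(X,Y)$ is an \emph{equilibrium}, not merely a best-response pair --- concretely, by noting that a conceding player can always improve its payoff by a small amount, which forces the opponent to retain some mass at larger values. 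Assembling these pieces, and chasing the explicit constants through to exactly $(4b\ut+4b+\ut)(2\ut+2)$, is the bulk of the work; everything else is Markov plus the monotonicity and saturation of $f_u$.
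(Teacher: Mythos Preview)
Your plan has the right core idea --- a deviation argument comparing two paired strategies and bounding the gain and loss via Lemma~\ref{mins-a} (Markov) and Lemma~\ref{maxs-a} --- and this is exactly what the paper does. But your implementation diverges from the paper's in two ways that create the very obstacles you flag. First, the paper does not decompose $X$ into paired best responses. It simply picks any $z>\hat u$ and any $x<a$ with $\Pr(X=z),\Pr(X=x)>0$, and deviates $X\to X-\epsilon T^a_{x,z}+\epsilon T^a_{x,y}$; linearity immediately gives $h^A_\Gamma(T^a_{x,z},Y)\ge h^A_\Gamma(T^a_{x,y},Y)$ without ever needing $T^a_{x,z}$ to be a global best response. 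This sidesteps your finite-support worry about Lemma~\ref{pairLemma-a} entirely. Second, the paper does not halve to the midpoint: it deviates to a \emph{fixed} target $y=4b\ut+4b+\ut$. The stated constant $\hat u=y(2\ut+2)$ then falls out directly from the resulting inequality $q/p\le 2\ut+2$ (where $p=\frac{a-x}{z-x}$, $q=\frac{a-x}{y-x}$), rather than being ``calibrated'' to match $1/(\ut+1)$ as you suggest; your midpoint route would in fact give a different (smaller) constant.

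The cutoff issue you identify is real for your formulation but absent in the paper's: because the paper keeps the low endpoint $x$ the same on both sides and uses only that $f_u(x-i)\le 0$ for $i>x$, the relevant ``moderate mass'' quantity is $\Pr(x<Y\le y-\ut-1)$, and since $x<a$ this is controlled directly by Lemma~\ref{maxs-a} without any sharpening. In your version the $2\um$ gain bound forces the cutoff $i+\ut$; you can repair this by dropping to $\um\,\Pr(i<Z\le j'-\ut)$ (using only $f_u(i-Z)\le 0$ for $Z>i$), after which Lemma~\ref{maxs-a} applies verbatim. Finally, the paper handles the two players asymmetrically rather than ``in a coordinated order'': it proves the bound on $X$ first using Lemma~\ref{maxs-a} (which is only available for $Y$), and then deduces $\Pr(Y>\hat u+\ut)=0$ in one line from saturation --- once $X\le\hat u$ almost surely, any $Y$-mass above $\hat u+\ut$ already achieves the extremal payoff $-\um$, so shifting it down to $\hat u+\ut$ frees expectation that $B$ can profitably reallocate.
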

\begin{proof}
First, we prove for any integer $z>\hat{u}$, $\pr{X}{z} = 0$. The proof is by contradiction. Let $z > \hat{u}$ be an integer with non-zero probability in $X$. Thus there is an integer $x < a$ with non-zero probability in $X$. Consider the pair distribution $T_{x, z}^a$. We define another pair distribution $T_{x, y}^a$ where $y=4b{\ut}+4b+{\ut}$. 

Consider strategy $X^{\epsilon} = X - \epsilon T_{x,z}^a + \epsilon T_{x,y}^a$.
Note that $(X, Y)$ is a Nash equilibrium of the game. This means strategy $X$ is a best response of player $A$ to strategy $Y$ of player $B$ which implies $h_{\Gamma}^A(X, Y) \geq h_{\Gamma}^A(X^{\epsilon}, Y)$. On the other hand, because of the linearity of expectation we can write 
$$h_{\Gamma}^A(X^{\epsilon}, Y) = h_{\Gamma}^A(X, Y) - \epsilon h_{\Gamma}^A(T_{x,z}^a, Y) + \epsilon h_{\Gamma}^A(T_{x,y}^a, Y).$$
Therefore, we conclude $w=h_{\Gamma}^A(T_{x, z}^a, Y) - h_{\Gamma}^A(T_{x, y}^a, Y) \geq 0$.
Let $p=\alpha_{z, x}^a$ and $q=\alpha_{y, x}^a$. We have
\begin{equation*}
\label{mneq}
\begin{split}
	w=\sum_{i=0}^{+\infty}{\pr{Y}{i}[(1-p)u(x, i) + pu(z, i)]}- 
	\sum_{i=0}^{+\infty}{\pr{Y}{i}[(1-q)u(x, i) + q u(y, i)]} \geq 0.
\end{split}
\end{equation*}
We write $w$ as $w = w_1+w_2-w_3-w_4+w_5$, where 
\begin{equation}\nonumber
	\begin{aligned}
		w_1=&\sum_{i=0}^{x}\pr{Y}{i}[[(1-p)u(x, i) + pu(z, i)] - [(1-q)u(x, i) + qu(y, i)]],\\
		w_2=&\sum_{i=x+1}^{+\infty}{\pr{Y}{i}[(1-p)u(x, i) - (1-q)u(x, i)]},\\
		w_3=&\sum_{i=x+1}^{y-{\ut}-1}{\pr{Y}{i}qu(y, i)},\\
		w_4=&\sum_{i=y-{\ut}}^{+\infty}{\pr{Y}{i}qu(y, i)},\\
		w_5=&\sum_{i=x+1}^{+\infty}{\pr{Y}{i}pu(z, i)}.
	\end{aligned}
\end{equation}

Since $1-p \geq 1-q$ and $u(z, i)=u(y, i)={\um}$ for all $i \leq x$, we can conclude $w_1 \leq 0$.
For all $i > x$, we have $u(x, i) \leq 0$, and we also know $1-p \geq 1-q$. These mean $w_2 \leq 0$.
Since for all $i \leq y-{\ut}$ we have $u(y, i)={\um}$, we conclude
\begin{equation}\label{subeq2}
	w_3 = q{\um}\sum_{i=x+1}^{y-{\ut}-1}{\pr{Y}{i}}
\end{equation}
Moreover, for any arbitrary integers $i$ and $j$, we have $-{\um} \leq u(i, j) \leq {\um}$. Thus
\begin{equation}\label{subeq3}
  \begin{aligned}
-w_4 \leq q{\um}\sum_{i=y-{\ut}}^{+\infty}{\pr{Y}{i}}
  \end{aligned}
\end{equation}
\begin{equation}\label{subeq4}
w_5 \leq p{\um}\sum_{i=x+1}^{+\infty}{\pr{Y}{i}}
\end{equation}
Therefore by knowing $w\geq 0$, $w_1 \leq 0$, $w_2 \leq 0$, and considering Inequalities \ref{subeq2}, \ref{subeq3}, and \ref{subeq4}, we conclude
\begin{equation}\label{eqsun1}
\begin{split}
{\um}(-q\sum_{i=x+1}^{y-{\ut}-1}{\pr{Y}{i}} + q\sum_{i=y-{\ut}}^{+\infty}{\pr{Y}{i}} +  p\sum_{i=x+1}^{+\infty}{\pr{Y}{i}}) \geq w \geq 0.
\end{split}
\end{equation}
$y-{\ut}-1=4b{\ut}+4b-1$ and by Lemma \ref{mins-a}, $\sum_{i=0}^{y-{\ut}-1}{\pr{Y}{i}} \geq 1- \frac{b}{4b{\ut}+4b}=\frac{4{\ut}+3}{4{\ut}+4}$. Note that $x<a$ which means $\sum_{i=0}^{x} \pr{Y}{i} \leq \sum_{i=0}^{a-1} \pr{Y}{i}$. On the other hand, Lemma \ref{maxs-a} says
$\sum_{i=0}^{a-1} \pr{Y}{i} \leq \frac{{\ut}}{{\ut}+1}$. Hence we can conclude $\sum_{i=0}^{x} \pr{Y}{i} \leq \frac{{\ut}}{{\ut}+1}$. Therefore 
\begin{equation}\label{eqsun2}
\sum_{i=x+1}^{y-{\ut}-1}{\pr{Y}{i}} = \sum_{i=0}^{y-{\ut}-1}{\pr{Y}{i}} - \sum_{i=0}^{x}{\pr{Y}{i}} \geq \frac{3}{4{\ut}+4}
\end{equation}
and 
\begin{equation}\label{eqsun3}
\sum_{i=y-{\ut}}^{+\infty}{\pr{Y}{i}} = 1 - \sum_{i=0}^{y-{\ut}-1}{\pr{Y}{i}} \leq \frac{1}{4{\ut}+4}.
\end{equation}

By Inequalities \ref{eqsun1}, \ref{eqsun2}, \ref{eqsun3}, and $\sum_{i=x+1}^{+\infty}{\pr{Y}{i}} \leq 1$, we have 
\begin{equation}\label{eqsun4}
\begin{split}
-q \frac{3}{4{\ut}+4} + q \frac{1}{4{\ut}+4} +  p \geq 0.
\end{split}
\end{equation}
which implies $\frac{q}{p} \leq 2{\ut}+2$. Recalling $p = \alpha^{a}_{z,x}=\frac{a-x}{z-x}$, $q = \alpha^{a}_{y,x}=\frac{a-x}{y-x}$, and $z>y$, we can bound $\frac{z}{y}$ as follows
$\frac{z}{y} \leq \frac{z-x}{y-x}=\frac{q}{p} \leq 2{\ut}+2$.
Therefore $z \leq y (2{\ut}+2)=\hat{u}$ which is a contradiction. 
Knowing that player $A$ put zero probability on every number $z > \hat{u}$ and considering the definition of \distance\ function $u$, player $B$ will put zero probability of every number greater than $\hat{u}+{\ut}$ in any Nash equilibrium.
\end{proof}

The following theorem follows immediately after Theorem \ref{finiteTheorem-a} and Lemma \ref{finiteLemma-a}.
\begin{theorem}
\label{thm:dist-a}
There is a polynomial time algorithm which finds a Nash Equilibrium of the General Lotto game $\Gamma(a, b, u)$ where $u$ is a \distance\ function.
\end{theorem}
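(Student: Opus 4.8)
The plan is to reduce the infinite-strategy game $\Gamma(a,b,u)$ to a \emph{finite} General Lotto game on a polynomially large ground set and then quote Theorem~\ref{finiteTheorem-a}. First I would invoke Lemma~\ref{finiteLemma-a}, which already shows that every Nash equilibrium $(X,Y)$ of $\Gamma(a,b,u)$ satisfies $\pr{X}{z}=0$ for $z>\hat u$ and $\pr{Y}{z}=0$ for $z>\hat u+\ut$, where $\hat u=(4b\ut+4b+\ut)(2\ut+2)$. Because $u$ is a \distance\ function we have $\ut\in O(\poly(a,b))$, so $N:=\hat u+\ut$ is polynomial in the input size. Hence it suffices to work with the restricted ground set $S=\{0,1,\dots,N\}$.

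Next I would set up and solve the finite game $\Gamma(a,b,u,S)$: by Theorem~\ref{finiteTheorem-a} (itself an instance of the polynomially separable bilinear framework of Theorem~\ref{thm:bi-sep}) one can compute a Nash equilibrium $(X^{\ast},Y^{\ast})$ of $\Gamma(a,b,u,S)$ in time $O(\poly(|S|))=O(\poly(a,b))$. This is the computational core, and it comes essentially for free from the earlier sections.

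The remaining, and only substantive, step is to argue that $(X^{\ast},Y^{\ast})$ is in fact a Nash equilibrium of the unrestricted game. The clean half is easy: the infinite game has some equilibrium $(X,Y)$, by the previous paragraph it is supported inside $S$, and since no unilateral deviation helps in the larger game none helps after restricting to $S$ either, so $(X,Y)$ is also an equilibrium of $\Gamma(a,b,u,S)$; consequently the two zero-sum games have the same value $v=h^A_{\Gamma}(X,Y)=h^A_{\Gamma}(X^{\ast},Y^{\ast})$. The delicate half is ruling out a profitable deviation against $(X^{\ast},Y^{\ast})$ that uses atoms outside $S$. Here I would appeal to Lemma~\ref{bslemma-a} and its symmetric version to reduce any best response to a paired strategy $T^{a}_{i,j}$ (resp.\ $T^{b}_{i,j}$), and then observe that, since the opponent's mixed strategy is supported in $S$ and $f_u$ attains its extreme value at threshold $\ut$, a paired deviation whose large atom $j$ exceeds $N+\ut$ is never strictly better than the one with $j$ pulled down to $N+\ut$: under the expectation constraint the probability on $j$ is at most roughly $b/j$, and moving that vanishing mass inward only improves the $u(\cdot,\cdot)$ contributions by monotonicity. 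This confines all relevant deviations to a finite enlargement of $S$ on which $X^{\ast}$ and $Y^{\ast}$ are already optimal.

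I expect this last truncation/monotonicity estimate to be the main obstacle: everything else is either quoted verbatim from Theorem~\ref{finiteTheorem-a} and Lemma~\ref{finiteLemma-a} or is routine zero-sum/minimax bookkeeping, whereas certifying that no deviation with an arbitrarily large atom can beat the computed profile is exactly the kind of careful payoff comparison already carried out in the proofs of Lemmas~\ref{mins-a}--\ref{finiteLemma-a}, and would need to be redone for the fixed strategies $X^{\ast}$, $Y^{\ast}$ rather than for an unknown equilibrium.
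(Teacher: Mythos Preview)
Your proposal follows essentially the same route as the paper: invoke Lemma~\ref{finiteLemma-a} to bound the support of any equilibrium by a polynomial quantity, restrict to the finite ground set $S$, and apply Theorem~\ref{finiteTheorem-a}. The paper's own proof is a three-line argument that simply declares the infinite game ``equivalent'' to $\Gamma(a,b,u,S)$ once the support bound is in hand; you are being more scrupulous than the paper in flagging the ``delicate half'' (that an equilibrium of the finite game remains an equilibrium against unrestricted deviations), which the paper treats as immediate.
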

\begin{proof}
Let $\bar{u}=(4b{\ut}+4b+{\ut})(2{\ut}+2)+{\ut}$. Lemma \ref{finiteLemma-a} shows there is a bound on the optimal strategies in a Nash equilibrium. More precisely $Pr(Y>\bar{u})=0$, where $Y$ is a strategy of player $A$ or $B$. Thus General Lotto game $\Gamma(a, b, u)$ is equivalent to finite General Lotto game $\Gamma(a, b, f, S)$, where $S=\{1, 2, \ldots, \bar{u}\}$. By Theorem \ref{finiteTheorem-a}, a polynomial-time algorithm finds a Nash equilibrium of the game.
\end{proof}

\section{Oracles}\label{oracles}

In this section we describe, in precise detail, the separating oracles used by the ellipsoid method to solve our represented linear programs.
Consider we are given a sequence $c_0, c_1,\ldots, c_{k(m+1)}$, where $k$ is the number of battlefields and $m$ is the number of troops for a player. We first present an algorithm which finds a pure strategy $x=(x_1, x_2, \dots, x_k)\in \pureset$ such that $\sum_{i=1}^kx_i=m$, and $\xh=\mathcal{G}(x)$ minimizes the following equation.
\begin{equation}\label{minpure_eq-a}
c_0+\sum_{i=1}^{k(m+1)} c_i \hat{x}_i
\end{equation}
Then we leverage this algorithm and design polynomial-time algorithms for the hyperplane separating oracle and best-response separating oracle. 
The following lemma shows that Algorithm \ref{alg:minpure} (\minpure) finds the minimizer of Equation \ref{minpure_eq-a}.
\begin{algorithm}
\caption{\minpure} 
\textbf{input:} $m$, $k$, $c_0,c_1,c_2,\ldots,c_{k(m+1)}$
\begin{algorithmic}[1]

\For{$j\gets 1$ to $m$}
	\State $d[0,j] \gets c_0$
\EndFor
\For{$i\gets 1$ to $k$}
	\For{$t\gets 0$ to $m$}
		\For{$j\gets 0$ to $t$}
			\If{$d[i-1,t-j]+c_{(i-1)(m+1)+j+1} <  d[i,t]$}
			\State $d[i,t]\gets d[i-1,t-j]+c_{(i-1)(m+1)+j+1}$
			\State $r[i,t]\gets j$
			\EndIf
		\EndFor
	\EndFor
\EndFor

\State $rem \gets m$
\For{$i \gets k$ downto $1$}
	\State $x_i\gets r[i,rem]$
	\State $rem \gets rem-r[i,rem]$
\EndFor
\State \Return $x=(x_1,x_2,\ldots,x_k)$
\end{algorithmic}
\label{alg:minpure}
\end{algorithm}

\begin{lemma}\label{lem:minpure-a}
Given two integers $m$ and $k$ and a sequence $c_0,c_1,\ldots,c_{k(m+1)}$, algorithm \minpure\ correctly finds an optimal pure strategy $x=(x_1, x_2, \dots, x_k)$ where $\sum_{i=1}^{k}x_i=m$, $\xh=\mathcal{G}(x)$ and $\xh$ minimizes $c_0+\sum_{i=1}^{k(m+1)} c_i \hat{x}_i$. 
\end{lemma}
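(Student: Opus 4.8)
The plan is to prove correctness of the dynamic program in Algorithm~\ref{alg:minpure} by the standard argument for additively separable resource-allocation problems, in three moves: reformulate the objective in terms of $x$, justify the recurrence, and read off the answer with a backtracking step.

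First I would rewrite the objective purely in terms of the allocation $x$. Since $\xh=\mathcal{G}(x)$ is the $0/1$ vector with $\hat{x}_{i,j}=1$ exactly when $x_i=j$, for each battlefield $i$ precisely one of the coordinates $\hat{x}_{i,0},\ldots,\hat{x}_{i,m}$ equals $1$, namely the one with $j=x_i$. Using the indexing convention $\hat{x}_{i,j}=\hat{x}_{(i-1)(m+1)+j+1}$, this gives
\begin{equation*}
c_0+\sum_{i'=1}^{k(m+1)} c_{i'}\hat{x}_{i'} \;=\; c_0+\sum_{i=1}^{k} \gamma_i(x_i), \qquad \gamma_i(t') := c_{(i-1)(m+1)+t'+1}.
\end{equation*}
So the task is exactly to choose nonnegative integers $x_1,\ldots,x_k$ with $\sum_i x_i=m$ minimizing $c_0+\sum_i\gamma_i(x_i)$, a problem with separable per-battlefield costs.

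Next I would introduce the subproblem $d[i,t]$, defined as the minimum of $c_0+\sum_{i'=1}^{i}\gamma_{i'}(x_{i'})$ over nonnegative integers $x_1,\ldots,x_i$ summing to $t$, with $d[i,t]=+\infty$ if no such assignment exists; I would take base values $d[0,0]=c_0$ and $d[0,t]=+\infty$ for $t\geq 1$, since the empty assignment costs $c_0$ and uses no troops. The key point is the decomposition/exchange step: in any feasible assignment for $(i,t)$ the last battlefield receives some $t'\in\{0,\ldots,t\}$ troops while the first $i-1$ battlefields receive a feasible assignment summing to $t-t'$, and conversely any such pair yields a feasible assignment for $(i,t)$; hence
\begin{equation*}
d[i,t]=\min_{0\leq t'\leq t}\bigl\{\,d[i-1,t-t']+\gamma_i(t')\,\bigr\},
\end{equation*}
which is exactly the recurrence \minpure\ evaluates. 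A routine induction on $i$ then shows the table computed by the algorithm equals $d[\cdot,\cdot]$, so $d[k,m]$ is the optimal value. Finally, the pointer table $r[i,t]$ stores a minimizing $t'$ for each cell, and the backtracking loop recovers an assignment $x$ with $\sum_i x_i=m$ whose cost is $d[k,m]$, i.e., an optimal pure strategy; since the triple loop runs in $O(km^2)$ time and the input has size $\Theta(km)$, this is polynomial.

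I do not expect a serious obstacle: the content is a textbook correctness proof for a knapsack-style DP. The only thing that needs a little care is the base case — one must treat ``distributing a positive number of troops among zero battlefields'' as infeasible, so that $d[k,m]$ ranges only over assignments whose coordinates sum to exactly $m$ — together with the structural fact that $\mathcal{G}(x)$ is an indicator vector, which is precisely what makes the objective additively separable across battlefields and hence amenable to this dynamic program. (The same argument goes through verbatim if each $\gamma_i$ is replaced by an arbitrary per-battlefield cost function, which is the form needed by the general framework.)
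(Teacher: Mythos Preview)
Your argument mirrors the paper's: rewrite the objective as a separable sum $c_0+\sum_i\gamma_i(x_i)$, set up the subproblem $d[i,t]$, verify the one-step recurrence, and backtrack via the pointer table $r[i,t]$. The one substantive difference is the base case: you take $d[0,0]=c_0$ and $d[0,t]=+\infty$ for $t\geq1$, whereas Algorithm~\ref{alg:minpure} as written (and the paper's own proof) sets $d[0,j]=c_0$ for all $j$, which makes the table minimize over $\sum_i x_i\leq m$ rather than $=m$; you are right to flag this as the point needing care, and your convention is the one that actually enforces the constraint in the statement.
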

\begin{proof}
In Algorithm \minpure, using a dynamic programming approach, we define $d[i,t]$ to be the minimum possible value of $c_0+\sum_{i'=1}^{i(t+1)} c_{i'}\hat{x}_{i'}$ where $\sum_{i'=1}^{i}x_{i'}=t$. Hence, $d[k,m]$ denotes the minimum possible value of $c_0+\sum_{i=1}^{k(m+1)} c_i \hat{x}_i$. Now, we show that Algorithm \minpure\ correctly computes $d[i,t]$ for all $0\leq i \leq k$ and $0\leq t\leq m$. Obviously $d[0,j]$ is equal to $c_0$. For an arbitrary $i>0$ and $t$, the optimal strategy $x$ puts $0\leq t'\leq t$ units in the $i$-th battlefield and the applied cost in the equation \ref{minpure_eq-a} is equal to $c_{(i-1)(m+1)+t'+1}$. Thus,
\begin{equation*}
d[i,t] = \min_{0\leq t'\leq t}\{d[i-1,t-t']+c_{(i-1)(m+1)+t'+1}\}
\end{equation*}
To compute the optimal pure strategy $x=(x_1,x_2,\ldots,x_k)$ we also keep a value 
\begin{equation*}
r[i,t] = \argmin_{0\leq t'\leq t}\{d[i-1,t-t']+c_{(i-1)(m+1)+t'+1}\}
\end{equation*}
which determines the number of units the optimal strategy should put in the $i$-th battlefield to minimize $c_0+\sum_{i'=1}^{i(t+1)} c_{i'}\hat{x}_{i'}$. Assuming we have correctly computed $x_{i+1},\ldots,x_k$, in line $16$, algorithm \minpure\ correctly computes $x_i$ which is equal to $r[i,m-\sum_{j=i+1}^{k} x_j]$. Since $x_k=r[k,m]$ we can conclude algorithm \minpure\ correctly computes the optimal strategy $x=(x_1, x_2, \dots, x_k)$.
\end{proof}

\subsection{Hyperplane separating oracle}
\label{oracle:sep}
Algorithm \ref{alg:hp_sep} (\hporacle) gets a hyperplane as input and either finds a point in $\vs$ which violates constraints in LP \ref{Prog:separating_plane} or reports that all points in $\vs$ are satisfying all constraints in LP \ref{Prog:separating_plane}. We suppose that the input hyperplane is described by the following equation,
\begin{equation}
\alpha_0+\alpha_1 \hat{x}_1 + \ldots+ \alpha_{k(a+1)} \hat{x}_{k(a+1)}=0
\end{equation}

and we want to find a point $\xh\in \vs$ that violates the following constraint:
\begin{equation}
\alpha_0+\sum_{i=1}^{n} \alpha_i \hat{x}_i \geq 0
\end{equation}

This problem is equivalent to finding a point $\xh^{min} \in \vs$ which minimizes equation $\alpha_0+\sum_{i=1}^{n} \alpha_i \hat{x}_i$. If $\alpha_0+\sum_{i=1}^{n} \alpha_i \hat{x}^{min}_i \geq 0$ it means all points in $\vs$ are satisfying the constraints of LP, and otherwise $\xh^{min}$ is a point which violates constraint \ref{cn2}. Since points in $\vs$ are equivalent to pure strategies of player $A$, we can use algorithm \minpure\  to find $\xh^{min}$. Thus we can conclude algorithm \hporacle\ correctly finds a violated constraint or reports that the hyperplane satisfies constraints \ref{cn2} of LP \ref{Prog:separating_plane}.

\begin{algorithm}
\caption{\hporacle} 
\textbf{input:} $a$, $k$, $\alpha_0,\alpha_1,\ldots,\alpha_{k(a+1)}$
\begin{algorithmic}[1]
\State $x^{min} \gets \minpure(a,k,\alpha_0,\alpha_1,\ldots,\alpha_{k(a+1)})$
\State $\xh^{min} \gets \G(x^{min})$
\If{$\alpha_0+\sum_{i=1}^{k(a+1)} \alpha_i \hat{x}^{min}_i \geq 0$}
\State \Return pass
\Else
\State \Return $\alpha_0+\sum_{i=1}^{k(a+1)} \alpha_i \hat{x}^{min}_i < 0$
\EndIf
\end{algorithmic}
\label{alg:hp_sep}
\end{algorithm}

\subsection{Best-response separating oracle}
\label{oracle:best}
Algorithm \ref{alg:br_sep} (\broracle) gets a pair $(\xh,U)$ as input and decides whether there is a pure strategy $y=(y_1,y_2,\ldots,y_k)\in \mathcal{Y}$ such that for $\yh=\Gb(y)$, we have
\begin{equation}\label{brequation}
\sum_{i=1}^{k} \sum_{t_a=0}^{a} \sum_{t_b=0}^{b} \hat{x}_{i,t_a} \hat{y}_{i,t_b} u^A_i(t_a, t_b) < U.
\end{equation}

We can rewrite inequality \ref{brequation} as follows:
\begin{equation*}
\sum_{i=1}^{k} \sum_{t_b=0}^{b} \hat{y}_{i,t_b} \sum_{t_a=0}^{a} \hat{x}_{i,t_a} u^A_i(t_a, t_b) < U
\end{equation*}

Therefore, by letting $c_{i,t_b} = \sum_{t_a=0}^{a} \hat{x}_{i,t_a} u^A_i(t_a, t_b)$, this problem is equivalent to find a point $\yh^{min}\in \Ib$ which minimizes $\sum_{i'=1}^{k(b+1)} c_{i'} \hat{y}_{i'}$. If $\sum_{i'=1}^{k(b+1)} c_{i'} \hat{y}^{min}_{i'} < U$ we have found a violating payoff constraint of LP \ref{Prog:3} and if $\sum_{i'=1}^{k(b+1)} c_{i'} \hat{y}^{min}_{i'} \geq U$, pair $(\xh,U)$ satisfies all the payoff constraints of LP \ref{Prog:3}. Thus, by Lemma \ref{lem:minpure-a} we conclude that algorithm \broracle\ correctly finds a violating payoff constraint of LP \ref{Prog:3} or reports that $(\xh,U)$ satisfies all the payoff constrains.

\begin{algorithm}
\caption{\broracle} 
\textbf{input:} $a$, $b$, $k$, $U$, $\hx_1,\ldots,\hx_{k(a+1)}$

\begin{algorithmic}[1]
\For{$i\gets 1$ to $k$}
\For{$t_b\gets 0$ to $b$}
\State $c_{(i-1)(b+1)+t_b+1}=c_{i,t_b} = \sum_{t_a=0}^{a} \hx_{i,t_a}$ $u_{i}^{A}(t_a,t_b)$
\EndFor
\EndFor
\State $y^{min} \gets \minpure(b,k,c_0,c_1,c_2,\ldots,c_{k(b+1)})$
\State $\yh^{min} \gets \Gb(y^{min})$
\If{$\sum_{i=1}^{k(b+1)} c_i \hat{y}^{min}_i \geq U$}
\State \Return pass
\Else
\State \Return $\sum_{i=1}^{k(b+1)} c_i \hat{y}^{min}_i < U$
\EndIf
\end{algorithmic}
\label{alg:br_sep}
\end{algorithm}

\section{Proofs Ommitted from Appendix A}
\label{app:omitted}

\begin{proofof}{Lemma \ref{polyhedron}}
We need Lemmas \ref{linear-property} and \ref{convexity-lemma} for proving Lemma \ref{polyhedron}.
\begin{lemma}
\label{linear-property}
Let $\x^1,\x^2,\ldots,\x^t$ be  $t$ arbitrary mixed strategies of player $A$,  $\sum_{r=1}^t \alpha_r =1$, and  $\x = \sum_{r=1}^t \alpha_r\x^r$ be a mixed strategy that plays strategy $\x^r$ with probability $\alpha_r$,
then $\functiona(\x) = \sum_{r =1}^t \alpha_r\functiona(\x^r)$.
\end{lemma}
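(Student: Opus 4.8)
The plan is to verify the claimed identity coordinate by coordinate, using the law of total probability. Recall that $\functiona$ maps a mixed strategy to its vector of marginal probabilities: for $1 \le i \le k$ and $0 \le j \le a$, the coordinate $\functiona(\x)_{i,j}$ (that is, $\functiona(\x)_{(i-1)(a+1)+j+1}$) is defined to be the probability that $\x$ places exactly $j$ troops on battlefield $i$. Hence it suffices to prove that for every such pair $(i,j)$ we have $\functiona(\x)_{i,j} = \sum_{r=1}^t \alpha_r \, \functiona(\x^r)_{i,j}$, since a vector identity is just the conjunction of its coordinate identities.

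First I would make precise what the mixture $\x = \sum_{r=1}^t \alpha_r \x^r$ means as a distribution over pure strategies: to sample a pure strategy from $\x$ one first draws an index $r \in \{1,\dots,t\}$ with probability $\alpha_r$ (a legitimate distribution, since $\alpha_r \ge 0$ and $\sum_r \alpha_r = 1$) and then draws a pure strategy according to $\x^r$. Equivalently, viewing $\x$ and each $\x^r$ as functions $\X \to [0,1]$, for every pure strategy $\px \in \X$ we have $\x(\px) = \sum_{r=1}^t \alpha_r \, \x^r(\px)$.

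Next, fixing $(i,j)$, let $A_{i,j} = \{\px \in \X : x_i = j\}$ be the set of pure strategies placing exactly $j$ troops on battlefield $i$, so that $\functiona(\x)_{i,j} = \sum_{\px \in A_{i,j}} \x(\px)$ and likewise $\functiona(\x^r)_{i,j} = \sum_{\px \in A_{i,j}} \x^r(\px)$. Substituting the expression for $\x(\px)$ and interchanging the two finite sums gives
\[
\functiona(\x)_{i,j} = \sum_{\px \in A_{i,j}} \sum_{r=1}^t \alpha_r \, \x^r(\px) = \sum_{r=1}^t \alpha_r \sum_{\px \in A_{i,j}} \x^r(\px) = \sum_{r=1}^t \alpha_r \, \functiona(\x^r)_{i,j}.
\]
As this holds for every coordinate, the vector identity $\functiona(\x) = \sum_{r=1}^t \alpha_r \functiona(\x^r)$ follows, and the same argument applies verbatim to $\functionb$ for player $B$.

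There is no genuine obstacle here: the statement is simply the linearity of the marginal-probability map, and the only points requiring care are spelling out what a convex combination of mixed strategies means at the level of distributions and justifying the interchange of the two (finite) summations. I would expect the writeup to be short.
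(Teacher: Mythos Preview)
Your proposal is correct and follows essentially the same approach as the paper: verify the identity coordinate by coordinate, using that $\functiona(\cdot)_{i,j}$ is a marginal probability and hence linear in the mixing weights. Your write-up is simply a more detailed version of the paper's one-line argument.
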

\begin{proof}
Since $\x = \sum_{r=1}^t \alpha_r\x^r$ and $\functiona(\x^r)_j$ represent the probability that strategy $\x^r$ plays $j$, we have $[\functiona(\x)]_j = \sum_{r=1}^t\alpha_r[\functiona(\x^r)]_j$,  for all  $1 \leq j \leq \na$. Therefore, $\functiona(\x) = \sum_{r=1}^t\alpha_r\functiona(\x^r)$.
\end{proof}

\begin{lemma}\label{convexity-lemma}
$\sets$ is a convex set.
\end{lemma}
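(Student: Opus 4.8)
The plan is to exploit the affineness of the map $\functiona$ recorded in Lemma \ref{linear-property}, together with the trivial fact that the strategy space $\strategyset$ is itself convex (it is the probability simplex over the finite pure-strategy set $\pureset$). Since $\sets$ is by definition the image $\functiona(\strategyset)$, it will then be convex as the image of a convex set under an affine map.

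Concretely, I would take two arbitrary points $\xh^1,\xh^2 \in \sets$ and a scalar $\lambda \in [0,1]$, and show $\lambda\xh^1 + (1-\lambda)\xh^2 \in \sets$. By the definition of $\sets$ there are mixed strategies $\x^1,\x^2 \in \strategyset$ with $\functiona(\x^1)=\xh^1$ and $\functiona(\x^2)=\xh^2$. Form the mixed strategy $\x = \lambda\x^1 + (1-\lambda)\x^2$, i.e.\ the distribution that plays $\x^1$ with probability $\lambda$ and $\x^2$ with probability $1-\lambda$; this is again a distribution over $\pureset$, so $\x \in \strategyset$. Applying Lemma \ref{linear-property} with $t=2$, $\alpha_1=\lambda$, $\alpha_2=1-\lambda$ yields $\functiona(\x) = \lambda\functiona(\x^1) + (1-\lambda)\functiona(\x^2) = \lambda\xh^1 + (1-\lambda)\xh^2$. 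Thus $\x$ witnesses $\lambda\xh^1 + (1-\lambda)\xh^2 \in \sets$, which is precisely the statement of convexity.

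I do not expect any real obstacle here. The only point worth a sentence is that a convex combination of two mixed strategies is again a valid mixed strategy, which is immediate because $\strategyset$ is a simplex; and membership in $[0,1]^{\na}$ is automatic since the combined point equals $\functiona(\x)$. All the substantive work has been pushed into Lemma \ref{linear-property}; within the surrounding Lemma \ref{polyhedron}, the genuinely nontrivial part is instead the counting of vertices and facets, which is not needed for convexity per se.
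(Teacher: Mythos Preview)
Your proposal is correct and matches the paper's own proof essentially line for line: take two points of $\sets$, lift them to mixed strategies via the definition of $\sets$, form the convex combination in $\strategyset$, and apply Lemma~\ref{linear-property} to conclude. The paper adds nothing beyond what you wrote.
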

\begin{proof}
A set of points is convex if and only if every segment joining two of its points  is completely in the set. Let $\ppointa{\xh}{\hat{x}}$ and $\ppointa{\xhp}{\hat{x}'}$ be two points in $\Rna$. We show that if $\xh,\xhp \in \sets$ then for every $0 \leq \alpha \leq 1$, $\xhz =  (\alpha \xh + (1-\alpha) \xhp) \in \sets$.\\
Since $\xh$ and $\xhp$ are in $\sets$, there exist mixed strategies $\x$ and $\xp$ in $\strategyset$, such that $\xh = \functiona(\x)$ and $\xhp = \functiona(\xp)$. Let $\xz = \alpha \x + (1-\alpha) \xp$ be a mixed strategy of player $A$ that plays $\x$ with probability $\alpha$ and $\xp$ with probability $1-\alpha$. By Lemma \ref{linear-property} we have $\functiona(\xz) = \alpha \xh + (1-\alpha) \xhp = \xhz$ hence, $\xhz \in \sets$. 
\end{proof}

Now we are ready to proof Lemma \ref{polyhedron}. By Lemma \ref{convexity-lemma}, we know  $\sets$ is convex. We show that we can find a finite set of points in $\sets$, such that every point in $\sets$ can be written as a convex combination of these points. Note that, $\pureset = \{\pureset^1,\pureset^2,\ldots,\pureset^{|\pureset|}\}$ is the set of all pure strategies of player $A$ and $\vs = \{\vv^1,\vv^2,\ldots,\vv^{|\pureset|}\}$ is the set of points where $ \vv^i =\functiona(\pureset^i) $. Note that, $\vs$ and $\pureset$ are finite sets. Every strategy of player $A$ is either a pure strategy or a mixed strategy and can be written as a convex combination of the pure strategies. Therefore, according to Lemma \ref{linear-property} every point in $\sets$ is either in $\vs$ or can be written as a convex combination of points in $\vs$. Hence, $\sets$ forms a convex polyhedron in $\Rna$.

Next we show that the number of vertices and facets of $\sets$ is $\cpx{2^{\poly(n)}}$.
Let $\xh$ be a vertex of polyhedron $\sets$ and $\x$ be a strategy of player $A$ such that $\functiona(\x) = \xh$. Since $\xh$ is a vertex of $\sets$, it cannot be written as a convex combination of other vertices of $\sets$. If $\x$ is a mixed strategy, it can be written as a convex combination of other strategies and according to Lemma \ref{linear-property}, $\functiona(\x)$ can be written as  convex combination of other points of $\sets$. This implies that either $\x$ is a pure strategy or there exists a pure strategy $\xp$ such that $\functiona(\xp) = \xh$. Therefore, the number of vertices of $\sets$ is no more than the number of pure strategies of  player $A$. Since each pure strategy of player $A$ is a partition of $a$ units into $k$ battlefields the number of pure strategies is no more than $(a+1)^k$. Thus, the number of vertices of $\sets$ is at most ${\na}^{\na}$.

Let $d \leq \na$ be the dimension of $\sets$. Since every facet of $\sets$ can be uniquely determined by $d$ vertices of $\sets$, the number of facets is no more than $$\binom{|\vs|}{d} \leq (|\vs|)^d \leq ({\na}^{\na})^d \leq \na^{(\na^2)} .$$ 
\end{proofof}

\begin{proofof}{Lemma \ref{lem:poly_constraint}}
Note that the dimension of $\sets$ is not necessarily $\na$. Let $d$ be the dimension of $\sets$ and $\spt$ be the affine hull of $\sets$. Since $\spt$ is a $d$-dimensional subspace of $\Rna$, it can be represented as the intersection of $\na-d$ orthogonal hyperplanes. Let $L$ be a set of such hyperplanes.
Let $C$ be the set of all hyperplanes perpendicular to hyperplanes in $L$ which contain a facet of $\sets$.
We need Lemma \ref{li0}, Lemma \ref{li1}, Lemma \ref{li4}, and Lemma \ref{totlem} for proving Lemma \ref{lem:poly_constraint}.

\begin{lemma}\label{li0}
There exists exactly one hyperplane perpendicular to all hyperplanes in $L$ which contains all points of a $(d-1)$-dimensional subspace of $\spt$.
\end{lemma}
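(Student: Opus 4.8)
The plan is to reduce the statement to elementary linear algebra about orthogonal complements inside $\Rna$. First I would fix notation: let $W$ denote the given $(d-1)$-dimensional (affine) subspace of $\spt$, let $\vec{W}$ be its direction space (the linear subspace parallel to $W$, of dimension $d-1$), and let $N \subseteq \Rna$ be the linear span of the normal vectors of the $\na - d$ hyperplanes in $L$. Since the hyperplanes in $L$ are mutually orthogonal and $\spt$ is their intersection, $N$ has dimension $\na - d$ and the direction space of $\spt$ is exactly $N^{\perp}$, which has dimension $d$. Note that $\vec{W} \subseteq N^{\perp}$ because $W \subseteq \spt$.

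Next I would translate the two requirements on a candidate hyperplane $H \subseteq \Rna$ into conditions on its normal vector $n_H$. A hyperplane $H$ is perpendicular to every hyperplane in $L$ exactly when $n_H$ is orthogonal to each of their normals, i.e. when $n_H \in N^{\perp}$. A hyperplane $H$ contains $W$ exactly when (it passes through some point of $W$ and) its direction space $n_H^{\perp}$ contains $\vec{W}$, i.e. when $n_H \perp \vec{W}$. Combining these, the normal of any admissible $H$ must lie in $N^{\perp} \cap \vec{W}^{\perp}$. Since $\vec{W}$ is a codimension-one subspace of the $d$-dimensional space $N^{\perp}$, the orthogonal complement of $\vec{W}$ relative to $N^{\perp}$ — which is precisely $N^{\perp} \cap \vec{W}^{\perp}$ — is one-dimensional; I would fix a nonzero vector $n$ spanning it. (Here the hypothesis that a $(d-1)$-dimensional subspace of $\spt$ exists guarantees $d \geq 1$, so this complement is genuinely one-dimensional and $n \neq 0$.)

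Finally I would assemble $H$ and verify uniqueness. Define $H$ to be the hyperplane through an arbitrary point $p \in W$ with normal $n$. This is independent of the choice of $p$ because $\vec{W} \subseteq n^{\perp}$, and for the same reason $H \supseteq W$; moreover $H$ is perpendicular to all of $L$ since $n \in N^{\perp}$. Conversely, any hyperplane satisfying both conditions has normal proportional to $n$ (by the computation above) and passes through a point of $W$, hence coincides with $H$. I do not anticipate a real obstacle: the only thing requiring care is the dimension bookkeeping ($\dim N^{\perp} = d$ and $\dim(N^{\perp} \cap \vec{W}^{\perp}) = 1$) together with ruling out the degenerate low-dimensional case, which is excluded by the existence of the $(d-1)$-dimensional subspace in the hypothesis.
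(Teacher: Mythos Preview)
Your proposal is correct and follows essentially the same approach as the paper: both arguments observe that the normal of the sought hyperplane must be orthogonal to the $\na - d$ normals of the hyperplanes in $L$ and to the $d-1$ direction vectors of the given subspace, leaving a one-dimensional choice for the normal direction (with the offset then fixed by requiring the hyperplane to pass through a point of $W$). Your version is more carefully phrased in terms of direction spaces and orthogonal complements, whereas the paper simply counts $|N \cup O| = \na - 1$ pairwise orthogonal vectors, but the underlying dimension count is identical.
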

\begin{proof}
Let $O$ be the orthogonal basis of the $(d-1)$-dimensional subspace and $N$ be the set of normal vectors of hyperplanes in $L$. Since every vector in $O$ is in $\spt$, all vectors in $O$ are orthogonal to all vectors in $N$. Therefore, the desired hyperplane should be orthogonal to all vectors $N\cup O$. Since $|N \cup O| = \na-1$ and all vectors in $N \cup O$ are pairwise orthogonal, there exists exactly one hyperplane containing the subspace and perpendicular to all hyperplanes in $L$.
\end{proof}

\begin{lemma}\label{li1}
There exists exactly one hyperplane perpendicular to all hyperplanes of $L$ which contain all points of some facet $f$ of $\sets$.
\end{lemma}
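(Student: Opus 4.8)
The plan is to reduce this statement directly to Lemma \ref{li0}. A facet $f$ of $\sets$ is, by definition, a $(d-1)$-dimensional face of the polyhedron, so its affine hull $\operatorname{aff}(f)$ is a $(d-1)$-dimensional affine subspace. Since $f \subseteq \sets \subseteq \spt$ and $\spt$ is affine, $\operatorname{aff}(f) \subseteq \spt$; thus $\operatorname{aff}(f)$ is exactly the kind of object to which Lemma \ref{li0} applies (after translating so that it passes through the origin, if one prefers to work with a linear subspace).

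First I would invoke Lemma \ref{li0} with the $(d-1)$-dimensional subspace $\operatorname{aff}(f)$: it yields exactly one hyperplane $H$ that is perpendicular to every hyperplane in $L$ and contains all points of $\operatorname{aff}(f)$, and hence in particular all points of $f$. This establishes existence.

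For uniqueness, I would argue that any hyperplane which is perpendicular to all hyperplanes of $L$ and which contains $f$ must in fact contain $\operatorname{aff}(f)$: a hyperplane is an affine set, so if it contains all points of $f$ then it contains every affine combination of them, i.e., it contains $\operatorname{aff}(f)$. Consequently, the collection of hyperplanes perpendicular to $L$ and containing $f$ coincides with the collection of hyperplanes perpendicular to $L$ and containing $\operatorname{aff}(f)$, which Lemma \ref{li0} tells us is a singleton. This finishes the proof.

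I do not expect a genuine obstacle here; the only point that needs a little care is the affine-versus-linear distinction — Lemma \ref{li0} is phrased for a $(d-1)$-dimensional subspace of $\spt$, so one should note that the affine hull of a facet has exactly dimension $d-1$ (it cannot be smaller, since a facet is $(d-1)$-dimensional) and is contained in the affine hull $\spt$ of $\sets$, so that the hypotheses of Lemma \ref{li0} are indeed met.
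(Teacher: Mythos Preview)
Your proposal is correct and follows essentially the same approach as the paper's own proof: both reduce the statement to Lemma~\ref{li0} by passing to the affine hull of the facet $f$, noting that a facet has dimension $d-1$ and that a hyperplane contains $f$ if and only if it contains $\operatorname{aff}(f)$. Your write-up is in fact slightly more careful than the paper's in explaining why $\operatorname{aff}(f)\subseteq\spt$ and in making the uniqueness direction explicit.
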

\begin{proof}
Since $f$ is a facet of $\sets$, its dimension is $d-1$. 
 Therefore, its affine hull is a $(d-1)$-dimensional subspace of $\spt$. Note that, a hyperplane contains $f$ if and only if it contains affine hull of $f$. Lemma \ref{li0} states that there exists exactly one hyperplane perpendicular to all hyperplanes of $L$ that contains the affine hull of $f$. Hence, there exists a unique hyperplane containing $f$ which is perpendicular to all hyperplanes of $L$.
\end{proof}


\begin{lemma}\label{li4}
For every point $\xh \in \spt$ which is not in $\sets$, there exists a hyperplane perpendicular to all hyperplanes in $L$, which contains a facet of $\sets$ and separates $\xh$ from $\sets$.
\end{lemma}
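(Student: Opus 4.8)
The plan is to reduce the statement to the textbook fact that a full-dimensional polytope equals the intersection of the closed halfspaces supported by its facets, but to apply that fact \emph{inside} the affine hull $\spt$ rather than in all of $\Rna$, and then to lift the relevant facet-hyperplane back to $\Rna$ using Lemma~\ref{li1}. Recall that $\sets$ is a polyhedron by Lemma~\ref{polyhedron}, so it has only finitely many facets, and this intersection-of-halfspaces description is legitimate.

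First I would work inside $\spt$. Let $d$ be the dimension of $\sets$; then, viewed as a subset of the $d$-dimensional affine space $\spt$, $\sets$ is a full-dimensional polytope. Hence $\sets$ is exactly the intersection, taken within $\spt$, of the closed halfspaces of $\spt$ bounded by the affine hulls $\operatorname{aff}(f)$ of its facets $f$. Consequently, for any $\xh \in \spt \setminus \sets$ there is a facet $f$ of $\sets$ such that $\xh$ lies strictly on the side of $\operatorname{aff}(f)$ (within $\spt$) opposite to $\sets$. Fix such an $f$.

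Next I would lift $\operatorname{aff}(f)$ to $\Rna$: by Lemma~\ref{li1} there is a (unique) hyperplane $h = \{x \in \Rna : n\cdot x = c\}$ that is perpendicular to every hyperplane in $L$ and contains $f$. Since the $\na-d$ orthogonal normals of the hyperplanes in $L$ span the orthogonal complement $V^\perp$ of the linear direction space $V$ of $\spt$ (because $\bigcap L = \spt$), perpendicularity of $h$ to all of $L$ forces $n \in V$, and $n \neq 0$. Therefore the affine function $x \mapsto n\cdot x - c$ is non-constant on $\spt$, so $h \cap \spt$ is a $(d-1)$-dimensional affine subspace of $\spt$; it contains $\operatorname{aff}(f)$, which also has dimension $d-1$, so $h \cap \spt = \operatorname{aff}(f)$, and the two open halfspaces of $h$ meet $\spt$ precisely in the two open sides of $\operatorname{aff}(f)$ within $\spt$. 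Since $f$ is a facet, $\sets$ lies in one closed side of $\operatorname{aff}(f)$, hence in one closed halfspace of $h$; by the choice of $f$, $\xh$ lies in the opposite open side, hence in the opposite open halfspace of $h$. Thus $h$ is perpendicular to all hyperplanes in $L$, contains the facet $f$ of $\sets$, and separates $\xh$ from $\sets$, which is exactly the claim.

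The only real obstacle is the bookkeeping in the third step: one must be careful that "perpendicular to all hyperplanes in $L$" translates into "the normal of $h$ points along the direction space of $\spt$", which is what makes the sign of $n\cdot x - c$ on $\spt$ faithfully record on which side of $\operatorname{aff}(f)$ a point of $\spt$ lies. Once that is nailed down, the separation claim is just the halfspace description of a polytope applied within $\spt$, together with the dimension count identifying $h\cap\spt$ with $\operatorname{aff}(f)$.
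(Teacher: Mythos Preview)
Your proof is correct. Both you and the paper reduce to finding a facet $f$ of $\sets$ that separates $\xh$ from $\sets$ within $\spt$, and then invoke Lemma~\ref{li1} to lift $\operatorname{aff}(f)$ to a hyperplane in $\Rna$ perpendicular to all hyperplanes of $L$. The paper locates $f$ by a direct geometric argument---drawing a segment from $\xh$ to a point of $\sets$ and taking a facet that the segment crosses---whereas you use the halfspace description of a full-dimensional polytope inside $\spt$. Your argument is also more careful about the lifting step: you explicitly verify that the normal of the lifted hyperplane $h$ lies in the direction space of $\spt$, so that $h\cap\spt=\operatorname{aff}(f)$ and separation within $\spt$ coincides with separation by $h$ in $\Rna$; the paper asserts this separation without spelling out the dimension count. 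The two routes are equally valid; yours trades a short geometric picture for a cleaner invocation of standard polytope facts and more explicit bookkeeping.
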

\begin{proof}
Consider a segment between $\xh$ and one point of $\sets$. Since one endpoint of the segment is in $\sets$ and the other one is not, it has intersection with at least one facet of $\sets$, namely $f$. 
 By Lemma \ref{li1}, there exists one hyperplane which contains $f$ and is perpendicular to hyperplanes in $L$. This hyperplane has intersection with the segment and separates $\xh$ from $\sets$.
\end{proof}

\begin{lemma}
\label{totlem}
Point $\xh \in \Rna$ is in $\sets$ if and only if all hyperplanes in $L$ contain $\xh$ and no hyperplane in $C$ separates $\xh$ from $\sets$.
\end{lemma}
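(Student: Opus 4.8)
The plan is to establish the biconditional directly, in both directions, using essentially only the defining properties of $L$ and $C$ together with the already-proved Lemma \ref{li4} (which in turn rests on Lemmas \ref{li0}--\ref{li1}). Recall that $L$ is chosen so that $\bigcap_{H\in L}H=\spt$, and that every hyperplane of $C$ contains a facet of $\sets$ and is therefore a supporting hyperplane of $\sets$.

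First I would handle the easy ("only if") direction. Assume $\xh\in\sets$. Since $\sets\subseteq\spt=\bigcap_{H\in L}H$, the point $\xh$ lies on every hyperplane of $L$. Next, fix any $H\in C$. By construction $H$ supports $\sets$ along a facet, so $\sets$ is contained in one of the two closed halfspaces bounded by $H$; because $\xh\in\sets$, the point $\xh$ lies in that same closed halfspace and hence cannot be strictly separated from $\sets$ by $H$. As $H\in C$ was arbitrary, no hyperplane of $C$ separates $\xh$ from $\sets$.

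For the ("if") direction I would assume that every hyperplane of $L$ contains $\xh$ and that no hyperplane of $C$ separates $\xh$ from $\sets$, and argue by contradiction. The first hypothesis immediately gives $\xh\in\bigcap_{H\in L}H=\spt$. Suppose $\xh\notin\sets$. Then $\xh$ is a point of $\spt$ that does not lie in $\sets$, so Lemma \ref{li4} produces a hyperplane that is perpendicular to all hyperplanes of $L$, contains a facet of $\sets$, and separates $\xh$ from $\sets$ --- that is, an element of $C$ separating $\xh$ from $\sets$, contradicting the second hypothesis. Therefore $\xh\in\sets$, completing the proof.

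The step that deserves the most care is not a hard argument but a definitional one: I must make sure the word ``separates'' is interpreted consistently with its use in Lemma \ref{li4} --- namely, a hyperplane separates $\xh$ from $\sets$ when $\xh$ lies strictly outside the closed halfspace containing $\sets$, so that a supporting hyperplane through a facet (which merely touches $\sets$) does not separate a point of $\sets$. Once this convention is pinned down, the assertion ``a hyperplane through a facet of $\sets$ puts all of $\sets$ on one closed side'' is immediate from the convexity of $\sets$ and the fact that a facet is a proper face, and the whole lemma becomes just the assembly of Lemmas \ref{li0}--\ref{li4} carried out above.
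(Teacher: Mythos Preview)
Your proposal is correct and follows essentially the same route as the paper's own proof: the forward direction uses $\sets\subseteq\spt=\bigcap_{H\in L}H$ together with the fact that facet-containing hyperplanes support $\sets$, and the converse is obtained (via contrapositive/contradiction) by splitting into the cases $\xh\notin\spt$ and $\xh\in\spt\setminus\sets$, the latter handled by Lemma~\ref{li4}. Your write-up is in fact a bit more careful than the paper's (which dismisses the ``no $H\in C$ separates'' part as obvious), and your remark about fixing the meaning of ``separates'' is well taken.
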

\begin{proof}
Since $\sets$ is in the intersection of all hyperplanes in $L$, if $\xh \in \sets$ it is also in all hyperplanes in $L$ and obviously no hyperplane in $C$ separates $\xh$ from $\sets$. Now suppose $\xh \notin \sets$, if $\xh \notin \spt$, then $\xh$ is not in all hyperplanes of $L$, otherwise by Lemma \ref{li4} there exists a hyperplane in $C$ that separates $\xh$ from $\sets$. 
\end{proof}

Now we are ready to prove Lemma \ref{lem:poly_constraint}. By Lemma \ref{totlem} we conclude that set of hyperplanes in $L$ and $C$ are sufficient to formulate $\sets$.
We know $|L|$ is at most $\na$, and by Lemma \ref{li1} we can find out that $|C|$ is equal to the number of facets of $\sets$. Moreover Lemma \ref{polyhedron} states that $|C|$ is $\cpx{2^{\poly(n)}}$ which means $|C| + |L| \in \cpx{2^{\poly(n)}}$. 
\end{proofof}

\section{Approximating the payoff of the game}
\label{sec:approx}

We can observe that in some dueling games the separation problem cannot be solved in polynomial time. However, having a fixed error threshold, it is possible to approximately solve the separation problem in polynomial time. In this section we present a technique for approximating the payoff of the dueling games via providing an approximate solution for the separation problem.

We say a (not necessarily feasible) point $\xh$ is an $\epsilon$-solution of the LP in a bilinear dueling game where there exists a feasible strategy point $\xh'$ such that $|\xh-\xh'| \leq \epsilon$ and the minimum payoff of strategy $\xh$ differs by at most $\epsilon$ from that of the MinMax strategies. 
Remark that, in the separation problem we are given a linear constraint $\alpha_0,\alpha_1,\ldots$ and are to find out whether there exists a strategy point for a player which has the property $\alpha_0+\sum_i\alpha_i \hat{x}_i \geq 0$. We say a separation oracle solves this problem with approximation factor $\epsilon$ if it always finds such a point when there exists a violating point having a distance of at least $\epsilon$ from the hyperplane. Note that, the approximate oracle may not report a correct answer when the distance of all desired points to the hyperplane is less than $\epsilon$. We say a dueling game is \textit{polynomially $\epsilon$-separable} if we can approximately solve the separation problem in polynomial time for every $\epsilon > 0$. The following theorem provides a strong tool for approximating the payoff of the game in an NE for a broad set of dueling games.

\begin{theorem}\label{apptheorem}
Let $B$ be a bilinear dueling game ranging over $[0,1]^{\na}\times[0,1]^{\nb}$ (i.e. $\Sa \subset [0,1]^{\na}$ and $\Sb \subset [0,1]^{\nb}$) with payoff matrix $M$ and $n = \max\{\na,\nb\}$. If we can solve the $\epsilon$-separation problem in time $f(\epsilon,n)$ then we can find an $\epsilon$-solution of the LP in time $\cpx{\poly(n)\cdot f(\epsilon/\beta,n)}$, where $\beta = \max\{1,\sum_{i,j}|M_{i,j}|\}$.
\end{theorem}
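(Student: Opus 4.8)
The plan is to retrace the reduction behind Theorem~\ref{thm:bi-sep} (which mirrors Steps~2 and~3 in the proof of Theorem~\ref{thm:blotto}), but to replace every exact separation/optimization oracle by an \emph{approximate} one assembled from the given $\epsilon$-separation oracle, and then to track how the error propagates---the loss factor turning out to be essentially $\beta$. Consider player $A$'s MinMax linear program in the transferred space: maximize $U$ subject to the membership constraint $\xh\in\Sa$ and the payoff constraints $\xh^tM\yh\geq U$ for all $\yh\in\Ib$; since $\xh^tM\yh$ is linear in $\yh$ and $\Sb$ is the convex hull of $\Ib$, these are equivalent to $\xh^tM\yh\geq U$ for all $\yh\in\Sb$. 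Its optimum equals the value of the game, and a point $\xh$ that lies within $\epsilon$ of $\Sa$ and whose guaranteed payoff $\min_{\yh\in\Sb}\xh^tM\yh$ is within $\epsilon$ of this optimum is, by definition, an $\epsilon$-solution. Hence it suffices to solve this LP approximately with a weak Ellipsoid method, for which we need approximate separation oracles for the polytope $\Sa$ and for the payoff polytope $\{(\xh,U):\forall\yh\in\Sb,\ \xh^tM\yh\geq U\}$.

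First I would convert the $\delta$-separation oracle over pure strategies into a weak linear-optimization oracle over $\Sa$ (and symmetrically over $\Sb$). Given an objective $\gamma$, rescale so that $\|\gamma\|_2=1$; since pure-strategy points lie in $\{0,1\}^{\n}$ the quantity $\gamma\cdot\xh$ ranges over an interval of length $\cpx{\sqrt{n}}$, and a binary search on a threshold $\tau$---at each step invoking the $\delta$-separation oracle on the constraint $-\tau+\gamma\cdot\xh\geq 0$, whose signed distance to the associated hyperplane is precisely $\gamma\cdot\xh-\tau$---returns, after $\cpx{\log(n/\delta)}$ calls, a pure strategy whose objective value is within $\delta$ of $\max_{\xh\in\Sa}\gamma\cdot\xh$ (the maximum over $\Sa$ is attained at a vertex, i.e.\ a pure strategy). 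This is a weak optimization oracle for $\Sa$ with additive accuracy $\delta$, and likewise for $\Sb$.

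Next I would feed these weak optimization oracles through the (weak) equivalence of optimization and separation~\cite{GLS81} to obtain the separation oracles the Ellipsoid method needs. A weak separation oracle for $\Sa$ follows directly. For the payoff polytope, the most violated constraint at a candidate $(\xh,U)$ corresponds to a $\yh\in\Sb$ minimizing $\xh^tM\yh=c^t\yh$ with $c=M^t\xh$; this is where $\beta$ enters, since $\xh\in[0,1]^{\n}$ gives $\|c\|_2\le\|c\|_1\le\sum_{i,j}|M_{i,j}|\le\beta$, so minimizing $c^t\yh$ over $\Sb$ to additive error $\epsilon$ requires only accuracy $\delta=\epsilon/\beta$ in the underlying pure-strategy oracle (and the membership oracle for $\Sa$ can be run with the same or coarser accuracy, up to lower-order $\poly(n)$ factors absorbed into $\delta$). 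Plugging these oracles into the weak Ellipsoid method of~\cite{GLS81}, together with the usual bounding-box and perturbation preprocessing to handle that the payoff polytope is unbounded in the $U$ direction and that $\Sa$ need not be full-dimensional, yields a point $\xh$ within $\epsilon$ of $\Sa$ whose guaranteed payoff is within $\epsilon$ of the optimum---an $\epsilon$-solution. The running time is $\cpx{\poly(n)}$ Ellipsoid iterations, each triggering $\cpx{\poly(n)}$ calls (across the binary searches) to the $(\epsilon/\beta)$-separation oracle, giving the claimed bound $\cpx{\poly(n)\cdot f(\epsilon/\beta,n)}$.

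The main obstacle I anticipate is the careful error bookkeeping through the weak separation/optimization/Ellipsoid machinery of~\cite{GLS81}: one has to verify that the slack introduced by the approximate inner oracle, after the rescaling by $\beta$, survives each translation step---from pure-strategy separation, to weak optimization over $\Sa,\Sb$, to weak separation for the two polytopes, to the final weak-feasibility guarantee of Ellipsoid---without being amplified by more than a $\poly(n)$ factor that can be absorbed into the accuracy we request of the oracle, so that the returned point genuinely meets both requirements in the definition of an $\epsilon$-solution. A secondary, more routine subtlety is justifying that the minimization of the bilinear payoff over $\Sb$ may be restricted to $\Ib$ and that the effective Lipschitz constant of $\xh\mapsto\xh^tM\yh$ (and of $\yh\mapsto\xh^tM\yh$) over the unit box is at most $\beta$, which is exactly what pins the inner accuracy to $\epsilon/\beta$.
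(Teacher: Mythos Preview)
Your proposal is correct in spirit but takes a considerably more roundabout route than the paper. The paper's proof simply runs the Ellipsoid method on LP~\ref{Prog:3}, replacing each exact pure-strategy separation oracle by the given $(\epsilon/\beta)$-approximate one, and then argues directly: since the approximate membership check accepts the output $\xh$, every separating hyperplane must lie within $\epsilon/\beta$ of $\xh$, so there is a feasible $\xh'\in\Sa$ with $|\xh-\xh'|\le\epsilon/\beta\le\epsilon$; since the approximate oracle only relaxes the feasible region, every true MinMax strategy remains a candidate, so the guaranteed payoff found for $\xh$ is at least the game value; and the key Lipschitz estimate $|\xh^tM\yh-(\xh')^tM\yh|\le\beta\,|\xh-\xh'|<\epsilon$ (uniformly in $\yh$, using $\yh\in[0,1]^{\nb}$) bounds the payoff gap. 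Notice the paper deploys $\beta$ as a Lipschitz constant in the $\xh$ variable---it converts the $\epsilon/\beta$ membership error into an $\epsilon$ payoff error---whereas you deploy it in the $\yh$ variable, bounding $\|M^t\xh\|$ to scale the accuracy needed in the best-response minimization. Your detour through binary search (approximate separation $\to$ weak optimization) followed by the GLS weak equivalence (weak optimization $\to$ weak separation) is unnecessary here: the given oracle is already a separation-type oracle for the pure-strategy polytopes and can be plugged straight into the nested Ellipsoid of Lemma~\ref{lem:membership} and into the best-response check, exactly as in the exact case. Your route would work, but it introduces precisely the multi-stage error bookkeeping you flag as the main obstacle, which the paper's one-step argument sidesteps entirely.
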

\begin{proof}
Suppose we solve LP \ref{Prog:3} with Ellipsoid method using an $\epsilon/\beta$-separating oracle instead of the exact oracle and let $\xh$ be the solution we find. First we show that there exists at least one feasible strategy $\xh'$ such that $|\xh'-\xh| \leq \epsilon$. Next, we will compare the minimum payoff of $\xh'$ with that of the MinMax strategies.
Remark that, in order to determine whether or not $\xh$ is within the set of feasible strategy points we solve the following linear program (which is the same as LP \ref{Prog:separating_plane} from the proof of Lemma \ref{lem:membership}) and see if there is any violating constraint.
\begin{eqnarray}
\alpha_0 + \sum_j \alpha_{j} {\hat{x}}_{j} < 0& \\
\alpha_0 + \sum_j \alpha_{j} {\hat{v}}_{j} \geq 0 & \forall \mathbf{\hat{v}} \in \Ia \nonumber
\end{eqnarray}
Since the $\epsilon/\beta$-separating oracle does not report any violating constrains for $\xh$, either there is no violating constrain or the distance of $\xh$ from all violating hyperplanes is less than $\epsilon/\beta$. Therefore there exists a point $\xh'$ in the set of feasible solution such that $|\xh'-\xh| \leq \epsilon/\beta \leq \epsilon$.\\
Next, we compare the minimum payoff of strategy $\xh'$ with the minimum payoff of the MinMax strategies. From the previous argument we have $|\xh'-\xh| < \epsilon/\beta$. Since $\beta = \sum_{i,j} |M_{i,j}|$, we have $|(\xh M \yh) - (\xh' M \yh)| < |\xh'-\xh|.\beta < \epsilon$ for each strategy $\yh$ of the second player. Moreover, since every MinMax strategy is a potential solution of the linear program and $\xh$ is the solution which maximizes the objective function, the minimum possible payoff off $\xh$ is not less than the minimum possible payoff of the MinMax strategies. Thus, the difference between the minimum possible payoff of $\xh'$ and that of the MinMax strategies is at most $\epsilon$.\\
Since we're using Ellipsoid method, the number of times we call the $\epsilon$-separation oracle is $\poly(n)$, hence the running time of the algorithm is $\cpx{\poly(n)\cdot f(\epsilon/\beta,n)}$.
\end{proof}

Note that, Theorem \ref{apptheorem} states if we can find an approximate solution of the separation problem in polynomial time, then we can find an approximate solution of the LP in polynomial time as well. Since an approximate solution  may not necessarily be a feasible strategy point, it cannot be used in order to characterize the properties of the MinMax strategies. However, we can approximate the payoff of the players in an NE  by computing the payoffs for the $\epsilon$-solution of the LP. 

\end{document}